\title{Lower Bounds on the Integraliy Ratio of the Subtour LP for the Traveling Salesman Problem}
\author{Xianghui Zhong\\
\small University of Bonn\\
            }
\date{\today} 
\newtheorem{thm}{Theorem}[section]
\theoremstyle{plain}
\newtheorem{lemma}[thm]{Lemma}
\newtheorem{theorem}[thm]{Theorem}
\newtheorem{corollary}[thm]{Corollary}
\newtheorem{conjecture}[thm]{Conjecture}
\theoremstyle{remark}
\newtheorem{remark}[thm]{Remark}
\newtheorem{observation}[thm]{Observation}
\theoremstyle{definition}
\newtheorem{definition}[thm]{Definition}
\newcommand*{\R}{\ensuremath{\mathbb{R}}}
\newcommand*{\N}{\ensuremath{\mathbb{N}}}
\DeclareMathOperator{\dist}{dist}
\DeclareMathOperator{\sgn}{sgn}
\DeclareMathOperator{\ratio}{ratio}
\DeclareMathOperator{\diffInner}{diffInner}
\DeclareMathOperator{\diffOuter}{diffOuter}
\begin{document}

\maketitle
\begin{abstract}
In this paper we investigate instances with high integrality ratio of the subtour LP. We develop a procedure to generate families of \textsc{Euclidean TSP} instances whose integrality ratios converge to $\frac{4}{3}$ and may have a different structure than the instances currently known from the literature. Moreover, we compute the instances maximizing the integrality ratio for \textsc{Rectilinear TSP} with up to 10 vertices. Based on these instances we give families of instances whose integrality ratio converge to $\frac{4}{3}$ for \textsc{Rectilinear}, \textsc{Multidimensional Rectilinear} and \textsc{Euclidean} TSP that have similar structures. We show that our instances for \textsc{Multidimensional Rectilinear TSP} and the known instances for \textsc{Metric TSP} maximize the integrality ratio under certain assumptions. We also investigate the concept of local optimality with respect to integrality ratio and develop several algorithms to find instances with high integrality ratio. Furthermore, we describe a family of instances that are hard to solve in practice. The currently fastest TSP solver \texttt{Concorde} needs more than two days to solve an instance from the family with 52 vertices.
\end{abstract}

{\small\textbf{keywords:} traveling salesman problem; rectilinear TSP; Euclidean TSP; subtour LP; integrality ratio; exact TSP Solver}

\section{Introduction}
The traveling salesman problem (TSP) is probably the best-known problem in discrete optimization. An instance consists of the pairwise distances of $n$ vertices and the task is to find a shortest Hamilton cycle, i.e.\ a tour visiting every vertex exactly once. The problem is known to be NP-hard \cite{garey1979computers}. A special case of the \textsc{TSP} is the \textsc{Metric TSP}. Here the distances satisfy the triangle inequality. This \textsc{TSP} variant is still NP-hard  \cite{TSPNPHARD}. 

A well studied special case of the \textsc{Metric TSP} is the \textsc{Euclidean TSP}. Here an instance consists 
of points in the Euclidean plane and distances are defined by the 2-norm. 
The \textsc{Euclidean TSP} is still NP-hard~\cite{euclideannpg, euclideannpp}.

Instead of using the 2-norm as distance function for \textsc{Euclidean TSP}, one could also consider the 1-norm. This special case is called \textsc{Rectilinear TSP}. We can further generalize this special case by considering instances where the vertices and be embedded into the $d$-dimensional space instead of the Euclidean plane to get the \textsc{Multidimensional Rectilinear TSP}.

The Traveling Salesman Problem can be formulated as an integer program. One of the most common linear relaxation of the TSP is the \emph{subtour LP} \cite{dantzig1954solution}. For a given TSP instance defined by a complete graph $K_n$ and the cost function $c$ the subtour LP is given by:
\begin{align}
\min \sum_{e\in E(K_n)} &c(e)x_e \nonumber\\
\sum_{e \in \delta(v)} x_e &=2 && \text{for all~} v \in V(K_n) \label{degree constraint}\\
\sum_{e \in E(\delta(X))}x_e &\geq 2 && \text{for all~} \emptyset\subset X \subset V(K_n) \label{subtour elimination constraint}.\\ 
0 \leq x_e&\leq 1 && \text{for all~} e\in E(K_n) \nonumber
\end{align}

The constraints (\ref{degree constraint}) are called the \emph{degree constraints} and the constraints (\ref{subtour elimination constraint}) are called the \emph{subtour elimination constraints}. Although this LP has an exponential number of constraints it can be solved in polynomial time by the ellipsoid method since the separation problem can be solved efficiently \cite{grotschel1981ellipsoid}. A solution to the subtour LP is also called a \emph{fractional tour}.

Let $OPT(I)$ and $OPT_{LP}(I)$ be the values of the optimal integral solution and optimal fractional solution of an instance $I$, then the \emph{integrality ratio} of $I$ is defined as $\frac{OPT(I)}{OPT_{LP}(I)}$. The \emph{integrality ratio} of the LP is the supremum of the ratio between the value of the optimal integral solution and that of the optimal fractional solution, i.e.\ $\sup_I \frac{OPT(I)}{OPT_{LP}(I)}$. Let $\lvert I \rvert$ be the number of vertices of the instance $I$. The \emph{integrality ratio} of instances with $n$ vertices is defined as $\sup_{\lvert I \rvert=n}\frac{OPT(I)}{OPT_{LP}(I)}$.

The exact integrality ratio of the subtour LP is still unknown. For the \textsc{Metric TSP} the currently best lower and upper bounds are $\frac{4}{3}$ \cite{williamson1990analysis} and $\frac{3}{2}$ \cite{wolsey1980heuristic}, respectively. For the Euclidean case the same upper bound applies and Hougardy gave a lower bound of $\frac{4}{3}$ \cite{Hougardy}. The belief is that the exact integrality ratio is $\frac{4}{3}$, this is also known as the \emph{$\frac{4}{3}$-Conjecture}. The conjecture was proven for instances whose optimal fractional solutions satisfy certain properties \cite{BOYD2011525}. 

Benoit and Boyd computed the exact integrality ratio for \textsc{Metric TSP} instances with a small fixed number of vertices \cite{benoit2008finding}. For $6\leq n \leq 10$ vertices they determined the instances maximizing the integrality ratio with computer assistance and discovered that these instances are unique and have certain structures. Based on these results they generalized the instances to arbitrary number of vertices and conjectured that they maximize the integrality ratio. Later, Boyd and Elliott-Magwood could further decrease the computation time significantly by exploiting more structure of the subtour polytope. The computational results could be extended to $n=12$ \cite{boyd2010structure}. 

Hougardy and Zhong introduced a family of \textsc{Euclidean TSP} instances called the \emph{tetrahedron instances} that has a different structure as the instances from \cite{benoit2008finding} with integrality ratio converging to $\frac{4}{3}$ \cite{Hougardy2020}. They also investigate the runtime of the currently fastest TSP solver \texttt{Concorde} \cite{concorde} to solve the tetrahedron instances in practice. More precisely, they compare the runtime of the tetrahedron instances with the family of instances proposed in \cite{Hougardy} and instances from the TSPLIB, a library of TSP instances \cite{reinelt1991tsplib}. It turned out that the tetrahedron instances are significantly harder to solve than the other instances in practice: \texttt{Corcorde} needs up to 1,000,000 more time to solve the instances compared to TSPLIB instances of similar size.

\noindent\textbf{New results.}
We describe a procedure to construct families of \textsc{Euclidean TSP} instances whose integrality ratios converge to $\frac{4}{3}$. These instances can have a different structure than the known instances from the literature.

We use the same approach as Benoit and Boyd to compute the exact integrality ratio for \textsc{Rectilinear TSP} with $6\leq n\leq 10$ vertices. Using the results of the computations we define the instances $I^2_6, \dots, I^2_{10}$.

\begin{theorem}
The instances $I^2_n$ maximize the integrality ratio for \textsc{Rectilinear TSP} for $n\leq 10$.
\end{theorem}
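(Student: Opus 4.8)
The statement is established by a finite computation that adapts the method of Benoit and Boyd \cite{benoit2008finding} to the rectilinear setting; I would organize the argument as follows.

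\textbf{Reduction to finitely many linear programs.} The integrality ratio of $n$-vertex \textsc{Rectilinear TSP} is $\sup OPT(I)/OPT_{LP}(I)$ over all instances $I$ given by $n$ points of the plane with the $\ell_1$ metric. After normalizing $OPT_{LP}(I)=1$, the task is to maximize $OPT(I)$. Since $OPT_{LP}(I)$ is the minimum of a linear function over the subtour polytope $S_n\subset\R^{E(K_n)}$, it is attained at a vertex of $S_n$, and $S_n$ has only finitely many vertices --- finitely many up to the action of the symmetric group on the labels. Hence it suffices, for each such vertex $x^*$ (up to symmetry), to determine the largest ratio attainable by an instance for which $x^*$ is an optimal fractional solution.

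\textbf{The linear program for a fixed optimal vertex.} Fix $x^*$. A cost function $c$ makes $x^*$ an optimal fractional solution with $OPT(I)\ge\gamma$ precisely when $c^\top x^*\le c^\top v$ for every vertex $v$ of $S_n$ and $\sum_{e\in H}c_e\ge\gamma$ for every Hamilton cycle $H$ of $K_n$. Maximizing $\gamma$ subject to these inequalities, to $c^\top x^*=1$, and to $c$ being a planar rectilinear metric yields the desired value. The only non-linear condition is the last one; I would eliminate it by writing $c_{ij}=|x_i-x_j|+|y_i-y_j|$ with coordinate variables and branching on the linear orders of the $x_i$ and of the $y_i$, since on each branch the absolute values resolve linearly and $c$ becomes a linear function of the coordinates. (Alternatively, relaxing $c$ to an arbitrary $\ell_1$ metric gives an upper bound, and one checks a posteriori that the optimum is realizable in the plane.) Ranging over all vertices $x^*$ and all order-branches produces finitely many linear programs whose maximum objective is the exact $n$-vertex integrality ratio, and an optimal cost function, scaled to rational coordinates, is the extremal instance.

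\textbf{Feasibility and exactness of the computation.} The numbers of vertices of $S_n$ and of Hamilton cycles grow rapidly, so each linear program is solved by iterative constraint generation: maintain a small family of vertex- and tour-constraints, solve the restricted LP, and separate --- by finding a cheaper Hamilton cycle (a \textsc{TSP} subproblem) or a cheaper vertex of $S_n$ --- until none is violated. Symmetry is exploited throughout to prune the vertices $x^*$ and the order-branches, as in \cite{benoit2008finding, boyd2010structure}. Carrying out the computations in exact rational arithmetic (or certifying numerical optima with dual solutions) makes the conclusion rigorous; for $6\le n\le10$ the maximum turns out to be attained, up to symmetry and scaling, by a single instance, namely $I^2_n$, while for $n\le5$ the subtour LP is integral and the claim is trivial.

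\textbf{Main obstacle.} The crux is computational: controlling the combinatorial explosion in the vertices of $S_n$, the Hamilton cycles, and --- specific to the rectilinear case --- the coordinate orderings, so that the whole family of linear programs is solvable up to $n=10$. This is what makes aggressive symmetry reduction and constraint generation indispensable; a secondary but essential point is certifying exactness, so that one may conclude that $I^2_n$ actually attains the maximum rather than merely approaches it.
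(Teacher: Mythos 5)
Your proposal is correct and follows essentially the same route as the paper, which adapts the Benoit--Boyd method by taking Boyd's list of subtour-polytope extreme points and, for each extreme point $x$, solving a program in the vertex coordinates that maximizes the minimum tour length subject to the cost of $x$ being at most $1$, then taking the maximum over all extreme points. The only differences are cosmetic: the paper's program is leaner in that it does not enforce LP-optimality of $x$ (capping its cost at $1$ already upper-bounds $OPT_{LP}$, which suffices for exactness), and instead of branching on coordinate orderings it adds box bounds ($0\leq u_x,u_y\leq 1$, $0\leq f\leq 2$) and solves the resulting nonconvex programs directly with \texttt{Gurobi}.
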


The instances $I^2_n$ show the same structure as the instances maximizing the integrality ratio in the metric case described in \cite{benoit2008finding}. Based on this we state the following conjecture.

\begin{conjecture} \label{structure conjecture}
The instances maximizing the integrality ratio among all instances with a fixed number of vertices have the following structure: An optimal fractional solution $x^*$ of the subtour LP satisfies $x^*(e)=\frac{1}{2}$ for all edges $e$ of two disjoint triangles and $x^*(e)=0$ or $x^*(e)=1$ for all other edges $e$.
\end{conjecture}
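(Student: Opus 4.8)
The plan is to follow the Benoit--Boyd scheme: turn ``compute the worst integrality ratio over all $n$-vertex rectilinear instances'' into a finite collection of linear programs indexed by the vertices of the subtour polytope $SEP(K_n)$, solve them in exact arithmetic, and check that the optimum is attained by (and only by) the cost function defining $I^2_n$. The small cases $n\le 5$ are trivial since $SEP(K_n)$ is then integral, so the content is the range $6\le n\le 10$.

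First I would record the reduction. If $I$ has cost function $c$, then $OPT_{LP}(I)=c\cdot x^{*}$ for some vertex $x^{*}$ of $SEP(K_n)$ (a minimizing vertex exists, the polytope being bounded), and after scaling $c$ so that $c\cdot x^{*}=1$ we get $OPT(I)/OPT_{LP}(I)=\min_{T}c(T)$, the minimum over Hamilton cycles $T$. Conversely any vertex $x^{*}$ together with a rectilinear $c$ normalized by $c\cdot x^{*}=1$ yields an instance of ratio at least $\min_{T}c(T)$. Hence the integrality ratio $r_n$ of $n$-vertex rectilinear instances is
\[
r_n \;=\; \max_{x^{*}}\ \sup\Bigl\{\ \min_{T} c(T)\ :\ c\cdot x^{*}=1,\ c\ \text{is a rectilinear cost function on } n \text{ points}\ \Bigr\},
\]
the outer maximum running over the finitely many vertices $x^{*}$ of $SEP(K_n)$, which one enumerates (up to the symmetries of $K_n$) as in the metric computation. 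For a fixed $x^{*}$ the inner problem is $\max t$ subject to $c(T)\ge t$ for every Hamilton cycle $T$, $c\cdot x^{*}=1$, and $c$ a rectilinear cost function; the number of tours, $(n-1)!/2$, is harmless for $n\le 10$.

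Second, I would deal with the one non-polyhedral ingredient: ``$c$ is the $\ell_1$-distance matrix of $n$ points in the plane.'' Such a $c$ is a sum of two line metrics, and a line metric on a prescribed order $\sigma$ of the coordinates is exactly a nonnegative combination of the $n-1$ prefix-cut semimetrics $\delta_{\{\sigma(1),\dots,\sigma(k)\}}$. So for each pair of coordinate orders the inner problem is an ordinary LP in the $2(n-1)$ gap variables and $t$, and the inner value for $x^{*}$ is the maximum of these LP values over order pairs; symmetry of $x^{*}$ and of the pair of orders is used to keep the number of LPs feasible. (Alternatively, one may first solve the relaxed problem with $c$ ranging over all metrics -- literally the Benoit--Boyd metric computation -- obtaining an upper bound on $r_n$, and then observe that $I^2_n$ is rectilinear and attains it; this closes the gap whenever the relaxation is tight, which is checked a posteriori.) Running this for $6\le n\le 10$ in exact rational arithmetic yields $r_n$ together with a maximizing $c$; identifying this $c$, up to symmetry and scaling, with the cost function of $I^2_n$ (and noting it is the unique maximizer, as in the metric case) proves the theorem.

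The main obstacle is computational rather than conceptual. The number of vertices of $SEP(K_n)$ grows quickly, and the rectilinear constraint forces an additional loop over coordinate-order pairs, so that without aggressive exploitation of symmetry and a careful LP formulation the computation becomes infeasible already near $n=10$; a secondary but essential point is that every LP must be solved exactly (certified by rational primal--dual pairs), since the claim is an equality of rationals together with a uniqueness statement, which floating-point solutions cannot establish.
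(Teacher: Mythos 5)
The statement you were asked to prove is Conjecture~\ref{structure conjecture}, and the first thing to note is that the paper does not prove it: it is stated as a conjecture, supported only by the computational evidence for $6\le n\le 10$ in the rectilinear case and the Benoit--Boyd / Boyd--Elliott-Magwood computations for $6\le n\le 12$ in the metric case, and it is then used as a \emph{hypothesis} in the later theorems on \textsc{Metric} and \textsc{Multidimensional Rectilinear TSP}. Your proposal is, in outline, the Benoit--Boyd enumeration scheme (vertices of the subtour polytope, one exact LP per vertex, with the rectilinear constraint handled via line metrics or via the metric relaxation checked a posteriori). That scheme is essentially how the paper establishes the \emph{different} statement that the instances $I^2_n$ maximize the integrality ratio for \textsc{Rectilinear TSP} with $n\le 10$; it does not and cannot establish the conjecture, whose entire content is that the observed structure (an optimal fractional solution with value $\tfrac{1}{2}$ on the edges of two disjoint triangles and $0/1$ elsewhere, i.e.\ the $x_{i,j,k}$ pattern) persists for \emph{every} fixed number of vertices, not just the finitely many values reachable by enumeration.

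Concretely, the gap is the quantifier over $n$. The vertex lists of the subtour polytope are only available, and the resulting programs only tractable, up to roughly $n=10$--$12$ (the paper reports that a single program for $n=11$ already did not finish in a month), so a finite computation verifies the structure for those $n$ only; it gives no argument that an extremal instance on, say, $50$ vertices could not have an optimal fractional solution with a different support structure. In addition, the conjecture is not restricted to rectilinear instances --- the paper invokes it for metric and multidimensional rectilinear instances --- so even a complete rectilinear computation for all small $n$ would not cover the claim. What your proposal would prove, modulo exact arithmetic and careful treatment of the rectilinear cost structure, is the paper's $n\le 10$ rectilinear theorem; for Conjecture~\ref{structure conjecture} itself a genuinely structural argument valid for arbitrary $n$ would be needed, and neither the paper nor your proposal supplies one.
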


We analyze the structure of $I^2_n$ and generalize the family of instances to arbitrary numbers of vertices. We compute the integrality ratio of the family and show that the integrality ratios of the family converge to $\frac{4}{3}$ as $n\to \infty$. 

Moreover, we also investigate the integrality ratio for \textsc{Metric TSP} and \textsc{Multidimensional Rectilinear TSP}.

\begin{theorem}
Assuming Conjecture \ref{structure conjecture} the \textsc{Metric TSP} instances given in \cite{benoit2008finding} maximize the integrality ratio.
\end{theorem}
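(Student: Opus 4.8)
The plan is to invoke Conjecture~\ref{structure conjecture} to reduce the maximization over all $n$-vertex \textsc{Metric TSP} instances to a finite optimization, solve it, and observe that its optimum is attained by the instance of \cite{benoit2008finding}. Let $I$ be an $n$-vertex \textsc{Metric TSP} instance of maximum integrality ratio; by Conjecture~\ref{structure conjecture} it has an optimal fractional solution $x^*$ with $x^*(e)=\tfrac12$ on the six edges of two vertex-disjoint triangles $T_1,T_2$ and $x^*(e)\in\{0,1\}$ on every other edge. The first step is to pin down the combinatorial type of $x^*$. Each vertex of $T_1\cup T_2$ already has $x^*$-degree $1$ from its two half-edges, so by~\eqref{degree constraint} it is incident to exactly one edge with $x^*=1$, while every other vertex is incident to exactly two such edges; hence the edges with $x^*=1$ form a disjoint union of paths with endpoints in $V(T_1)\cup V(T_2)$, plus possibly cycles through the remaining vertices. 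A cycle $C$ of such edges would violate~\eqref{subtour elimination constraint} since $x^*(\delta(V(C)))=0$, so there are exactly three vertex-disjoint paths $P_1,P_2,P_3$, with the six triangle vertices as endpoints and with interiors partitioning the remaining $n-6$ vertices. A short case analysis on how the paths match the triangle vertices, applying~\eqref{subtour elimination constraint} to the set $V(T_1)$ together with the interior of a path joining two vertices of $T_1$ (and symmetrically for $T_2$), eliminates every matching except the one in which all three paths join $T_1$ to $T_2$; after relabeling, $P_i$ runs from $a_i\in V(T_1)$ to $b_i\in V(T_2)$, with $k_i\ge 0$ interior vertices and $k_1+k_2+k_3=n-6$.

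The second step is to normalize the metric. Enlarging any distance outside the support of $x^*$ leaves $c\cdot x^*$ unchanged and can only increase $OPT(I)$ and $OPT_{LP}(I)$; since $OPT_{LP}(I)\le c\cdot x^*$ with equality by optimality of $x^*$, a squeezing argument shows we may assume $c$ is the shortest-path metric of the \emph{skeleton} $S$ consisting of the six triangle edges and the edges of $P_1,P_2,P_3$, and that $x^*$ is still optimal for this metric. Its parameters are then the triangle side lengths, the path lengths $p_1,p_2,p_3\ge 0$, and the positions of the interior vertices along the paths, constrained by the triangle inequalities and by the condition (polyhedral, via LP duality) that $x^*$ be an optimal fractional solution of the skeleton metric. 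In this parametrization
\[
OPT_{LP}(I)=c\cdot x^*=\tfrac12\sum_{e\in E(T_1)\cup E(T_2)} c(e)+p_1+p_2+p_3 ,
\]
while $OPT(I)$ is the minimum, over the finitely many Hamilton-cycle patterns of the six-vertex instance obtained by contracting each path $P_i$ to an edge $a_ib_i$ (which is forced to be used whenever $k_i\ge 1$, since the interior of $P_i$ must then be covered by a monotone subpath of the tour), of an explicit affine function of the parameters.

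The third step is to maximize $OPT(I)/OPT_{LP}(I)$ over this region. After fixing a Hamilton-cycle pattern the objective is a quotient of affine functions of the lengths, hence --- on the parameter polyhedron, which is bounded once we normalize $OPT_{LP}=1$ --- maximized at a vertex; the $S_3$-symmetry among the three path/triangle-vertex labels and the symmetry swapping $T_1$ and $T_2$ reduce this to a few explicit low-dimensional fractional programs, each solved in closed form with the answer depending on $n$ through $k_1,k_2,k_3$. Taking the maximum yields precisely the integrality ratio of the $n$-vertex instance of \cite{benoit2008finding}, which realizes exactly this skeleton; hence every $n$-vertex instance with the structure of Conjecture~\ref{structure conjecture} --- in particular $I$ --- has integrality ratio at most that of the instance of \cite{benoit2008finding}, and since $I$ was chosen maximal, equality holds and the instance of \cite{benoit2008finding} maximizes the integrality ratio. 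The main obstacle is this third step together with the exact evaluation of $OPT(I)$: one must show that no Hamilton cycle beats the ``traverse each path monotonically'' patterns (the cost of covering a path's interior can couple with the rest of the tour in different ways depending on how the interior vertices are distributed along it), determine precisely which skeletons admit $x^*$ as an \emph{optimal} rather than merely feasible fractional solution (this is where an explicit dual certificate is needed, and where the dependence on $n$ genuinely enters), and then carry out the resulting optimization honestly. By contrast, the structural reduction of the first step and the metric normalization of the second are comparatively routine.
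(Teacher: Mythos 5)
Your steps 1 and 2 are sound (and in fact more explicit than the paper, which tacitly identifies the structure in Conjecture~\ref{structure conjecture} with a fractional tour isomorphic to $x_{i,j,k}$): the degree and subtour constraints do force three vertex-disjoint paths joining the two half-triangles, and passing to the shortest-path metric of the skeleton only increases $OPT$ while keeping $OPT_{LP}=c\cdot x^*$ by the squeezing argument you give. But the theorem is not proved, because your step 3 --- which is where the entire content lies --- is only a plan, and you say so yourself. To conclude you must show that every metric instance whose optimal fractional tour is isomorphic to $x_{i,j,k}$ has ratio at most $1+\frac{1}{3+2(\frac{1}{i+1}+\frac{1}{j+1}+\frac{1}{k+1})}$ and that the worst $i,j,k$ reproduce the value of \cite{benoit2008finding}. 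Your route to this requires (a) a proof that an optimal tour covers each path interior by a single monotone subpath, or more generally an exact affine formula for $OPT$ per tour pattern --- this is false to assume and nontrivial to prove in a general (even skeleton shortest-path) metric; (b) a dual certificate describing exactly which skeleton length vectors keep $x^*$ LP-optimal, i.e.\ the actual parameter polyhedron; and (c) the max--min fractional optimization itself, which is not a per-pattern ``vertex of a polyhedron'' argument since $OPT$ is a minimum over patterns. None of these is carried out, and the final claim that the optimum ``yields precisely the integrality ratio of the $n$-vertex instance of \cite{benoit2008finding}'' is asserted, not derived. So there is a genuine gap, not merely a missing routine verification.

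For comparison, the paper avoids all three difficulties with a single averaging argument (Lemma~\ref{optimal gap metric}): it exhibits explicit nonnegative multipliers, summing to $1$, on the pseudo-tours in $\mathfrak{T}$ such that the weighted sum of their incidence vectors equals $\bigl(1+\frac{1}{3+2(\frac{1}{i+1}+\frac{1}{j+1}+\frac{1}{k+1})}\bigr)x_{i,j,k}$; since by the triangle inequality each pseudo-tour dominates a Hamiltonian tour, $OPT$ is at most this factor times $c\cdot x_{i,j,k}=OPT_{LP}$, with no normalization of the metric and no analysis of optimal tours at all. Jensen's and Karamata's inequalities then identify the maximizing $i,j,k$ (Theorem~\ref{max int ratio metric}), and tightness is supplied by Theorem~4.1 of \cite{benoit2008finding} (or, alternatively, by the instances $I^3_{i,j,k}$ of Section~\ref{sec int ratio rectilinear multi}). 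If you want to salvage your approach, the quickest repair is to replace your step 3 by such a convex-combination certificate; otherwise you would have to supply the tour-structure lemma, the dual description of the parameter region, and the closed-form optimization you currently only announce.
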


For the \textsc{Multidimensional Rectilinear TSP} we define a family of instances $I_n^3$ and show that these instances have the same integrality ratio as the \textsc{Metric TSP} instances given in \cite{benoit2008finding}. Therefore, we conclude:

\begin{theorem}
Assuming Conjecture \ref{structure conjecture} the family of instances $I^3_n$ maximizes the integrality ratio for \textsc{Multidimensional Rectilinear TSP}.
\end{theorem}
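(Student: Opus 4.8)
The plan is to exploit the fact that \textsc{Multidimensional Rectilinear TSP} is a sub-class of \textsc{Metric TSP}: for every dimension $d$ the $1$-norm on $\R^d$ satisfies the triangle inequality, so every \textsc{Multidimensional Rectilinear} instance is in particular a \textsc{Metric} instance. Hence, for each fixed number of vertices $n$, the integrality ratio of $n$-vertex \textsc{Multidimensional Rectilinear} instances cannot exceed the integrality ratio of $n$-vertex \textsc{Metric} instances. The theorem then follows by checking that the member $I^3_n$ of our family attains this \textsc{Metric} upper bound, which — assuming Conjecture~\ref{structure conjecture}, via the preceding \textsc{Metric} theorem — equals the integrality ratio of the $n$-vertex instance of \cite{benoit2008finding}.

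Concretely I would proceed as follows. First, record the class inclusion and the resulting inequality
\[
\sup_{\lvert I\rvert = n,\ I\text{ mult.\ rectilinear}}\frac{OPT(I)}{OPT_{LP}(I)}\ \le\ \sup_{\lvert I\rvert = n,\ I\text{ metric}}\frac{OPT(I)}{OPT_{LP}(I)}.
\]
Second, invoke the preceding theorem: under Conjecture~\ref{structure conjecture}, the right-hand side equals the integrality ratio $r_n$ of the $n$-vertex instance from \cite{benoit2008finding}. Third, recall that $I^3_n$ has been constructed (before this theorem) as an $n$-vertex \textsc{Multidimensional Rectilinear} instance whose integrality ratio is exactly $r_n$. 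Combining the three facts, $I^3_n$ is an $n$-vertex \textsc{Multidimensional Rectilinear} instance meeting the upper bound $r_n$ for its class, so it maximizes the integrality ratio for \textsc{Multidimensional Rectilinear TSP}; in particular the displayed inequality is an equality and the two suprema coincide for every admissible $n$.

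The only points that need care are bookkeeping: that the embedding producing $I^3_n$ uses exactly $n$ vertices, so that it is compared with the \textsc{Metric} maximizer of the same $n$; and that viewing $I^3_n$ as a \textsc{Metric} instance does not change $OPT$ or $OPT_{LP}$ — it does not, because its cost function \emph{is} the $\ell_1$-distance, so the subtour LP and the set of Hamilton cycles are literally the same objects. I do not expect a genuine obstacle inside this theorem: the substantive work — the construction of $I^3_n$, the computation of its integrality ratio, and the proof of the \textsc{Metric} theorem from Conjecture~\ref{structure conjecture} — is carried out earlier, and the hard part is really that upstream construction and ratio computation. The statement here is conditional only through its reliance on the preceding \textsc{Metric} theorem, and collapses to an unconditional "$\tfrac{4}{3}$ in the limit'' statement if one only wants the asymptotics.
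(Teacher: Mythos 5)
Your proposal is correct and is essentially the paper's own argument: since every \textsc{Multidimensional Rectilinear} instance is metric, Conjecture \ref{structure conjecture} together with Lemma \ref{optimal gap metric} and Theorem \ref{max int ratio metric} caps the rectilinear ratio by the metric maximum, and Corollary \ref{3d embedding same ratio} shows that $I^3_n$ attains that value, closing the sandwich exactly as you describe. The only slight imprecision is that the paper establishes a lower bound on the ratio of $I^3_{i,j,k}$ rather than the exact value $r_n$ you cite as already known, but, as you yourself note, equality follows from the sandwich itself, so this is not a gap.
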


We investigate local optima of instances that can be embedded into $\R^d$ such that the distances arise from a totally differentiable norm. Such an instance is a local optimum if we cannot increase the integrality ratio by moving the vertices slightly. A criterion is given to detect local optima. Based on that we give a local search algorithm that computes a local optimum.

For the \textsc{Euclidean TSP} we use the local search algorithm to find instances with high integrality ratio. The results have similar structures as $I^2_n$ in the rectilinear case and the instances from \cite{benoit2008finding} in the metric case. Based on these we give an efficient algorithm generating instances having this structure. Using this algorithm, we were able to generate instances with high integrality ratio for \textsc{Euclidean TSP} and show that their integrality ratios converge to $\frac{4}{3}$. Moreover, in a comparison we see that their integraltiy ratios are higher than these of the \textsc{Euclidean TSP} instances given in \cite{Hougardy} and \cite{Hougardy2020}.

Furthermore, we investigate the runtime of the \texttt{Concorde} TSP solver on slightly modified $I_n^3$. We observe that the runtimes of \texttt{Concorde} for these instances are much higher than for the hard to solve tetrahedron instances given in \cite{Hougardy2020}.

\noindent\textbf{Outline of the paper.}
First, we start with some preliminaries in Section \ref{sec preliminaries}. Section \ref{sec construction of instances} describes a procedure of generating families of instances for \textsc{Euclidean TSP} whose integrality ratios converge to $\frac{4}{3}$. These instances can have a different structure than the currently known families of instances whose integrality ratios converge to $\frac{4}{3}$. In Section \ref{sec computation rect} we compute the exact integrality ratio for \textsc{Rectilinear TSP} instances with a small fixed number of vertices. In the following Section \ref{sec int ratio rectilinear} we generalize the instances maximizing the integrality ratio we found in the previous section to arbitrary numbers of vertices. In Section \ref{sec int ratio metric} and \ref{sec int ratio rectilinear multi} we identify the instances maximizing the integrality ratio assuming Conjecture \ref{structure conjecture} for the \textsc{Metric} and \textsc{Multidimensional Rectilinear TSP}, respectively. 

Then, we give in Section \ref{sec local opt} a criterion that certifies local optimality with respect to integrality ratio. Based on this criterion we describe a local search algorithm to find local optima. In Section \ref{sec int ratio euclidean} we give a more efficient algorithm to generate a family of instances for the \textsc{Euclidean TSP} that was found by the local search algorithm for small number of vertices and has similar structure to the instances maximizing the integrality ratio for other TSP variants. We compare the lower bounds we found for the various TSP variants in Section \ref{sec compare}. Then, we observe in Section \ref{sec runtime concorde} that \texttt{Concorde} needs significantly more running time to solve slightly modified instances we found than to solve the hard to solve instances from the literature.
\section{Preliminaries} \label{sec preliminaries}
\subsection{$p$-Norm}
The $p$-norm $\Vert\cdot\Vert_p$ for $p\geq 1$ in $\R^d$ is defined for all $x=(x_1,\dots, x_d)\in \R^d$ by $\Vert x\Vert_p:=\sqrt[p]{\sum_{i=1}^d\lvert x_i \rvert^p}$, where $\sqrt[1]{z}:=z$. The 1-norm and 2-norm are also called the \emph{Manhatten} and \emph{Euclidean} norm, respectively. For two points $x,y\in \R^d$ we denote the distance of $x$ and $y$ according to the $p$-norm by $\dist_p(x,y)$, i.e.\ $\dist_p(x,y):=\Vert x-y\Vert_p$.
\subsection{\textsc{2-Edge Connected Spanning Subgraph LP}}
Cunningham \cite{monma1990minimum}, Goemans and Bertsimas \cite{goemans1993survivable} showed that in the metric case the optimal solution does not change if we omit the degree constraints, i.e.\ the optimal solutions of the following LP is equal to the optimal solutions of the subtour LP:
\begin{align*}
\min \sum_{e\in E(K_n)} &c(e)x_e\\
\sum_{e \in E(\delta(X))}x_e &\geq 2 && \text{for all~} \emptyset\subset X \subset V(K_n)\\
0 \leq x_e&\leq 1 && \text{for all~} e\in E(K_n)
\end{align*}
This above LP is an LP relaxation for the \textsc{2-Edge Connected Spanning Subgraph} problem where the task is to find a 2-edge connected spanning subgraph of a given graph. 
\begin{theorem}[Cunningham \cite{monma1990minimum}, Goemans and Bertsimas \cite{goemans1993survivable}] \label{2ECSS LP}
If the costs satisfy the triangle inequality, the optimal solutions of the \textsc{2-Edge Connected Spanning Subgraph} LP is the same as that of the subtour LP.
\end{theorem}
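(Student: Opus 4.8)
The plan is to show that the subtour LP and the \textsc{2-Edge Connected Spanning Subgraph} LP (the 2ECSS LP for short) have the same optimal value; in particular every optimal solution of the subtour LP is then an optimal solution of the 2ECSS LP, since it is feasible for the latter and has the common optimal cost. Write $OPT_{LP}$ and $OPT_{2ECSS}$ for the two optimal values. One inequality is immediate: the subtour LP is obtained from the 2ECSS LP by adding the degree constraints (\ref{degree constraint}), so every feasible point of the subtour LP is feasible for the 2ECSS LP with the same cost, whence $OPT_{2ECSS}\leq OPT_{LP}$. The substance is the reverse inequality $OPT_{LP}\leq OPT_{2ECSS}$.

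For the reverse direction I would take an optimal solution $x$ of the 2ECSS LP that, among all optimal solutions, minimizes the total weight $\sum_e x_e$; such an $x$ exists because the feasible region is a polytope. Plugging $X=\{v\}$ into the subtour elimination constraints (\ref{subtour elimination constraint}) gives $x(\delta(v))\geq 2$ for every vertex $v$, so the only obstruction to feasibility of $x$ for the subtour LP is that some vertex $v$ has excess degree $x(\delta(v))>2$; together with $x_e\leq 1$ this forces $v$ to have at least three incident edges of positive weight. I will derive a contradiction by producing from $x$ a feasible solution of the 2ECSS LP with the same cost but strictly smaller total weight. The tool is a \emph{short-cutting} step at $v$: choose two positive-weight edges $vu_1$ and $vu_2$, decrease $x_{vu_1}$ and $x_{vu_2}$ by a small $\varepsilon>0$, and increase $x_{u_1u_2}$ by $\varepsilon$. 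By the triangle inequality $c(u_1u_2)\leq c(vu_1)+c(vu_2)$, so the cost does not increase; the total weight drops by $\varepsilon$, and the degree of $v$ drops by $2\varepsilon$ while no other vertex degree changes.

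The delicate point — and the main obstacle — is to choose $u_1,u_2$ so that the step preserves all subtour elimination constraints. A short case analysis shows that the step can only decrease $x(\delta(X))$ for sets $X$ that separate $v$ from both $u_1$ and $u_2$, and then by exactly $2\varepsilon$; such a constraint is endangered only if the $v$-avoiding side of the cut is \emph{tight}, i.e.\ has $x(\delta(X))=2$. Here I would use submodularity of the cut function $X\mapsto x(\delta(X))$: the tight sets not containing $v$ are closed under union and under intersection (when the two members overlap), so there are pairwise disjoint maximal such sets $X_1,\dots,X_k$, and a pair $\{u_1,u_2\}$ is safe exactly when its two endpoints do not both lie in a single $X_i$. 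If all positive-weight edges at $v$ ran into one $X_i$, then $x(\delta(v))\leq x(\delta(X_i))=2$, contradicting the excess degree; hence a safe pair exists. Choosing $\varepsilon$ small enough to keep every $x_e$ in $[0,1]$ (the bound $x_{u_1u_2}\leq 1$ needs a little extra care when $x_{u_1u_2}=1$ already) then completes the step.

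Consequently a weight-minimal optimal solution of the 2ECSS LP satisfies $x(\delta(v))=2$ for every vertex $v$, hence is feasible for the subtour LP, so $OPT_{LP}\leq OPT_{2ECSS}$; combined with the easy direction this gives equality of the two optimal values. I expect the uncrossing argument that produces the safe pair, together with the minor bookkeeping around the upper bounds, to be the only non-routine parts: the cost estimate is nothing but the triangle inequality, and the termination issue is sidestepped by the weight-minimality choice rather than by an explicit iteration.
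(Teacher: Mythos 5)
The paper does not actually prove this statement; it is quoted from Cunningham (via Monma--Munson--Pulleyblank) and Goemans--Bertsimas, so your argument can only be compared with the standard literature proofs. Your plan is exactly that standard route -- the ``parsimonious property'' via a fractional splitting-off step at a vertex of excess degree, with uncrossing of tight sets to find a safe pair -- and almost all of it is sound: the easy inclusion giving $OPT_{2ECSS}\leq OPT_{LP}$, the observation that only cuts separating $v$ from both $u_1$ and $u_2$ lose value (and then exactly $2\varepsilon$), the closure of tight sets avoiding $v$ under union/intersection by submodularity, and the argument that not all positive neighbours of $v$ can lie in one maximal tight set $X_i$ (else $x(\delta(v))\leq x(\delta(X_i))=2$) are all correct, as is the final step that a degree-tight optimum of the 2ECSS LP is subtour-feasible (it already satisfies $0\leq x_e\leq 1$).

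The one place where you wave your hands is, however, a real loose end: the case $x_{u_1u_2}=1$ is not ``a little extra care'' but needs an actual argument, since a priori \emph{every} safe pair at $v$ could be saturated and then your move is blocked. Two ways to close it. (a) Counting: if some maximal tight set $X_i$ meets the positive neighbourhood $N$ of $v$ in a nonempty proper part $P$, and all cross-part edges were saturated, then $\delta(X_i)$ would already carry weight at least $\lvert P\rvert\cdot\lvert N\setminus P\rvert\geq \lvert N\rvert-1\geq 2$ from those edges plus the positive weight of the edges from $v$ into $P$, contradicting $x(\delta(X_i))=2$; and if no tight set avoiding $v$ meets $N$ at all, no tight cut separates two vertices of $N$, so you can simply lower a saturated $x_{u_1u_2}$ (or scale $x$ down when no tight cuts avoiding $v$ exist), contradicting optimality or weight-minimality. (b) Cleaner: first prove the statement for the LP \emph{without} the upper bounds $x_e\leq 1$, where your splitting step never hits a ceiling (the only new corner case, a single positive neighbour with $x_{vu}>2$, is handled by just lowering $x_{vu}$); the resulting degree-tight optimum satisfies $x_e\leq 1$ automatically, since $2x_{uv}=x(\delta(u))+x(\delta(v))-x(\delta(\{u,v\}))\leq 2$, and the chain of feasible-region inclusions then forces all three optimal values (subtour, 2ECSS with bounds, 2ECSS without bounds) to coincide. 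With either patch your proof is complete and is essentially the known one; note also that the theorem's phrase ``the optimal solutions \dots is the same'' should be read, as you do and as the paper uses it, as equality of optimal values.
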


\subsection{Structure of Euclidean Tours}
A well-known result about optimal tours for \textsc{Euclidean TSP} is that they do not intersect themself unless all vertices lie on a line.

\begin{lemma}[Flood 1956~\cite{flood1956traveling}] \label{lemma:nocrossing}
Unless all vertices lie on one line, an optimal tour of a \textsc{Euclidean TSP} instance is a simple polygon.
\end{lemma}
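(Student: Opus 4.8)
The plan is to prove the contrapositive by an uncrossing (2-opt) argument. Write $c(T)$ for the total length of a tour $T$; an optimal tour exists because there are only finitely many tours. Suppose $T$ is optimal and, viewed as a closed polygonal curve through the $n$ given points, is not a simple polygon; I will show that this forces all $n$ points onto one line. Failure of simplicity means some point of the plane is met by two portions of $T$ that are not merely two consecutive edges sharing their common endpoint. After relabelling, this is one of three cases: (a) two non-adjacent edges $\{v_1,v_2\}$ and $\{v_3,v_4\}$ cross transversally at a point $p$ interior to both; (b) a vertex $w$ lies in the relative interior of a non-incident edge $\{v_1,v_2\}$; or (c) two edges overlap along a segment, which I fold into (b) (then either a common neighbour of the two edges lies strictly between their other two endpoints, or an endpoint of one edge lies in the relative interior of the other).

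For case (a): since $\{v_1,v_2\}$ and $\{v_3,v_4\}$ are disjoint edges of the cycle $T$, their endpoints occur in the cyclic order $v_1,v_2,\dots,v_3,v_4,\dots$. Deleting these two edges splits $T$ into paths $v_2\cdots v_3$ and $v_4\cdots v_1$; adding $\{v_1,v_3\}$ and $\{v_2,v_4\}$, with the first path reversed, reconnects them into a single Hamilton cycle $T'$ (the alternative pair $\{v_1,v_4\},\{v_2,v_3\}$ would break $T$ into two subtours, so this choice is forced). Since $p$ lies on both deleted edges, $\dist_2(v_1,v_2)=\dist_2(v_1,p)+\dist_2(p,v_2)$ and $\dist_2(v_3,v_4)=\dist_2(v_3,p)+\dist_2(p,v_4)$; combining these with $\dist_2(v_1,v_3)\le\dist_2(v_1,p)+\dist_2(p,v_3)$ and $\dist_2(v_2,v_4)\le\dist_2(v_2,p)+\dist_2(p,v_4)$ gives
\[
\dist_2(v_1,v_3)+\dist_2(v_2,v_4)\ \le\ \dist_2(v_1,v_2)+\dist_2(v_3,v_4),
\]
so $c(T')\le c(T)$. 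A transversal crossing forces the two edges off any common line, so at least one of the two triangle inequalities above is strict; hence $c(T')<c(T)$, contradicting optimality of $T$.

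For case (b): let $u_1,u_2$ be the tour-neighbours of $w$, delete $\{u_1,w\},\{w,u_2\},\{v_1,v_2\}$ and add $\{u_1,u_2\},\{v_1,w\},\{w,v_2\}$ --- that is, splice $w$ out of its current position and re-insert it on the edge $\{v_1,v_2\}$ --- which again produces a single Hamilton cycle $T'$ with
\[
\begin{aligned}
c(T')-c(T)={}&\bigl(\dist_2(u_1,u_2)-\dist_2(u_1,w)-\dist_2(w,u_2)\bigr)\\
&{}+\bigl(\dist_2(v_1,w)+\dist_2(w,v_2)-\dist_2(v_1,v_2)\bigr).
\end{aligned}
\]
The second bracket is $0$ because $w$ lies on the segment $[v_1,v_2]$, and the first is $\le 0$ by the triangle inequality, with equality only when $w\in[u_1,u_2]$; if it is strictly negative we again contradict optimality. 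In the remaining equality case $u_1,w,u_2$ are collinear too and $w$ has been straightened onto the line through $v_1$ and $v_2$. Iterating this on a carefully chosen optimal tour --- for instance one minimising the number of transversal crossings plus vertex--edge incidences --- and noting that case (c) reduces to the same situation, one concludes that an optimal tour that still fails to be simple must have all $n$ of its vertices on a single common line. Contrapositively, if the points are not all collinear, every optimal tour is a simple polygon.

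The delicate point is exactly the degenerate, collinear bookkeeping behind cases (b) and (c): the splice move need not by itself simplify $T$, since the re-inserted vertex $w$ may land in the relative interior of the newly created edge $\{u_1,u_2\}$, so one must pick the optimal tour cleverly and track the right monovariant to force collinearity, and one must rule out coincident vertices and overlapping edges without circular reasoning. By contrast, the transversal-crossing case (a) --- essentially Flood's original observation --- is short, and the side conditions (that each exchange yields one Hamilton cycle rather than two subtours) are routine cycle surgery.
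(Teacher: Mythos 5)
The paper does not actually prove this lemma---it is cited from Flood (1956)---so the only question is whether your argument stands on its own. Your case (a) does: the 2-opt exchange at a transversal crossing, with the observation that transversality forces strictness in at least one of the two triangle inequalities, is exactly the classical argument, and your remark that only one of the two reconnections yields a single Hamilton cycle is handled correctly.

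The genuine gap is in cases (b) and (c), and you name it yourself. For a vertex $w$ in the relative interior of a non-incident edge $\{v_1,v_2\}$, the splice gives only a dichotomy: either a strict improvement or the local collinearity $w\in[u_1,u_2]$. From there you still have to reach the global conclusion that \emph{all} $n$ vertices lie on one line, and the proposal does not do this: ``iterating on a carefully chosen optimal tour'' while ``tracking the right monovariant'' is a plan, not a proof, and you concede that the splice can increase the very quantity you propose to decrease (the reinserted $w$ may land in the interior of the new edge $\{u_1,u_2\}$), so termination and the final collinearity claim are unproven; the coincident-point and overlapping-edge degeneracies are likewise only flagged. A completion is possible but needs real additional work: for instance, if the line through $u_1,w,u_2$ differs from the line through $v_1,v_2$, a further 2-opt exchange using $\{u_1,w\}$ (or $\{w,u_2\}$) together with $\{v_1,v_2\}$ is strictly improving, since equality in the relevant triangle inequality would force $v_1$ (or $v_2$) onto the line $u_1u_2$ and hence to coincide with $w$; and if all of $u_1,w,u_2,v_1,v_2$ lie on one line, one must still bring a vertex off that line into play to obtain a contradiction. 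As written, your argument establishes only that an optimal tour has no transversal self-crossing, which is strictly weaker than the statement of the lemma.
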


An important consequence of Lemma~\ref{lemma:nocrossing} is the following result:

\begin{lemma}[\cite{convexhull}, page 142] \label{lemma:convexhull}
An optimal tour of a \textsc{Euclidean TSP} instance visits the vertices on the boundary of the convex hull 
of all vertices in their cyclic order. 
\end{lemma}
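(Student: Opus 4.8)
The plan is to deduce the statement from Lemma~\ref{lemma:nocrossing}. If the points all lie on a line, their convex hull is a segment whose boundary consists of only two vertices and there is nothing to prove, so I may assume the points are not collinear. Then, by Lemma~\ref{lemma:nocrossing}, an optimal tour $T$ is a simple polygon, which I regard as a simple closed curve; since every edge of $T$ is a chord of the convex hull, $T$ lies inside $\mathrm{conv}(V)$, where $V$ denotes the set of points. List the vertices on the boundary of $\mathrm{conv}(V)$ in cyclic boundary order as $h_1,\dots,h_k$; if $k\le 3$ there is nothing to prove, so assume $k\ge 4$. Both the boundary order of $h_1,\dots,h_k$ and the cyclic order in which $T$ visits them are defined only up to reversal, and the assertion is that they agree. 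Assuming they do not, a standard fact about cyclic orders of finite sets (the pattern of which pairs interleave determines the order up to reflection) yields four hull vertices $p,q,r,s$ occurring in the cyclic order $p,q,r,s$ along the boundary of $\mathrm{conv}(V)$ but in the cyclic order $p,r,q,s$ along $T$.

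The core of the proof is then the planar topology statement that the sub-arc $A$ of $T$ running from $p$ to $r$ and avoiding $q,s$, and the sub-arc $B$ of $T$ running from $q$ to $s$ and avoiding $p,r$, must intersect. Granting this, a common point of $A$ and $B$ lies on an edge of $A$ and on an edge of $B$, and these two edges of $T$ cross, contradicting that $T$ is a simple polygon. To prove the topology statement I would proceed by contradiction: $A$ and $B$ are non-adjacent among the four sub-arcs into which $p,r,q,s$ cut $T$, so if it failed then $A\cap B=\emptyset$. Extend $A$ by a simple arc $\widetilde A$ from $r$ back to $p$ whose interior lies in the (connected) complement $\R^2\setminus\mathrm{conv}(V)$, so that $\widehat A:=A\cup\widetilde A$ is a Jordan curve. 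On the one hand, $B$ lies in $\mathrm{conv}(V)$ and avoids $p$ and $r$, hence is disjoint from $\widetilde A$; so if $A\cap B=\emptyset$ then $B$ joins $q$ to $s$ within $\R^2\setminus\widehat A$ and $q,s$ lie in the same complementary region of $\widehat A$. On the other hand, as one traverses the boundary of $\mathrm{conv}(V)$ one changes complementary region of $\widehat A$ only at $p$ and at $r$ --- there $\widehat A$ leaves $\mathrm{conv}(V)$ for its exterior along $\widetilde A$ and so crosses the boundary, while wherever else $A$ meets the boundary it does not cross to the exterior --- so $\{p,r\}$ splits the boundary into two arcs lying in different complementary regions of $\widehat A$; by the cyclic order $p,q,r,s$ the points $q$ and $s$ lie on different arcs, hence in different regions. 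This contradicts the previous sentence, so $A\cap B\neq\emptyset$.

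I expect the delicate step to be exactly this last count of crossings, since a priori $A$ may pass through (or even run along edges incident to) hull vertices other than $p,q,r,s$, so $A$ need not be a crosscut of a disk and the boundary of $\mathrm{conv}(V)$ may meet $\widehat A$ repeatedly; the point to be nailed down is that the only boundary points at which one changes complementary region are $p$ and $r$. The remaining ingredients are routine: the collinear reduction, the cyclic-order fact used to produce $p,q,r,s$, and the observation that non-adjacent sub-arcs of a simple polygon are disjoint. Alternatively one could first apply the Schoenflies theorem to make $\mathrm{conv}(V)$ the closed unit disk and phrase the topology statement as ``two chords of a disk whose endpoints interleave on the boundary circle must cross'', but the same touching issue would still need to be addressed.
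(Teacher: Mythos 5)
The paper itself gives no proof of Lemma~\ref{lemma:convexhull}; it is quoted from the literature (\cite{convexhull}, page 142), so there is nothing internal to compare your argument with and I judge it on its own merits. Your overall route is the standard one and is workable: exclude the collinear case, use Lemma~\ref{lemma:nocrossing} to get that an optimal tour $T$ is a simple closed polygon inside $\mathrm{conv}(V)$, use the fact that the separation (interleaving) relation determines a cyclic order up to reflection to extract hull-boundary vertices $p,q,r,s$ that interleave on $\partial\,\mathrm{conv}(V)$ but not along $T$, and then derive a contradiction from a Jordan-type separation statement applied to the two disjoint tour arcs $A$ (from $p$ to $r$, avoiding $q,s$) and $B$ (from $q$ to $s$, avoiding $p,r$). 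These reductions are sound, including the tacit facts that $q,s\notin A$ and $p,r\notin B$ as point sets (an edge of a simple tour cannot pass through a non-incident vertex) and that non-adjacent sub-arcs of a simple closed curve are disjoint. (A small quibble: in the collinear case the boundary of the degenerate hull is the whole segment, not two points, so ``nothing to prove'' holds only under the reading that the hull vertices are the two extreme points.)

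The genuine gap is in the separation claim, which you flag but do not close, and the part you present as immediate is in fact the crux. (i) Your crossing count needs that a change of complementary region of $\widehat A$ \emph{does} occur when the traversal of $\partial\,\mathrm{conv}(V)$ passes $p$ and passes $r$; you assert this (``$\widehat A$ \dots crosses the boundary'' there), but it is not a local triviality---the last edge of $A$ into $p$ may itself run along a hull edge, so $\widehat A$ can meet $\partial\,\mathrm{conv}(V)$ in a whole segment containing $p$---and without this half the parity bookkeeping only yields ``$0$ or $2$ switches'', and $0$ switches would put $q$ and $s$ in the same region, giving no contradiction. (ii) The ``no change elsewhere'' half implicitly uses a small detour through the exterior of the hull around each touching component; that detour must also avoid $\widetilde A$, so you must first fix $\widetilde A$ to approach $\partial\,\mathrm{conv}(V)$ only near $p$ and $r$, and argue via the finiteness of the components of the polygonal set $A\cap\partial\,\mathrm{conv}(V)$; none of this is set up. A concrete, elementary repair: since everything is polygonal, pick an interior point $o$ of the hull and replace $A,B$ by the arcs obtained from the homothety $x\mapsto o+(1-t)(x-o)$, reattaching the four endpoints $p,r,q,s$ by the short segments to their images; for $t>0$ small enough the two arcs remain disjoint simple arcs and meet $\partial\,\mathrm{conv}(V)$ exactly in their endpoints. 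Then $A'$ is a genuine crosscut of the convex body, $A'$ together with either boundary arc between $p$ and $r$ is a Jordan curve, and the standard crosscut consequence of the Jordan curve theorem places $q$ and $s$ in different components of $\mathrm{conv}(V)\setminus A'$, contradicting that $B'$ joins them there. With that step supplied (or with your crossing count completed along the lines of (i)--(ii)), your plan becomes a correct proof.
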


\subsection{Karamata's inequality}

\begin{definition}
A sequence of real numbers $x_1,\dots, x_n$ \emph{majorizes} another sequence $y_1, \dots, y_n$ if
\begin{align*}
x_1\geq x_2 &\dots \geq x_n\\
y_1\geq y_2 &\dots \geq y_n\\
\sum_{i=1}^j x_i&\geq \sum_{i=1}^j y_i \qquad \forall j<n\\
\sum_{i=1}^n x_i&=\sum_{i=1}^n y_i.
\end{align*}
\end{definition}

\begin{theorem}[Karamata's inequality \cite{karamata1932inegalite}] \label{Karamata}
Let $I$ be an interval of real numbers and $f:I\to \R$ be a convex function. Moreover, let $x_1, \dots, x_n$ and $y_1, \dots, y_n$ be sequences of numbers in $I$ such that $(x_1,\dots, x_n)$ majorizes $(y_1,\dots,y_n)$.
Then
\begin{align*}
\sum_{i=1}^n f(x_i)\geq \sum_{i=1}^n f(y_i).
\end{align*}
\end{theorem}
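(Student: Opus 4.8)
The plan is to reduce the inequality to a single application of Abel's summation-by-parts identity, using convexity only through the existence of monotone subgradients and using the definition of majorization only through its partial-sum conditions.

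First I would pick the slopes. For each $i$ choose a subgradient $c_i$ of $f$ at $y_i$, i.e.\ a number with $f(x)\ge f(y_i)+c_i(x-y_i)$ for all $x\in I$; such a $c_i$ exists for a convex function on an interval (one may take a one-sided derivative). Moreover, making the choice consistent whenever two of the $y_i$ coincide, the monotonicity of the one-sided derivatives of a convex function lets us arrange $c_1\ge c_2\ge\cdots\ge c_n$, since $y_1\ge y_2\ge\cdots\ge y_n$. Applying the subgradient inequality with $x=x_i$ and summing over $i$ gives
\begin{align*}
\sum_{i=1}^n f(x_i)-\sum_{i=1}^n f(y_i)\ \ge\ \sum_{i=1}^n c_i(x_i-y_i),
\end{align*}
so it remains to show that the right-hand side is non-negative.

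Next I would set $d_i:=x_i-y_i$ and $S_j:=\sum_{i=1}^j d_i$ with $S_0:=0$, so that $d_i=S_i-S_{i-1}$. Abel summation then yields
\begin{align*}
\sum_{i=1}^n c_i d_i\ =\ c_n S_n-c_1 S_0+\sum_{i=1}^{n-1}(c_i-c_{i+1})S_i\ =\ \sum_{i=1}^{n-1}(c_i-c_{i+1})S_i,
\end{align*}
where the boundary terms drop out because $S_0=0$ and $S_n=\sum_{i=1}^n x_i-\sum_{i=1}^n y_i=0$ by the equality condition in the definition of majorization. For each $i<n$ we have $c_i-c_{i+1}\ge 0$ by the monotone choice of the $c_i$ and $S_i=\sum_{k=1}^i x_k-\sum_{k=1}^i y_k\ge 0$ by the partial-sum condition of majorization; hence every summand is non-negative and so is the whole sum. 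Together with the previous display this proves the theorem.

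The only point needing care is the construction of the $c_i$: one must know that a convex function on an interval admits a finite subgradient at every point and that these can be chosen monotonically along a monotone sequence of base points. This follows from the standard facts that the left and right derivatives of a convex function exist and are finite at every interior point, exist one-sidedly at the endpoints, and satisfy $f'_-(s)\le f'_+(s)\le f'_-(t)\le f'_+(t)$ whenever $s<t$. If one prefers to bypass subgradients altogether, one can first prove the statement for differentiable $f$ and recover the general case by mollification, but this detour is unnecessary.
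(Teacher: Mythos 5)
The paper itself does not prove this statement: Karamata's inequality is quoted from the literature (the citation to Karamata's 1932 paper), so there is no in-paper proof to compare against, and your proposal has to be judged on its own. Your argument is the standard proof of the majorization inequality and it is essentially correct: choose monotone subgradients $c_i$ at the points $y_i$, bound $\sum_i f(x_i)-\sum_i f(y_i)$ from below by $\sum_i c_i(x_i-y_i)$, and show the latter is nonnegative by Abel summation, using $S_n=0$ (equal totals), $S_i\ge 0$ for $i<n$ (partial-sum domination) and $c_i\ge c_{i+1}$ (monotonicity of the subdifferential along the decreasing sequence $y_1\ge\dots\ge y_n$). All of these steps check out.

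One small caveat: your assertion that a finite subgradient exists at every $y_i$ is not quite true when $y_i$ is an endpoint of $I$ at which the one-sided derivative is infinite (for instance $f(t)=-\sqrt{t}$ on $[0,1]$ at $t=0$ admits no finite subgradient). This does not break the proof, but it deserves a sentence: if $y_i$ equals the right endpoint $M$ of $I$, then $x_1\ge y_1=M$ and the partial-sum conditions force $x_j=y_j=M$ for all $j\le i$; if $y_i$ equals the left endpoint $m$, then the partial-sum condition at $i-1$ together with equal totals and $x_j\ge m$ force $x_j=y_j=m$ for all $j\ge i$. At such indices $x_i-y_i=0$, so the subgradient inequality is needed only at $x=x_i=y_i$, where it holds trivially for any value of $c_i$, and since these indices form a prefix (respectively a suffix) of $\{1,\dots,n\}$ one can still choose the $c_i$ monotonically. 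Alternatively, taking $c_i$ to be the divided difference $(f(x_i)-f(y_i))/(x_i-y_i)$ whenever $x_i\ne y_i$ avoids subgradients entirely and turns your first inequality into an equality, with the same Abel-summation finish. With that remark added, your proof is complete.
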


\section{Construction of Instances with Integrality Ratio Converging to $\frac{4}{3}$} \label{sec construction of instances}
In this section we describe a procedure to construct families of instances whose integrality ratios converge to $\frac{4}{3}$. The instances created this way can have a completely different structure than the known instances from the literature.

\subsection{Construction}
We choose a planar embedding of a 2-edge-connected graph $G$ in $\R^2$. For all $k\in \N$ choose a \textsc{Euclidean TSP} instances $G_k$ consisting of the embedded vertices of $G$ and a set of vertices $vw_{1},vw_{2},\dots, vw_{l_k}$ in this order subdividing the line segment $vw$ for every edge $\{v,w\} \in E(G)$. Note that the number of subdividing vertices $l_k$ may differ for different edges of $G$. We call the pairs of vertices of the form $vw_i,vw_{i+1}$ and the pairs $v,vw_1$ and $vw_{l_k},w$ \emph{consecutive vertices}. Let $\delta_k$ be the greatest distance between two consecutive vertices in $G_k$. We further require that the instances $G_k$ satisfy the condition $\lim_{k\to \infty}\delta_k=0$.

Let $T_k^*$ and $x^*_{k}$ be an optimal tour and optimal fractional tour for $G_k$, respectively. Moreover, let $J$ be an optimal $T$-join of the vertices with odd degree in $G$.

\begin{lemma}
For the optimal fractional tour $x^*_k$ of the instance $G_k$ we have $c(x^*_{k})\leq c(E(G))$ for all $k$.
\end{lemma}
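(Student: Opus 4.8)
The plan is to exhibit an explicit feasible solution of the subtour LP for $G_k$ whose cost is at most $c(E(G))$; optimality of $x^*_k$ then gives $c(x^*_k) \le c(E(G))$. The natural candidate is the solution that puts weight $1$ on every edge of $G_k$ that is a pair of consecutive vertices (i.e.\ the path that results from the chosen planar embedding of $G$ after subdivision), and weight $0$ on all other edges of $K_{|G_k|}$. Call this $y$. Since the subdivision vertices lie exactly on the segment $vw$ in the Euclidean plane, the total $y$-weight contributed by the subdivided edge $\{v,w\}$ equals $\dist_2(v,w) = c(\{v,w\})$, so summing over all edges of $G$ we get $c(y) = c(E(G))$ exactly.

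The main work is to check feasibility of $y$, and this is where the hypothesis that $G$ is $2$-edge-connected enters. First I would observe that $y$ is exactly the edge-incidence vector of the graph $H_k$ obtained from the planar embedding of $G$ by subdividing each edge $\{v,w\}$ with the points $vw_1,\dots,vw_{l_k}$; this $H_k$ is a subdivision of $G$, hence is itself $2$-edge-connected and spanning in $G_k$. The box constraints $0\le y_e\le 1$ are immediate. For the subtour elimination constraints $\sum_{e\in E(\delta(X))} y_e \ge 2$ for every proper nonempty $X$: the left-hand side counts the number of edges of $H_k$ crossing the cut $(X, V\setminus X)$, and a $2$-edge-connected graph has no cut of size $0$ or $1$, so this is $\ge 2$. (If one prefers to stay with the subtour LP as literally stated, with the degree constraints, note that $H_k$ is not $2$-regular at the original vertices of $G$, so $y$ need not satisfy the degree constraints; the cleanest route is therefore to invoke Theorem~\ref{2ECSS LP} — the Euclidean metric satisfies the triangle inequality — so that $OPT_{LP}(G_k)$ equals the optimum of the \textsc{2-Edge Connected Spanning Subgraph} LP, for which $y$ is feasible.)

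Concretely the steps are: (1) define $y$ as the incidence vector of the subdivided graph $H_k$; (2) compute $c(y)=\sum_{\{v,w\}\in E(G)}\dist_2(v,w)=c(E(G))$ using collinearity of the subdivision points; (3) verify $0\le y\le 1$ and, via $2$-edge-connectivity of $H_k$, the subtour elimination constraints; (4) use Theorem~\ref{2ECSS LP} to identify the subtour LP optimum with the \textsc{2ECSS} LP optimum, so that $y$ is feasible for the relevant LP; (5) conclude $c(x^*_k)=OPT_{LP}(G_k)\le c(y)=c(E(G))$.

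The only subtlety — and the one step that deserves care rather than being routine — is the degree-constraint issue just mentioned: $y$ is genuinely feasible for the subtour LP \emph{as a polytope with only subtour elimination and box constraints}, but at a branch vertex $v$ of $G$ of degree $\ge 3$ it violates $\sum_{e\in\delta(v)} y_e = 2$. So one must either route through Theorem~\ref{2ECSS LP} or, alternatively, argue that from $y$ one can extract (by shortcutting an Eulerian-type walk, using the triangle inequality) an actual fractional tour of no greater cost. I expect the paper takes the former, shorter route, and I would do the same.
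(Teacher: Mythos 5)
Your proposal is correct and follows essentially the same route as the paper: take the indicator vector of the subdivided graph (weight $1$ on edges between consecutive vertices, $0$ elsewhere), note that $2$-edge-connectedness is preserved under subdivision so all cut constraints have value at least $2$, compute its cost as $c(E(G))$, and invoke Theorem~\ref{2ECSS LP} to transfer the bound to the subtour LP. Your explicit discussion of why the degree constraints force the detour through the \textsc{2ECSS} LP is exactly the point the paper relies on, just stated more carefully.
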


\begin{proof}
Set $x(e)=1$ for all edges $e$ connecting two consecutive vertices and $x(e)=0$ for all other edges $e$. This is a solution to the LP relaxation of the \textsc{2-Edge Connected Spanning Subgraph} LP: After the subdivision, the graph stays 2-connected, hence each cut goes through at least two edges and has $x$-value at least 2. The cost of this solution is exactly the cost of $E(G)$. By Theorem \ref{2ECSS LP}, this is also an upper bound for the optimal solution of the subtour LP.
\end{proof}

\begin{lemma}
For the optimal tour $T^*_k$ of the instance $G_k$ we have $\lim_{k\to \infty} c(T^*_{k})\geq c(E(G))+c(J)$.
\end{lemma}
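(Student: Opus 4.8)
The plan is to pass to a limit: a minimum tour $T^*_k$ is a closed polygonal curve, and as $k\to\infty$ these curves ``fill out'' the embedded graph $G$, so that a uniformly convergent subsequence produces a closed curve whose image contains $G$; the length of any such curve is at least $c(E(G))+c(J)$, and lower semicontinuity of length does the rest.

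In detail, I would first record the a priori bound $c(T^*_k)\le c(E(G))+c(J)$ valid for every $k$: take an Eulerian circuit of the connected Eulerian multigraph obtained from the subdivided graph by adding the $J$-edges (these join vertices of $G$ directly and are not subdivided, and their parities make every vertex of $G$ even, while the subdivision vertices keep degree $2$), then shortcut the repeated visits to vertices of $G$; by the triangle inequality this does not increase the length and produces a tour of $G_k$ of cost at most $c(E(G))+c(J)$. Next, parametrize each $T^*_k$ proportionally to arc length on $[0,1]$; these maps are $L$-Lipschitz with $L=c(T^*_k)\le c(E(G))+c(J)$ bounded, so along a subsequence realizing $\liminf_k c(T^*_k)$ Arzel\`a--Ascoli gives a uniformly convergent sub-subsequence with a closed limit curve $\gamma$. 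Since length is lower semicontinuous under uniform convergence, $\mathrm{length}(\gamma)\le\liminf_k c(T^*_k)$. Finally, every point $p$ of the embedded graph $G$ lies within $\delta_k$ of a vertex of $G_k$, hence within $\delta_k$ of the image of $T^*_k$; as $\delta_k\to0$ and the images converge in Hausdorff distance, $p$ lies on the image of $\gamma$. Thus $\gamma$ is a closed curve with $G\subseteq\mathrm{image}(\gamma)$.

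It therefore suffices to prove: any closed curve $\gamma$ with $G\subseteq\mathrm{image}(\gamma)$ has length at least $c(E(G))+c(J)$, where $J$ is an optimal $T$-join of the odd-degree vertices (in the ambient/Euclidean metric). The plan here is a covering/parity argument: writing $\mu_\gamma$ for the multiplicity function, $\mathrm{length}(\gamma)=\int\mu_\gamma\,d\mathcal H^1\ge c(E(G))$ because $\mu_\gamma\ge1$ on the image, which contains the essentially disjoint edges of $G$. The excess $\mathrm{length}(\gamma)-c(E(G))$ is carried by the points where $\mu_\gamma\ge2$ on $G$ together with the part of the image lying off $G$; because $\gamma$ is closed while $G$ has vertices of odd degree, $\gamma$ cannot cover $G$ with multiplicity exactly one and nothing else — at each odd-degree vertex it must double an incident edge or leave $G$ — and following these extra arcs shows they contain (the support of) a $T$-join of the odd-degree vertices, of total length at least $c(J)$. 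Together with the a priori bound this shows $\lim_k c(T^*_k)$ exists and equals $c(E(G))+c(J)$.

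The main obstacle is exactly this last claim about closed curves containing $G$, and in particular making the parity bookkeeping at the odd-degree vertices precise and identifying the ``extra'' arcs with a $T$-join. A clean way to make it rigorous is to discretize once more: approximate $\gamma$ by a nearby closed polygon through a very fine net of points on $\mathrm{image}(\gamma)\cup G$, project the resulting closed walk onto $G$, and check combinatorially that the projected walk traverses every edge of $G$ at least once and has even total degree at every vertex of $G$, so its excess edge-multiplicities form a $T$-join; one then lets the net spacing tend to $0$. The steps that pass to the limit — Arzel\`a--Ascoli, lower semicontinuity of length, Hausdorff convergence of images, and the a priori upper bound — are routine.
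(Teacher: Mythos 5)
Your reduction steps are fine: the a priori bound $c(T^*_k)\le c(E(G))+c(J)$ via an Eulerian circuit of the subdivided graph plus $J$, the constant-speed parametrization, Arzel\`a--Ascoli, lower semicontinuity of length, and the argument that the limit curve $\gamma$ contains the embedded $G$ (using $\delta_k\to 0$) are all standard and correct. But the heart of your proof --- the claim that every closed rectifiable curve whose image contains $G$ has length at least $c(E(G))+c(J)$ --- is exactly what remains to be proved, and the discretize-and-project argument you sketch for it does not work as stated. Nearest-point projection onto the non-convex set $G$ is neither unique nor length-controlling: two points of $\gamma$ that are very close in the plane (say on opposite sides of a narrow ``U'' or near an acute corner of the embedding) can project to points far apart along $G$, so the ``projected closed walk'' can be much longer than $\gamma$. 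Consequently, even if its excess edge-multiplicities do form a $T$-join in $G$ (that combinatorial part is right), its cost bounds the length of the \emph{projected} walk from below, not $\operatorname{length}(\gamma)-c(E(G))$; the inequality you need points the other way, and this is precisely the quantity the instances are designed to make small --- the whole point is that $\gamma$ may shortcut through the plane, so the extra arcs must be kept with their Euclidean lengths, not pushed onto $G$. A correct continuous argument needs a different accounting, e.g.\ $\operatorname{length}(\gamma)=\int N\,d\mathcal H^1\ge c(E(G))+\text{(excess)}$ via the Banach indicatrix together with a mod-$2$ chain (or a careful excursion decomposition) showing that the excess carries a $1$-chain with boundary $T$, whose mass is at least the Euclidean minimum $T$-join; none of this is in your sketch.

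For comparison, the paper never passes to a limit curve: it works with the discrete optimal tours directly, deletes the vertices within $\epsilon_1$ of $V(G)$, splits the tour into edges joining consecutive vertices on a common edge of $G$ and the remaining ``long'' edges $S_{k,\epsilon_1}$ (each of length at least some $\epsilon_2>0$), keeps the long edges with their Euclidean lengths, and shows that $S_{k,\epsilon_1}$ together with the uncovered gaps and the deleted boundary pieces is a $T$-join of the odd-degree vertices of $G$; all error terms are $O(\lvert S\rvert\delta_k+\lvert E(G)\rvert\epsilon_1)$ and vanish in the double limit. Your parity intuition coincides with the paper's, but your length bookkeeping in the continuous setting is the missing (and currently flawed) step; repairing it by keeping the excursions as Euclidean chords with endpoints rounded to nearby graph points essentially reproduces the paper's discrete proof.
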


\begin{proof}
For $\epsilon_1>0$ we construct a new instance $G_{k,\epsilon_1}$ from $G_k$ by deleting all subdividing vertices with distance less than $\epsilon_1$ to any vertex $v\in G$. Let $U_{k,\epsilon_1}$ be the set of edges connecting two consecutive vertices of $G_{k}$ where at least one vertex is deleted in $G_{k,\epsilon_1}$. For a vertex $p\in V(G_{k,\epsilon_1})$ and an edge $e\in E(G)$ we say that $p$ \emph{lies} on $e$ if $p\in e$ or $p$ is a vertex subdividing $e$. 

Let $\epsilon_2$ be the shortest distance of two vertices $p,q\in E(G_{k,\epsilon_1})$ that do not lie on a common edge $e\in E(G)$ and $\alpha:=\min_{\{u,w\},\{w,v\} \in E(G) \atop \{u,w\}\neq\{w,v\}} \angle uwv$ be the smallest angle between two different edges with a common vertex in $E(G)$. We claim that for $\epsilon_1$ fixed we have $\liminf_{k\to \infty}\epsilon_2>0$. Since the embedding is planar, we have $\alpha>0$ and $\liminf_{k\to \infty}\dist_2(p,q)>0$ if $p$ and $q$ lie on different edges of $G$ not incident to each other. Now, let $p$ and $q$ lie on two different edges $\{u,w\},\{w,v\}\in E(G)$ with a common vertex, respectively. If $\angle uvw >\frac{\pi}{2}$, then $\dist_2(p,q)\geq \epsilon_1$. Else, for $p$ fixed the distance $\dist_2(p,q)$ is minimized if $pq$ is perpendicular to $wv$. Thus, we have $\dist_2(p,q)\geq \dist_2(p,w)\sin(\angle uwv)\geq \epsilon_1\sin(\alpha)>0$ which proves the claim. 

Now, consider the subset of edges $S_{k,\epsilon_1}$ of an optimal tour $T^*_{k,\epsilon_1}$ of $G_{k, \epsilon_1}$ consisting of edges not connecting two vertices lying on the same edge. Recall that by definition every edge in $S_{k,\epsilon_1}$ has length at least $\epsilon_2>0$. Now, if $\lvert S_{k,\epsilon_1} \rvert>\frac{c(E(G))+c(J)}{\epsilon_2}$, we have $c(T^*_k)\geq c(T^*_{k,\epsilon_1})>c(E(G))+c(J)$ since the vertices of $G_{k,\epsilon_1}$ is a subset of that of $G_{k}$. It remains the case that $\lvert S_{k,\epsilon_1} \rvert \leq\frac{c(E(G))+c(J)}{\epsilon_2}$. Note that by Lemma \ref{lemma:nocrossing} the optimal tour is a simple polygon. Thus, the edges in $T^*_{k,\epsilon_1} \backslash S_{k,\epsilon_1}$ are connecting consecutive vertices and we have
\begin{align*}
c(T^*_{k,\epsilon_1}\backslash S_{k,\epsilon_1})&=c(E(G_{k,\epsilon_1})\cap T_{k,\epsilon_1})= c(E(G_{k,\epsilon_1}))-c(E(G_{k,\epsilon_1}) \backslash T_{k,\epsilon_1})\\
&\geq c(E(G_{k,\epsilon_1}))-\lvert S_{k,\epsilon_1} \rvert \delta_k= c(E(G))-\lvert S_{k,\epsilon_1} \rvert \delta_k-c(U_{k,\epsilon_1}) \\
&\geq c(E(G))-\lvert S_{k,\epsilon_1} \rvert\delta_k -2\lvert E(G) \rvert (\epsilon_1+\delta_k)
\end{align*}
since $c(U_{k,\epsilon_1})\leq 2\lvert E(G)\rvert (\epsilon_1+\delta_k)$. Furthermore, the edges in $S_{k,\epsilon_1}$ are a $T$-join for the vertices with odd degree in $(V(G_{k,\epsilon_1}),T^*_{k,\epsilon_1} \backslash S_{k,\epsilon_1})$. Therefore, $S_{k,\epsilon_1} \cup \left( E(G_{k,\epsilon_1}) \backslash T_{k,\epsilon_1} \right)\cup U_{k,\epsilon_1}$ is a $T$-join for the vertices with odd degree in $G_k$ as $E(G_k)=(T^*_{k,\epsilon_1}\backslash S_{k,\epsilon_1}) \cup \left( E(G_{k,\epsilon_1}) \backslash T_{k,\epsilon_1} \right) \cup U_{k,\epsilon_1}$. Since $G$ has the same set of vertices with odd degree as $G_k$, this is also a $T$-join for the vertices with odd degrees in $G$. Thus, 
\begin{align*}
c(S_{k,\epsilon_1})\geq c(J)-c(E(G_{k,\epsilon_1}) \backslash T_{k,\epsilon_1})-c(U_{k,\epsilon_1}) \geq c(J)-\lvert S_{k,\epsilon_1} \rvert \delta_k-2\lvert E(G) \rvert (\epsilon_1+\delta_k). 
\end{align*}
Altogether, for the total length of the tour we have:
\begin{align*}
\lim_{k\to\infty}c(T^*_k)\geq &\lim_{\epsilon_1\to 0}\lim_{k\to \infty}c(T^*_{k,\epsilon_1})=\lim_{\epsilon_1\to 0}\lim_{k\to \infty}c(T^*_{k,\epsilon_1}\backslash S_{k,\epsilon_1})+c(S_{k,\epsilon_1})\\
\geq &\lim_{\epsilon_1 \to 0}\lim_{k\to \infty} c(E(G))-\lvert S_{k,\epsilon_1} \rvert\delta_k-2\lvert E(G) \rvert(\epsilon_1+\delta_k)\\
&+c(J)-\lvert S_{k,\epsilon_1} \rvert\delta_k-2\lvert E(G) \rvert (\epsilon_1+\delta_k)\\
\geq &\lim_{\epsilon_1 \to 0}\lim_{k \to \infty} c(E(G))-2\frac{c(E(G))+c(J)}{\epsilon_2}\delta_k-4\lvert E(G) \rvert(\epsilon_1+\delta_k)+c(J)\\
=&\lim_{\epsilon_1\to 0}c(E(G))-4\lvert E(G) \rvert\epsilon_1+c(J)=c(E(G))+c(J).\qedhere
\end{align*}
\end{proof}

Hence, we conclude:

\begin{theorem} \label{int ratio convergence}
The integrality ratios of the family of instances $G_k$ converge as $k\to \infty$ to at least $\frac{c(E(G))+c(J)}{c(E(G))}$, where $J$ is the cost of the optimal $T$-join of the vertices with odd degree in $G$.
\end{theorem}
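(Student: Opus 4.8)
The plan is to read the theorem off directly from the two lemmas just proved, so the remaining work is essentially a single division. First I would record that $OPT_{LP}(G_k)>0$: the graph $G$ is a genuine $2$-edge-connected graph with a fixed planar embedding, so any fractional tour of $G_k$ has cost at least the length of a shortest edge of the embedding of $G$, a positive constant independent of $k$; this makes dividing by $OPT_{LP}(G_k)$ legitimate and inequality-preserving. Next, the first lemma gives $OPT_{LP}(G_k)=c(x^*_k)\le c(E(G))$ for all $k$, so that
\begin{align*}
\frac{OPT(G_k)}{OPT_{LP}(G_k)}=\frac{c(T^*_k)}{OPT_{LP}(G_k)}\ge \frac{c(T^*_k)}{c(E(G))}
\end{align*}
for every $k$. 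Finally, taking the limit inferior as $k\to\infty$ and inserting the second lemma, $\lim_{k\to\infty}c(T^*_k)\ge c(E(G))+c(J)$, yields
\begin{align*}
\liminf_{k\to\infty}\frac{OPT(G_k)}{OPT_{LP}(G_k)}\ \ge\ \frac{c(E(G))+c(J)}{c(E(G))},
\end{align*}
which is the stated bound.

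Two small points of interpretation, neither an obstacle, round this out. The phrase \emph{converge to at least} is to be understood in the sense of the limit inferior above (every accumulation point of the sequence of integrality ratios is at least $\frac{c(E(G))+c(J)}{c(E(G))}$); since the subtour LP has integrality ratio at most $\frac{3}{2}$ by Wolsey, the sequence is bounded and one may, if desired, pass to a convergent subsequence. If one wants literal convergence it suffices to note in addition that $OPT_{LP}(G_k)\to c(E(G))$ — the matching lower bound following from the degree constraints, since for large $k$ the two cheapest edges at each subdivision vertex are the two incident ``consecutive'' edges, and these sum over each original edge of $G$ to roughly twice its length — but this refinement is not needed for the statement as given.

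The honest remark is that there is no real difficulty left at the level of this theorem: all of the substance was already absorbed into the proof of the second lemma, where the edges of an optimal tour of $G_k$ had to be partitioned into the finitely many ``long'' edges — shown to form a $T$-join and hence to cost at least $c(J)$ in the limit — and the ``short'' edges running between consecutive subdivision vertices, which reconstruct $c(E(G))$ up to an error controlled by $\delta_k$, $\epsilon_1$, $\epsilon_2$ and the minimum angle $\alpha$. Relative to that argument, the present theorem is pure bookkeeping.
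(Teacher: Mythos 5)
Your proposal is correct and follows exactly the paper's route: the paper derives the theorem directly from the two preceding lemmas (the upper bound $c(x^*_k)\leq c(E(G))$ and the lower bound $\lim_{k\to\infty}c(T^*_k)\geq c(E(G))+c(J)$) with no further argument, which is precisely the division and limit step you carry out. Your added remarks on positivity of $OPT_{LP}(G_k)$ and the $\liminf$ reading of ``converge to at least'' are harmless clarifications rather than deviations.
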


\begin{remark}
It is not possible to construct instances with higher integrality ratio than $\frac{4}{3}$ using this procedure. Let a planar embedding of a 2-connected graph $G$ be given. Consider the vector $y\in \R^{E(G)}$ with $y(e)=\frac{1}{3}$ for every $e\in E(G)$. Since the graph is 2-edge-connected, every cut $S$ that intersets an odd number of edges intersects $E(G)$ at least three times. Hence, for every odd cut $S$ the total $y$-value of $\delta(S)$ is at least $3\cdot\frac{1}{3}=1$. Therefore, $y$ lies in the $T$-join polytope and has cost $\frac{c(E(G))}{3}$. Thus, we have $c(J)\leq \frac{c(E(G))}{3}$. A similar construction of a vector in the $T$-join polytope was already used in \cite{monma1990minimum}.
\end{remark}

In the next section we will see concrete examples of $G_k$ whose integrality ratios converge to $\frac{4}{3}$.

\subsection{Applications}
Now, we apply the results of the last section to construct families of instances whose integrality ratios converge to $\frac{4}{3}$.

The \emph{tetrahedron instances} were already introduced in \cite{Hougardy2020}. It consists of the vertices $A,B,C$ forming an equilateral triangle with the center $M$. The sides of the triangle $AB,BC,CA$ and the segments $MA,MB,MC$ are subdivided equidistantly by $a_k$ and $b_k$ equidistant vertices, respectively (Figure \ref{tetrahedron}). Moreover, we have $a_k,b_k\to \infty$ as $k\to \infty$. We can apply Theorem \ref{int ratio convergence} to get another proof of Theorem 3.19 in \cite{Hougardy2020}, that the integrality ratio of this family converges to $\frac{4}{3}$ as $k\to \infty$: We take a complete graph $K_4$ and embed it to the Euclidean plane such that the vertices coincide with the vertices $A,B,C,M$ of the instance. Now, every vertex has an odd degree in $G$. Hence, a $T$-join has to correct the parity of every vertex and the cost of the $T$-join is at least $\dist(A,B)+\dist(C,M)$ which is by symmetry $\frac{c(E(G))}{3}$.

\begin{figure}
\centering
\begin{tikzpicture}[scale=1.35]
\def\n{7}
\def\m{6}
\def\sqrt3{1.73205}
\def\mycircle{circle(0.5mm)}

\pgfmathparse{\n - 1}
\foreach \i in {1,...,\pgfmathresult}
{
   \fill (\n     - \i / 2, \i * \sqrt3 / 2) \mycircle;
   \pgfmathparse{(\n - \i) * \sqrt3 / 2}
   \fill (\n / 2 - \i / 2,  \pgfmathresult) \mycircle;
   \fill (             \i,               0) \mycircle;
}

\pgfmathparse{\m - 1}
\foreach \j in {1,...,\pgfmathresult}
{
   \fill(     \j * \n / 2 / \m, \j * \n / 2 / \sqrt3 / \m) \mycircle;
   \fill(\n - \j * \n / 2 / \m, \j * \n / 2 / \sqrt3 / \m) \mycircle;
   \fill(               \n / 2, \n * \sqrt3 / 2 - \j * \n / \sqrt3 / \m) \mycircle;
}

\fill(\n / 2 ,  \n / 2 / \sqrt3) \mycircle node[anchor = south] {$M$};
\fill(\n, 0) \mycircle node[anchor = west] {$B$};
\fill(\n / 2,  \n * \sqrt3 / 2) \mycircle node[anchor = south] {$C$};
\fill(0, 0) \mycircle node[anchor = east] {$A$};
\end{tikzpicture}
\caption{The tetrahedron instance: The graph $G$ is a complete graph consisting of the vertices $A,B,C,M$. In the figure the edges are subdivided equidistantly by $a_k=6$ and $b_k=5$ vertices, respectively.}
\label{tetrahedron}
\end{figure}
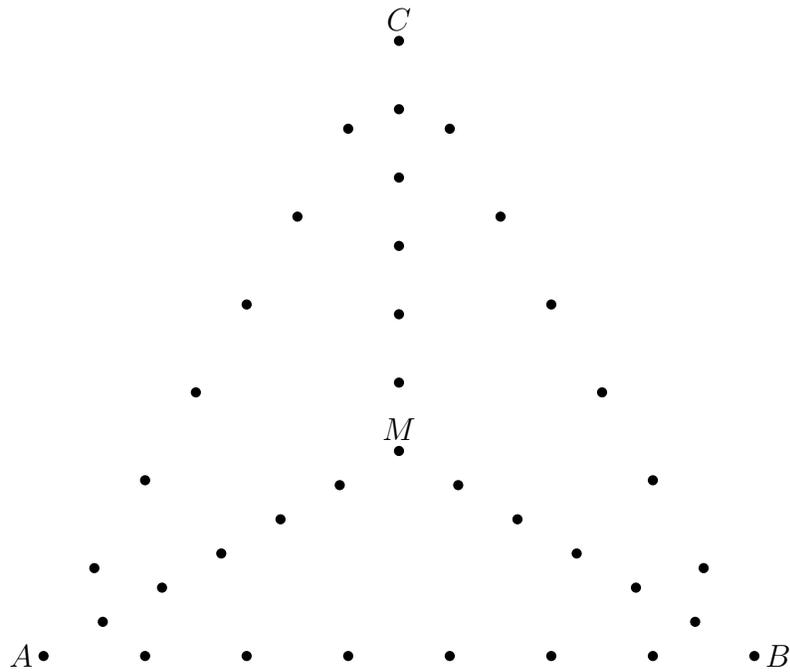

Another example are the \emph{hexagon instances}. In contrast to the tetrahedron instances we do not uniquely define the graph $G$ we subdivide. We only require that the vertices of $G$ form small regular hexagons that tesselate a subset of $\R^2$ (Figure \ref{hexagon}). Every vertex of $G$ not lying on the border of the tesselation has odd degree. Thus, every $T$-join is incident to each of these vertices. The cost of each path in the $T$-join can be bounded by the shortest distance between two distinct vertices, which is a side length of the hexagons. Hence, if we tesselate a subset of $\R^2$ where the number of vertices on the border is small compared to the total number of vertices, the cost of the $T$-join is at least roughly $\frac{c(E(G))}{3}$. We can take growing tesselations and subdivide the edges to get instances with integrality ratio converging to $\frac{4}{3}$.
\begin{figure}
        \centering
        \input{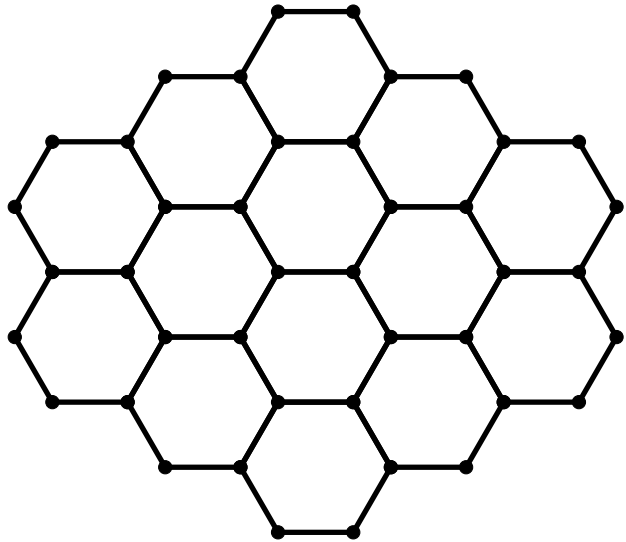}
  \caption{A possible graph $G$ for the hexagon instances. One possible construction of the instances $G_k$ is to subdivide every edge by $k$ equidistant vertices.}
  \label{hexagon}
\end{figure}

\section{Computing the Exact Integrality Ratio for Rectilinear TSP} \label{sec computation rect}
In this section we compute the exact integrality ratio for \textsc{Rectilinear TSP} instances with a small fixed number of vertices. For that we used Sylvia Boyd's list of the extremal points of the subtour polytope published on her homepage \cite{extremePoints}. This approach was already used to compute the exact integrality ratio of \textsc{Metric TSP} instances with $6\leq n\leq 12$ \cite{benoit2008finding,boyd2010structure}. 

Let $\mathcal{T}$ be the set of all tours for the vertex set $V(K_n)$. Given an extremal point $x$ of the subtour polytope we solve the corresponding program given by
\begin{align}
\max\ &f\\
\sum_{\{u,v\} \in T} \lvert u_x-v_x \rvert + \lvert u_y-v_y \rvert &\geq f && \text{for all~} T\in \mathcal{T} \label{cond tour length} \\
\sum_{u,v \in K_n} \left(\lvert u_x-v_x \rvert + \lvert u_y-v_y \rvert  \right) \cdot x(\{u,v\}) &\leq 1 \label{cond frac length}\\
u_x,u_y,f&\in \R && \text{for all~} u\in V(K_n)
\end{align}
The variables $u_x$ and $u_y$ for all $u\in V(K_n)$ represent the $x$- and $y$-coordinate of the vertex $u$ and $f$ represents the integrality ratio of the instance.

On the one hand, each of the programs gives a lower bound for the integrality ratio since condition (\ref{cond tour length}) ensures that the length of the optimal tour is at least $f$ and condition (\ref{cond frac length}) ensures that the cost of the optimal fractional tour is at most 1. Therefore, their ratio is at least $f$. On the other hand, the maximum value of the programs for all extremal points $x$ of the subtour LP is the maximum achievable integrality ratio, since the instance achieving the maximal integrality ratio can be scaled such that the optimal fractional tour has length 1. Thus, the largest $f$ of all programs corresponding to all extremal points with a fixed number of vertices is the exact integrality ratio.

For performance reasons we added the constraints $0 \leq f\leq 2$ and $0\leq u_x,u_y\leq 1$. This can be done without loss of generality since by Wolsey's analysis $f$ lies between 1 and $\frac{3}{2}$ \cite{wolsey1980heuristic}. Moreover, every instance with a fractional tour of cost at most 1 fits into the unit square since the fractional tour intersects every cut at least twice. Therefore, by translating this instance we can assume that it lies in the unit square.

Note that it is known that for $n\leq 5$ every optimal fractional tour is also integral and thus the integrality ratio is 1. For $6 \leq n \leq 10$ vertices and every extremal point $x$ of the subtour polytope the corresponding program has been solved using \texttt{Gurobi 8.11}. It took about a month of computation time to solve all 461 programs for the case $n=10$ with an \texttt{Intel i5-4670}. On the same machine a month of computation time was not sufficient to solve a single program for $n=11$.
 
The resulting instances are shown in Figure \ref{results rectilinear}. They have a similar structure as the instances maximizing the integrality ratio in the metric case. Their explicit coordinates will be given in Subsection \ref{sub sec instance Iijk} where we generalize these instances to higher $n$. 

\begin{figure}[!htb]
\begin{minipage}[t]{0.32\textwidth}
	\centering
\begin{tikzpicture}[scale=0.3]
    \node[circle,fill=black, inner sep=2pt] at (0, 0) {};
    \node[circle,fill=black, inner sep=2pt] at (0, 10) {};
    \node[circle,fill=black, inner sep=2pt] at (7.5, 0) {};
    \node[circle,fill=black, inner sep=2pt] at (7.5, 10) {};
    \node[circle,fill=black, inner sep=2pt] at (2.5, 5) {};
    \node[circle,fill=black, inner sep=2pt] at (5, 5) {};
\end{tikzpicture}
	\caption*{$n=6$, ratio=$\frac{18}{17}$}
	\end{minipage}
	\hfill%
	\begin{minipage}[t]{0.32\textwidth}
	\centering	
\begin{tikzpicture}[scale=0.3]
    \node[circle,fill=black, inner sep=2pt] at (5, 5) {};
    \node[circle,fill=black, inner sep=2pt] at (6.67, 10) {};
    \node[circle,fill=black, inner sep=2pt] at (6.67, 0) {};
    \node[circle,fill=black, inner sep=2pt] at (1.67, 5) {};
    \node[circle,fill=black, inner sep=2pt] at (0, 10) {};
    \node[circle,fill=black, inner sep=2pt] at (0, 0) {};
    \node[circle,fill=black, inner sep=2pt] at (3.33, 5) {};
\end{tikzpicture}
	\caption*{$n=7$, ratio=$\frac{13}{12}$}
	\end{minipage}
  \hfill%
	\begin{minipage}[t]{0.32\textwidth}
	\centering	
\begin{tikzpicture}[scale=0.3]
    \node[circle,fill=black, inner sep=2pt] at (3.75, 5) {};
    \node[circle,fill=black, inner sep=2pt] at (1.25, 5) {};
    \node[circle,fill=black, inner sep=2pt] at (6.25, 10) {};
    \node[circle,fill=black, inner sep=2pt] at (6.25, 0) {};
    \node[circle,fill=black, inner sep=2pt] at (0, 0) {};
    \node[circle,fill=black, inner sep=2pt] at (0, 10) {};
    \node[circle,fill=black, inner sep=2pt] at (5, 5) {};
    \node[circle,fill=black, inner sep=2pt] at (2.5, 5) {};
\end{tikzpicture}
	\caption*{$n=8$, ratio=$\frac{34}{31}$}
	\end{minipage}
\hfill%
\vspace{5mm}
\begin{minipage}[t]{0.32\textwidth}
	\centering
\begin{tikzpicture}[scale=0.3]
    \node[circle,fill=black, inner sep=2pt] at (3.08, 6.15) {};
    \node[circle,fill=black, inner sep=2pt] at (4.26, 6.15) {};
    \node[circle,fill=black, inner sep=2pt] at (3.85, 10) {};
    \node[circle,fill=black, inner sep=2pt] at (7.69, 0) {};
    \node[circle,fill=black, inner sep=2pt] at (0, 0) {};
    \node[circle,fill=black, inner sep=2pt] at (1.54, 6.15) {};
    \node[circle,fill=black, inner sep=2pt] at (0, 10) {};
    \node[circle,fill=black, inner sep=2pt] at (7.69, 10) {};
    \node[circle,fill=black, inner sep=2pt] at (6.15, 6.15) {};
\end{tikzpicture}
	\caption*{$n=9$, ratio=$\frac{31}{28}$}
	\end{minipage}
	  \hfill%
	\begin{minipage}[t]{0.32\textwidth}
	\centering	
	\begin{tikzpicture}[scale=0.3]
    \node[circle,fill=black, inner sep=2pt] at (6, 5) {};
    \node[circle,fill=black, inner sep=2pt] at (4, 5) {};
    \node[circle,fill=black, inner sep=2pt] at (5, 10) {};
    \node[circle,fill=black, inner sep=2pt] at (0, 0) {};
    \node[circle,fill=black, inner sep=2pt] at (10, 0) {};
    \node[circle,fill=black, inner sep=2pt] at (8, 5) {};
    \node[circle,fill=black, inner sep=2pt] at (10, 10) {};
    \node[circle,fill=black, inner sep=2pt] at (0, 10) {};
    \node[circle,fill=black, inner sep=2pt] at (2, 5) {};
    \node[circle,fill=black, inner sep=2pt] at (5, 0) {};
\end{tikzpicture}
	\caption*{$n=10$, ratio=$\frac{28}{25}$}
	\end{minipage}
	\caption{Instances with $n$ vertices maximizing the integrality ratio for \textsc{Rectilinear TSP}.}
	\label{results rectilinear}
\end{figure}

We see that vertices are not distributed equally on the three lines as in the metric case. Here we get a higher integrality ratio if the number of vertices is higher on the center line. In the following sections we will investigate this phenomenon and other structural properties of this family of instances.

\section{Integrality Ratio for Rectilinear TSP} \label{sec int ratio rectilinear}
In this section we construct a family of \textsc{Rectilinear TSP} instances with similar structure and properties as the results of the computations from Section \ref{sec computation rect} and analyze their integrality ratio.

\subsection{Structure of the Fractional Tours}
The optimal fractional tours of the instances maximizing the integrality ratio found in Section~\ref{sec computation rect} have the same form. In this section we describe and extend it to higher number of vertices.

Note that every fractional tour $x$ can be interpreted as a weighted complete graph where the vertex set consists of the vertices of the instance and the weight of the edge $e$ is equal to $x(e)$. 
We define the weighted graphs $x_{i,j,k}$ for nonnegative integers $i,j,k$ as follows:
The vertex set $V(x_{i,j,k}):=\{X_0,\dots, X_{i+1},Y_0, \dots, Y_{j+1}, Z_0, \dots,  Z_{k+1}\}$ consists of $i+j+k+6$ vertices. We set the weight of the edges $\{X_r,X_{r+1}\}, \{Y_s,Y_{s+1}\}, \{Z_t,Z_{t+1}\}$ for all $r\in \{0,\dots,  i\}, s\in \{0,\dots, j\}, t\in \{0,\dots, k\}$ to $1$. Moreover, we set the weight of the edges $\{X_0,Y_0\}, \{X_0,Z_0\}$, $\{Y_0,Z_0\}$, $\{X_{i+1},Y_{j+1}\}, \{X_{i+1},Z_{k+1}\}$ and $\{Y_{j+1},Z_{k+1}\}$ to $\frac{1}{2}$. All other edges have weight 0 (Figure \ref{Gijk}, the instance $I_{2,2,1}$ will be defined later).

 The optimal fractional tours of the instances maximizing the integrality ratio found in Section \ref{sec computation rect} are isomorphic to $x_{i,j,k}$ for some $i,j,k$.

\begin{figure}[ht]
\centering
\begin{tikzpicture}[line cap=round,line join=round,>=triangle 45,x=0.4cm,y=0.4cm]
\draw [line width=2pt] (80,0)-- (84,10);
\draw [line width=2pt] (84,10)-- (80,20);
\draw [line width=2pt] (80,20)-- (80,0);
\draw [line width=2pt] (100,0)-- (96,10);
\draw [line width=2pt] (96,10)-- (100,20);
\draw [line width=2pt] (100,20)-- (100,0);
\draw [line width=2pt] (100,0)-- (90,0);
\draw [line width=2pt] (90,0)-- (80,0);
\draw [line width=2pt] (84,10)-- (88,10);
\draw [line width=2pt] (88,10)-- (92,10);
\draw [line width=2pt] (92,10)-- (96,10);
\draw [line width=2pt] (80,20)-- (90,20);
\draw [line width=2pt] (90,20)-- (100,20);
\begin{scriptsize}
\draw [fill=black] (92,10) circle (2.5pt);
\draw[color=black] (92.31864641128108,10.865879379548424) node {$Y_2$};
\draw [fill=black] (88,10) circle (2.5pt);
\draw[color=black] (88.30650468370374,10.865879379548424) node {$Y_0$};
\draw [fill=black] (90,20) circle (2.5pt);
\draw[color=black] (90.31257554749241,20.856112281215985) node {$Z_1$};
\draw [fill=black] (80,0) circle (2.5pt);
\draw[color=black] (80.92234264582482,0.8756464778808611) node {$X_0$};
\draw [fill=black] (100,0) circle (2.5pt);
\draw[color=black] (100.70280844916001,0.8756464778808611) node {$X_2$};
\draw [fill=black] (96,10) circle (2.5pt);
\draw[color=black] (95.83078813885844,10.865879379548424) node {$Y_3$};
\draw [fill=black] (100,20) circle (2.5pt);
\draw[color=black] (100.30280844916001,20.856112281215985) node {$Z_2$};
\draw [fill=black] (80,20) circle (2.5pt);
\draw[color=black] (80.32234264582482,20.856112281215985) node {$Z_0$};
\draw [fill=black] (84,10) circle (2.5pt);
\draw[color=black] (84.33448437340216,10.865879379548424) node {$Y_1$};
\draw [fill=black] (90,0) circle (2.5pt);
\draw[color=black] (90.31257554749241,0.8756464778808611) node {$X_1$};
\end{scriptsize}
\end{tikzpicture}
  \caption{The instance $I_{2,2,1}$ with optimal fractional tour $x_{2,2,1}$. The straight and dashed edges have weights 1 and $\frac{1}{2}$ in $x_{2,2,1}$, respectively.}
  \label{Gijk}
\end{figure}
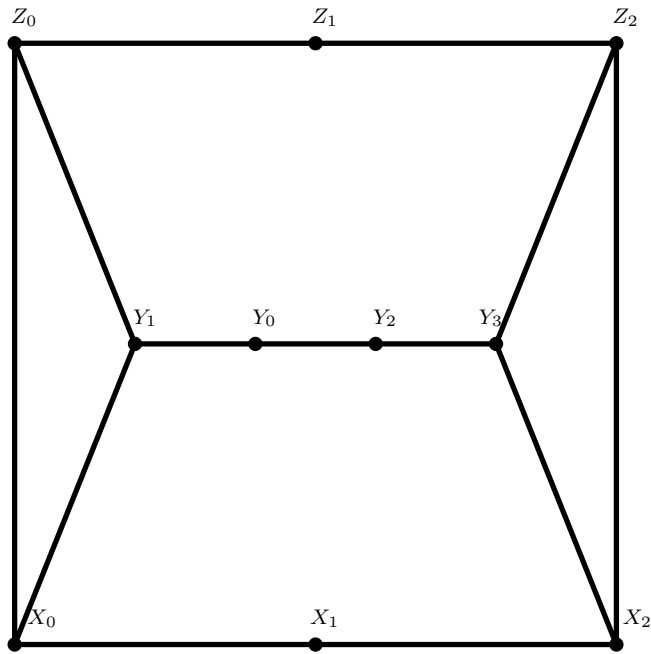

\subsection{Structure of the Optimal Tours}
In this subsection we describe the structure of the optimal tours of the instances maximizing the integrality ratio computed in Section \ref{sec computation rect} which is the motivation for the construction of the generalized instances $I_{i,j,k}^2$ in the next subsection.

A \emph{pseudo-tour} of a TSP instance is a closed walk that visits every vertex at least once. We first define a set of pseudo-tours $\mathfrak{T}$ which is the union of three sets of pseudo-tours $T^\uparrow, T^\circ, T^\downarrow$ and the pseudo-tours $T^\nwarrow, T^\nearrow, T^\leftarrow, T^\rightarrow, T^\swarrow, T^\searrow$.

Let $T^\uparrow:=\{T_0^\uparrow,\dots, T_k^\uparrow\}$ be a set of pseudo-tours where the pseudo-tour $T_l^\uparrow$ for some $0\leq l \leq k$ consists of (Figure \ref{TupM}):
\begin{itemize}
\item two copies of the edges $\{Z_s,Z_{s+1}\}$ for all $0\leq s \leq k, s\neq l$
\item a copy of the edges $\{Y_s,Y_{s+1}\}$ for all $0\leq s \leq j$
\item a copy of the edges $\{X_s,X_{s+1}\}$ for all $0\leq s \leq i$
\item a copy of the edges $\{Z_0,Y_0\}$, $\{Z_0,X_0\}$, $\{Z_{k+1},Y_{j+1}\}$ and $\{Z_{k+1},X_{i+1}\}$
\end{itemize}

\begin{figure}[!htb]
\begin{minipage}[t]{0.45\textwidth}
	\centering
	\begin{tikzpicture}[line cap=round,line join=round,>=triangle 45,x=0.3cm,y=0.3cm]
\draw [line width=2pt] (84,10)-- (80,20);
\draw [line width=2pt] (80,20)-- (80,0);
\draw [line width=2pt] (100,20.080242834551548)-- (100,0);
\draw [line width=2pt] (100,0)-- (90,0);
\draw [line width=2pt] (90,0)-- (80,0);
\draw [line width=2pt] (84,10)-- (88,10);
\draw [line width=2pt] (88,10)-- (92,10);
\draw [line width=2pt] (92,10)-- (96,10);
\draw [line width=2pt] (90,20.080242834551548)-- (100,20.080242834551548);
\draw [line width=2pt] (100,20.080242834551548)-- (96,10);
\draw [shift={(95,10)},line width=2pt]  plot[domain=1.1103379472689017:2.0312547063208917,variable=\t]({1*11.252168484497902*cos(\t r)+0*11.252168484497902*sin(\t r)},{0*11.252168484497902*cos(\t r)+1*11.252168484497902*sin(\t r)});
\begin{scriptsize}
\draw [fill=black] (92,10) circle (2.5pt);
\draw[color=black] (92.78601426973307,8.733752621016578) node {$Y_2$};
\draw [fill=black] (88,10) circle (2.5pt);
\draw[color=black] (88.76678131491627,8.733752621016578) node {$Y_1$};
\draw [fill=black] (90,20.080242834551548) circle (2.5pt);
\draw[color=black] (89.6248198108884,21.423690377235822) node {$Z_1$};
\draw [fill=black] (80,0) circle (2.5pt);
\draw[color=black] (81.04443485116717,1.1468859197894115) node {$X_0$};
\draw [fill=black] (100,0) circle (2.5pt);
\draw[color=black] (101.05027978356983,1.1468859197894115) node {$X_2$};
\draw [fill=black] (96,10) circle (2.5pt);
\draw[color=black] (96.71492738286858,8.733752621016578) node {$Y_3$};
\draw [fill=black] (100,20.080242834551548) circle (2.5pt);
\draw[color=black] (100.50836073348218,21.423690377235822) node {$Z_2$};
\draw [fill=black] (80,20) circle (2.5pt);
\draw[color=black] (80.59283564276079,21.423690377235822) node {$Z_0$};
\draw [fill=black] (84,10) circle (2.5pt);
\draw[color=black] (84.70238843925885,8.733752621016578) node {$Y_0$};
\draw [fill=black] (90,0) circle (2.5pt);
\draw[color=black] (90.48285830686052,1.1468859197894115) node {$X_1$};
\end{scriptsize}
\end{tikzpicture}
	\caption{The pseudo-tour $T^\uparrow_0$.} 
	\label{TupM}
	\end{minipage}
	\hfill%
	\begin{minipage}[t]{0.45\textwidth}
	\centering	
	\begin{tikzpicture}[line cap=round,line join=round,>=triangle 45,x=0.3cm,y=0.3cm]
\draw [line width=2pt] (84,10)-- (80,20);
\draw [line width=2pt] (100,0)-- (90,0);
\draw [line width=2pt] (90,0)-- (80,0);
\draw [line width=2pt] (88,10)-- (92,10);
\draw [line width=2pt] (92,10)-- (96,10);
\draw [line width=2pt] (100,20.080242834551548)-- (96,10);
\draw [line width=2pt] (80,20)-- (100,20.080242834551548);
\draw [line width=2pt] (80,0)-- (84,10);
\draw [line width=2pt] (96,10)-- (100,0);
\draw [shift={(90,7.8982940854647765)},line width=2pt]  plot[domain=0.810189086976861:2.3314035666129325,variable=\t]({1*2.9012355559644485*cos(\t r)+0*2.9012355559644485*sin(\t r)},{0*2.9012355559644485*cos(\t r)+1*2.9012355559644485*sin(\t r)});
\draw [shift={(94,7.8982940854647765)},line width=2pt]  plot[domain=0.810189086976861:2.3314035666129325,variable=\t]({1*2.9012355559644485*cos(\t r)+0*2.9012355559644485*sin(\t r)},{0*2.9012355559644485*cos(\t r)+1*2.9012355559644485*sin(\t r)});
\begin{scriptsize}
\draw [fill=black] (92,10) circle (2.5pt);
\draw[color=black] (92.65053450721116,8.733752621016578) node {$Y_2$};
\draw [fill=black] (88,10) circle (2.5pt);
\draw[color=black] (88.76678131491627,8.733752621016578) node {$Y_1$};
\draw [fill=black] (90,20.080242834551548) circle (2.5pt);
\draw[color=black] (89.6248198108884,21.423690377235822) node {$Z_1$};
\draw [fill=black] (80,0) circle (2.5pt);
\draw[color=black] (81.24443485116715,1.1468859197894115) node {$X_0$};
\draw [fill=black] (100,0) circle (2.5pt);
\draw[color=black] (101.05027978356983,1.1468859197894115) node {$X_2$};
\draw [fill=black] (96,10) circle (2.5pt);
\draw[color=black] (97.26008730370921,8.733752621016578) node {$Y_3$};
\draw [fill=black] (100,20.080242834551548) circle (2.5pt);
\draw[color=black] (100.50836073348218,21.423690377235822) node {$Z_2$};
\draw [fill=black] (80,20) circle (2.5pt);
\draw[color=black] (80.59283564276078,21.423690377235822) node {$Z_0$};
\draw [fill=black] (84,10) circle (2.5pt);
\draw[color=black] (84.70238843925885,8.733752621016578) node {$Y_0$};
\draw [fill=black] (90,0) circle (2.5pt);
\draw[color=black] (90.48285830686052,1.1468859197894115) node {$X_1$};
\end{scriptsize}
\end{tikzpicture}
	\caption{The pseudo-tour $T^\circ_0$.}
	\label{TmidM}	
	\end{minipage}
\end{figure}
\begin{figure}[!htb]
\begin{minipage}[t]{0.45\textwidth}
	\centering
	\begin{tikzpicture}[line cap=round,line join=round,>=triangle 45,x=0.3cm,y=0.3cm]
\draw [line width=2pt] (84,10)-- (80,20);
\draw [line width=2pt] (100,0)-- (90,0);
\draw [line width=2pt] (90,0)-- (80,0);
\draw [line width=2pt] (88,10)-- (92,10);
\draw [line width=2pt] (92,10)-- (96,10);
\draw [line width=2pt] (80,20)-- (100,20.080242834551548);
\draw [line width=2pt] (96,10)-- (100,0);
\draw [line width=2pt] (80,20)-- (80,0);
\draw [line width=2pt] (84,10)-- (88,10);
\draw [shift={(95,10)},line width=2pt]  plot[domain=1.1103379472689017:2.0312547063208917,variable=\t]({1*11.252168484497902*cos(\t r)+0*11.252168484497902*sin(\t r)},{0*11.252168484497902*cos(\t r)+1*11.252168484497902*sin(\t r)});
\draw [shift={(85.08056478017639,10)},line width=2pt]  plot[domain=1.1167725497075345:2.040868326354815,variable=\t]({1*11.216601021948174*cos(\t r)+0*11.216601021948174*sin(\t r)},{0*11.216601021948174*cos(\t r)+1*11.216601021948174*sin(\t r)});
\begin{scriptsize}
\draw [fill=black] (92,10) circle (2.5pt);
\draw[color=black] (92.78601426973306,8.733752621016578) node {$Y_2$};
\draw [fill=black] (88,10) circle (2.5pt);
\draw[color=black] (88.76678131491627,8.733752621016578) node {$Y_1$};
\draw [fill=black] (90,20.080242834551548) circle (2.5pt);
\draw[color=black] (89.6248198108884,21.423690377235822) node {$Z_1$};
\draw [fill=black] (80,0) circle (2.5pt);
\draw[color=black] (81.04443485116715,1.1468859197894115) node {$X_0$};
\draw [fill=black] (100,0) circle (2.5pt);
\draw[color=black] (101.05027978356985,1.1468859197894115) node {$X_2$};
\draw [fill=black] (96,10) circle (2.5pt);
\draw[color=black] (97.26008730370923,8.733752621016578) node {$Y_3$};
\draw [fill=black] (100,20.080242834551548) circle (2.5pt);
\draw[color=black] (100.50836073348219,21.423690377235822) node {$Z_2$};
\draw [fill=black] (80,20) circle (2.5pt);
\draw[color=black] (80.59283564276078,21.423690377235822) node {$Z_0$};
\draw [fill=black] (84,10) circle (2.5pt);
\draw[color=black] (84.70238843925884,8.733752621016578) node {$Y_0$};
\draw [fill=black] (90,0) circle (2.5pt);
\draw[color=black] (90.48285830686052,1.1468859197894115) node {$X_1$};
\end{scriptsize}
\end{tikzpicture}
	\caption{The pseudo-tour $T^\nwarrow$.} 
	\label{TnwM}
	\end{minipage}
  \hfill%
	\begin{minipage}[t]{0.45\textwidth}
	\centering
	\begin{tikzpicture}[line cap=round,line join=round,>=triangle 45,x=0.3cm,y=0.3cm]
\draw [line width=2pt] (84,10)-- (80,20);
\draw [line width=2pt] (100,0)-- (90,0);
\draw [line width=2pt] (90,0)-- (80,0);
\draw [line width=2pt] (88,10)-- (92,10);
\draw [line width=2pt] (92,10)-- (96,10);
\draw [line width=2pt] (80,20)-- (100,20.080242834551548);
\draw [line width=2pt] (80,0)-- (84,10);
\draw [shift={(90,6.949935747811381)},line width=2pt]  plot[domain=0.9904077994261431:2.1511848541636502,variable=\t]({1*3.6473129756683784*cos(\t r)+0*3.6473129756683784*sin(\t r)},{0*3.6473129756683784*cos(\t r)+1*3.6473129756683784*sin(\t r)});
\draw [shift={(94,6.949935747811381)},line width=2pt]  plot[domain=0.9904077994261431:2.1511848541636502,variable=\t]({1*3.6473129756683784*cos(\t r)+0*3.6473129756683784*sin(\t r)},{0*3.6473129756683784*cos(\t r)+1*3.6473129756683784*sin(\t r)});
\draw [line width=2pt] (84,10)-- (88,10);
\draw [shift={(86,6.949935747811381)},line width=2pt]  plot[domain=0.9904077994261431:2.1511848541636502,variable=\t]({1*3.6473129756683784*cos(\t r)+0*3.6473129756683784*sin(\t r)},{0*3.6473129756683784*cos(\t r)+1*3.6473129756683784*sin(\t r)});
\draw [line width=2pt] (100,20.080242834551548)-- (100,0);
\begin{scriptsize}
\draw [fill=black] (92,10) circle (2.5pt);
\draw[color=black] (92.78601426973307,8.733752621016578) node {$Y_2$};
\draw [fill=black] (88,10) circle (2.5pt);
\draw[color=black] (88.76678131491627,8.733752621016578) node {$Y_1$};
\draw [fill=black] (90,20.080242834551548) circle (2.5pt);
\draw[color=black] (89.6248198108884,21.423690377235822) node {$Z_1$};
\draw [fill=black] (80,0) circle (2.5pt);
\draw[color=black] (81.24443485116717,1.1468859197894115) node {$X_0$};
\draw [fill=black] (100,0) circle (2.5pt);
\draw[color=black] (101.05027978356983,1.1468859197894115) node {$X_2$};
\draw [fill=black] (96,10) circle (2.5pt);
\draw[color=black] (96.76008730370921,8.733752621016578) node {$Y_3$};
\draw [fill=black] (100,20.080242834551548) circle (2.5pt);
\draw[color=black] (100.50836073348218,21.423690377235822) node {$Z_2$};
\draw [fill=black] (80,20) circle (2.5pt);
\draw[color=black] (80.59283564276079,21.423690377235822) node {$Z_0$};
\draw [fill=black] (84,10) circle (2.5pt);
\draw[color=black] (84.70238843925885,8.733752621016578) node {$Y_0$};
\draw [fill=black] (90,0) circle (2.5pt);
\draw[color=black] (90.48285830686052,1.1468859197894115) node {$X_1$};
\end{scriptsize}
\end{tikzpicture}
	\caption{The pseudo-tour $T^\leftarrow$.}
	\label{TleftM}
	\end{minipage}
\end{figure}
We also define the sets of pseudo-tours $T^\circ:=\{T_0^\circ,\dots, T_j^\circ\}$ and $T^\downarrow:=\{T_0^\downarrow, \dots, T_i^\downarrow\}$. Each of the tours $T_l^\circ$ (Figure \ref{TmidM}) and $T_l^\downarrow$ are defined similarly: Instead of $\{Z_s,Z_{s+1}\}$ we take two copies of the edges $\{Y_s,Y_{s+1}\}$ and $\{X_s,X_{s+1}\}$ except $\{Y_l,Y_{l+1}\}$ and $\{X_l,X_{l+1}\}$, respectively.

The tour $T^\nwarrow$ consists of (Figure \ref{TnwM}):
\begin{itemize}
\item two copies of the edges $\{Z_s,Z_{s+1}\}$ for all $0\leq s \leq k$
\item a copy of the edges $\{Y_s,Y_{s+1}\}$ for all $0\leq s \leq j$
\item a copy of the edges $\{X_s,X_{s+1}\}$ for all $0\leq s \leq i$
\item a copy of the edges $\{Z_0,Y_0\}$, $\{Z_0,X_0\}$ and $\{Y_{j+1},X_{i+1}\}$
\end{itemize}

The pseudo-tour $T^\nearrow$ is defined similarly. Instead of the edges $\{Z_0,Y_0\}$, $\{Z_0,X_0\}$ and $\{Y_{j+1},X_{i+1}\}$ it consists of the edges $\{Z_{k+1},Y_{j+1}\}$, $\{Z_{k+1},X_{i+1}\}$ and $\{Y_{0},X_{0}\}$.
We also define the pseudo-tours $T^\leftarrow$ (Figure \ref{TleftM}), $T^\rightarrow, T^\swarrow, T^\searrow$ similarly where we double the edges $\{Y_s,Y_{s+1}\}$ or $\{X_s,X_{s+1}\}$ instead of $\{Z_s,Z_{s+1}\}$.

We observe that the optimal tours of the instances maximizing the integrality ratio for $6\leq n\leq 10$ computed in Section \ref{sec computation rect} are the non-intersecting shortcuts of the pseudo-tours in $\mathfrak{T}$. In the next subsection the instances $I_{i,j,k}^2$ are constructed such that the vertices lie on three lines, are symmetric and the optimal tours are the non-intersection shortcuts of pseudo-tours in $\mathfrak{T}$. 
Similar properties also hold for other variants of the TSP. For example, the shortcuts of the pseudo-tours in $\mathfrak{T}$ are the optimal tours of the instances maximizing the integrality ratio in the metric case for $6\leq n \leq 12$ given in \cite{benoit2008finding}.

\subsection{The Instance $I^2_{i,j,k}$} \label{sub sec instance Iijk}
We define an embedding of $x_{i,j,k}$ in $\R^2$ (Figure \ref{Gijk}) as follows: The vertices $\{X_0, \dots, X_{i+1}\}$, $\{Y_0, \dots, Y_{j+1}\}$ and $\{Z_0, \dots, Z_{k+1}\}$ lie on the three parallel lines $l_1,l_2$ and $l_3$, respectively. The line $l_2$ lies between $l_1$ and $l_3$ in the plane. Moreover, $l_1,l_2$ and $l_2,l_3$ have distances $b_1:=\frac{1}{2}+\frac{j+1}{j+3}\left(\frac{1}{k+1}-\frac{1}{2}\right)$ and $b_2:=\frac{1}{2}+\frac{j+1}{j+3}\left(\frac{1}{i+1}-\frac{1}{2}\right)$ to each other, respectively. The vertices $X_0,X_{i+1},Z_0$ and $Z_{k+1}$ form an axis-parallel rectangle with side lengths 1 and $1+\frac{j+1}{j+3}\left(\frac{1}{i+1}+\frac{1}{k+1}-1\right)$. Let $Y'_l$ and $Z'_l$ be the orthogonal projection of $Y_l$ and $Z_l$ to the line $X_0X_{i+1}$, respectively. We call a sequence of points $v_1, \dots, v_s$ an \emph{equidistant progression} if $\dist_1(v_l,v_{l+1})=\dist_1(v_1,v_2)$ for all $1 \leq l \leq s-1$. Then $X_0, Y'_0, Y'_1, \dots, Y'_{j+1}, X_{i+1}$ and $Y'_0,X_1,X_2, \dots, X_{i},Y_{j+1}'$ and $Y'_0,Z'_1,Z'_2, \dots, Z'_{k},Y'_{j+1}$ are three equidistant progressions.

Note that this embedding also defines a \textsc{Rectilinear TSP} instance we call $I_{i,j,k}^2$. The explicit coordinates of the vertices are given by:
\begin{align*}
X_0&=(0,0)\\
X_{i+1}&=(1,0)\\
Z_0&=\left(0, 1+\frac{j+1}{j+3}
\left(\frac{1}{i+1}+\frac{1}{k+1}-1 \right) \right)\\
Z_{k+1}&=\left(1, 1+\frac{j+1}{j+3}
\left(\frac{1}{i+1}+\frac{1}{k+1}-1 \right) \right)\\
X_s&=\left(s\cdot\frac{j+1}{j+3}\cdot\frac{1}{i+1}+\frac{1}{j+3},0\right) & \forall 1&\leq s \leq i, \\
Y_s&=\left(\frac{(s+1)}{j+3},\frac{1}{2}+\frac{j+1}{j+3}\left(\frac{1}{k+1}-\frac{1}{2}\right)\right) & \forall 0&\leq s \leq j+1,\\
Z_s&=\left(s\cdot\frac{j+1}{j+3}\cdot\frac{1}{k+1}+\frac{1}{j+3},1+\frac{j+1}{j+3}
\left(\frac{1}{i+1}+\frac{1}{k+1}-1\right)\right) & \forall 1&\leq s \leq k.
\end{align*}

We call the vertices $X_0, \dots, X_{i+1}, Z_0, \dots, Z_{k+1}$ the \emph{outer vertices} and $Y_0, \dots, Y_{j+1}$ the \emph{inner vertices}. 

Revising the instances that maximize the integrality ratio from Section \ref{sec computation rect} we see that they can be transformed to $I_{i,j,k}^2$ for some $i,j$ and $k$ by scaling, rotating and translating. Moreover, their optimal fractional tours are isomorphic to $x_{i,j,k}$ and their optimal tours are the non-intersecting shortcuts of the pseudo-tours in $\mathfrak{T}$ (Figure \ref{non intersect shortcuts}). The values of $i,j,k$ and the corresponding integrality ratios are listed in Table \ref{comp result}.

\begin{table}
\center
{\renewcommand*\arraystretch{1.2}
\begin{tabular}{l|l|l|l}
$n$ & Instance & Opt. frac. tour & Integrality ratio \\
\hline
6 & $I_{0,0,0}$ & $x_{0,0,0}$ & $\frac{18}{17}\approx 1.059$ \\
7 & $I_{0,1,0}$ & $x_{0,1,0}$ & $\frac{13}{12}\approx 1.083$ \\
8 & $I_{0,1,1}$ & $x_{0,1,1}$ & $\frac{34}{31}\approx 1.097$ \\
9 & $I_{0,2,1}$ & $x_{0,2,1}$ & $\frac{31}{28}\approx 1.107$ \\
10 & $I_{1,2,1}$ & $x_{1,2,1}$ & $\frac{28}{25}=1.120$
\end{tabular}
}
\caption{The instances maximizing the integrality ratio computed in Section \ref{sec computation rect} with their corresponding fractional optimal tours and integrality ratios.}
\label{comp result}
\end{table}

\begin{figure}[!htb]
\begin{minipage}[t]{0.32\textwidth}
	\centering
	\begin{tikzpicture}[line cap=round,line join=round,>=triangle 45,x=0.2cm,y=0.2cm]
\draw [line width=2pt] (84,10)-- (80,20);
\draw [line width=2pt] (80,20)-- (80,0);
\draw [line width=2pt] (100,20.080242834551548)-- (100,0);
\draw [line width=2pt] (100,0)-- (90,0);
\draw [line width=2pt] (90,0)-- (80,0);
\draw [line width=2pt] (84,10)-- (88,10);
\draw [line width=2pt] (88,10)-- (92,10);
\draw [line width=2pt] (92,10)-- (96,10);
\draw [line width=2pt] (90,20.080242834551548)-- (100,20.080242834551548);
\draw [line width=2pt] (96,10)-- (90,20.080242834551548);
\begin{scriptsize}
\draw [fill=black] (92,10) circle (2.5pt);
\node[label=below:$Y_2$] at (92,10) {};
\draw [fill=black] (88,10) circle (2.5pt);
\node[label=below:$Y_1$] at (88,10) {};
\draw [fill=black] (90,20.080242834551548) circle (2.5pt);
\node[label=above:$Z_1$] at (90,20.080242834551548) {};
\draw [fill=black] (80,0) circle (2.5pt);
\node[label=80:$X_0$] at (80,0) {};
\draw [fill=black] (100,0) circle (2.5pt);
\node[label=80:$X_2$] at (100,0) {};
\draw [fill=black] (96,10) circle (2.5pt);
\node[label=below:$Y_3$] at (96,10) {};
\draw [fill=black] (100,20.080242834551548) circle (2.5pt);
\node[label=above:$Z_2$] at (100,20.080242834551548) {};
\draw [fill=black] (80,20) circle (2.5pt);
\node[label=above:$Z_0$] at (80,20) {};
\draw [fill=black] (84,10) circle (2.5pt);
\node[label=below:$Y_0$] at (84,10) {};
\draw [fill=black] (90,0) circle (2.5pt);
\node[label=above:$X_1$] at (90,0) {};
\end{scriptsize}
\end{tikzpicture}
	\caption*{The non-intersecting shortcut of $T^\uparrow_1$ for $I_{1,2,1}$.} 
	\label{Tup}
	\end{minipage}
	\hfill%
	\begin{minipage}[t]{0.32\textwidth}
	\centering	
	\begin{tikzpicture}[line cap=round,line join=round,>=triangle 45,x=0.2cm,y=0.2cm]
\draw [line width=2pt] (84,10)-- (80,20);
\draw [line width=2pt] (100,0)-- (90,0);
\draw [line width=2pt] (90,0)-- (80,0);
\draw [line width=2pt] (92,10)-- (96,10);
\draw [line width=2pt] (90,20.080242834551548)-- (100,20.080242834551548);
\draw [line width=2pt] (80,20)-- (90,20.080242834551548);
\draw [line width=2pt] (84,10)-- (88,10);
\draw [line width=2pt] (80,0)-- (88,10);
\draw [line width=2pt] (100,0)-- (96,10);
\draw [line width=2pt] (92,10)-- (100,20.080242834551548);
\begin{scriptsize}
\draw [fill=black] (92,10) circle (2.5pt);
\node[label=below:$Y_2$] at (92,10) {};
\draw [fill=black] (88,10) circle (2.5pt);
\node[label=below:$Y_1$] at (88,10) {};
\draw [fill=black] (90,20.080242834551548) circle (2.5pt);
\node[label=above:$Z_1$] at (90,20.080242834551548) {};
\draw [fill=black] (80,0) circle (2.5pt);
\node[label=above:$X_0$] at (80,0) {};
\draw [fill=black] (100,0) circle (2.5pt);
\node[label=80:$X_2$] at (100,0) {};
\draw [fill=black] (96,10) circle (2.5pt);
\node[label=below:$Y_3$] at (96,10) {};
\draw [fill=black] (100,20.080242834551548) circle (2.5pt);
\node[label=above:$Z_2$] at (100,20.080242834551548) {};
\draw [fill=black] (80,20) circle (2.5pt);
\node[label=above:$Z_0$] at (80,20) {};
\draw [fill=black] (84,10) circle (2.5pt);
\node[label=below:$Y_0$] at (84,10) {};
\draw [fill=black] (90,0) circle (2.5pt);
\node[label=above:$X_1$] at (90,0) {};
\end{scriptsize}
\end{tikzpicture}
	\caption*{A non-intersecting shortcut of $T^\circ_2$ for $I_{1,2,1}$. All shortcuts of $T^\circ_2$ are non-intersecting.}
	\label{Tmid}	
	\end{minipage}
	\hfill%
	\begin{minipage}[t]{0.32\textwidth}
	\centering	
	\begin{tikzpicture}[line cap=round,line join=round,>=triangle 45,x=0.2cm,y=0.2cm]
\draw [line width=2pt] (84,10)-- (80,20);
\draw [line width=2pt] (100,0)-- (90,0);
\draw [line width=2pt] (90,0)-- (80,0);
\draw [line width=2pt] (92,10)-- (96,10);
\draw [line width=2pt] (90,20.080242834551548)-- (100,20.080242834551548);
\draw [line width=2pt] (80,20)-- (90,20.080242834551548);
\draw [line width=2pt] (84,10)-- (88,10);
\draw [line width=2pt] (100,0)-- (100,20.080242834551548);
\draw [line width=2pt] (80,0)-- (96,10);
\draw [line width=2pt] (88,10)-- (92,10);
\begin{scriptsize}
\draw [fill=black] (92,10) circle (2.5pt);
\node[label=250:$Y_2$] at (92,10) {};
\draw [fill=black] (88,10) circle (2.5pt);
\node[label=below:$Y_1$] at (88,10) {};
\draw [fill=black] (90,20.080242834551548) circle (2.5pt);
\node[label=above:$Z_1$] at (90,20.080242834551548) {};
\draw [fill=black] (80,0) circle (2.5pt);
\node[label=above:$X_0$] at (80,0) {};
\draw [fill=black] (100,0) circle (2.5pt);
\node[label=80:$X_2$] at (100,0) {};
\draw [fill=black] (96,10) circle (2.5pt);
\node[label=below:$Y_3$] at (96,10) {};
\draw [fill=black] (100,20.080242834551548) circle (2.5pt);
\node[label=above:$Z_2$] at (100,20.080242834551548) {};
\draw [fill=black] (80,20) circle (2.5pt);
\node[label=above:$Z_0$] at (80,20) {};
\draw [fill=black] (84,10) circle (2.5pt);
\node[label=below:$Y_0$] at (84,10) {};
\draw [fill=black] (90,0) circle (2.5pt);
\node[label=above:$X_1$] at (90,0) {};
\end{scriptsize}
\end{tikzpicture}
	\caption*{A non-intersecting shortcut of $T^\leftarrow$ for $I_{1,2,1}$. All shortcuts of $T^\leftarrow$ are non-intersecting.}
	\label{Tleft}	
	\end{minipage}
	\caption{Some of the non-intersecting shortcuts of pseudo-tours in $\mathfrak{T}$.}
	\label{non intersect shortcuts}
\end{figure}

We observe that for $I_{i,j,k}^2$ there is a unique non-intersecting shortcut of pseudo-tours in $T^\uparrow$ and $T^\downarrow$ while all shortcuts of pseudo-tours in $T^\circ, T^\leftarrow, T^\rightarrow$ are non-intersecting. Moreover, the non-intersecting shortcuts of $T^\nwarrow, T^\nearrow, T^\swarrow$ and $T^\searrow$ are also non-intersecting shortcuts of $T^\leftarrow$ and $T^\rightarrow$. Hence, the set of optimal tours are the non-intersecting shortcuts of $T^\uparrow,T^\downarrow$ and all shortcuts of $T^\circ, T^\leftarrow, T^\rightarrow$.

For $n$ fixed let the instance $I^2_n:=I^2_{i^*,j^*,k^*}$ maximize the integrality ratio among all instances $I_{i,j,k}$ with $i+j+k+6=n$. 

\subsection{Length of the Optimal Tours for $I_{i,j,k}^2$}
In this subsection we determine the length of the optimal tours for $I_{i,j,k}^2$. This will be used to compute lower bounds on the integrality ratio of the instances in the next subsection.

A \emph{subpath} of an oriented tour consists of vertices $v_1,\dots,v_l$, such that $v_{i+1}$ is visited 
by the tour immediately after $v_i$ for all $i=1,\dots, l-1$. 
A subpath of a tour starting and ending at outer vertices and containing no other outer vertex is called a \emph{trip} 
if it contains at least one inner vertex. 

By Lemma \ref{lemma:convexhull}, we know that each optimal tour of $I_{i,j,k}^2$ can be decomposed into a set of trips and a set of edges connecting consecutive outer vertices such that all inner vertices are contained in some trip and two different trips intersect in at most one outer vertex.

\begin{lemma} \label{length optimal tour Iijk}
The length of the optimal tour for $I_{i,j,k}^2$ is $4+2b_1+2b_2-\frac{2}{j+3}$.
\end{lemma}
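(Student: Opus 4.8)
I would prove the two bounds $c(T^*)\le 4+2b_1+2b_2-\frac{2}{j+3}$ and $c(T^*)\ge 4+2b_1+2b_2-\frac{2}{j+3}$ separately, where $T^*$ is an optimal tour of $I^2_{i,j,k}$. The upper bound is obtained by exhibiting one concrete tour; the lower bound uses Lemma~\ref{lemma:convexhull} to decompose every tour into trips and then estimates the length of each trip by splitting the $\dist_1$-length into its horizontal and vertical components.

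\textbf{Upper bound.} I would take the non-intersecting shortcut of the pseudo-tour $T^\leftarrow$: it traverses the boundary of the convex-hull rectangle with corners $X_0,X_{i+1},Z_{k+1},Z_0$, except that the side $\{Z_0,X_0\}$ is replaced by the path $Z_0,Y_0,Y_1,\dots,Y_{j+1},X_0$ through all inner vertices. From the explicit coordinates one reads off $\dist_1(Z_0,Y_0)=\frac1{j+3}+b_2$, total length $\frac{j+1}{j+3}$ for the inner chain $Y_0,\dots,Y_{j+1}$, and $\dist_1(Y_{j+1},X_0)=\frac{j+2}{j+3}+b_1$, so this path has length $b_1+b_2+\frac{2j+4}{j+3}=b_1+b_2+2-\frac{2}{j+3}$, while $\dist_1(Z_0,X_0)=b_1+b_2$. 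Since the remaining boundary (the two sides of length $1$ and the side $\{Z_{k+1},X_{i+1}\}$ of length $b_1+b_2$) contributes $2+b_1+b_2$, the total length is $2+b_1+b_2+\bigl(b_1+b_2+2-\frac{2}{j+3}\bigr)=4+2b_1+2b_2-\frac{2}{j+3}$. One checks directly that this curve does not cross itself.

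\textbf{Lower bound.} By Lemma~\ref{lemma:convexhull} an optimal tour visits the outer vertices in the cyclic order $X_0,X_1,\dots,X_{i+1},Z_{k+1},\dots,Z_0$, so it decomposes into arcs, one per pair of outer vertices consecutive in this order, each arc being either the direct edge or a trip through a nonempty set of inner vertices. Writing $P=2+2b_1+2b_2$ for the rectangle perimeter (here one uses that the $X$'s, resp.\ the $Z$'s, lie in order on one side of the rectangle), we get $c(T^*)=P+\sum_{\tau}\bigl(\ell(\tau)-\dist_1(u_\tau,v_\tau)\bigr)$, the sum over the trips $\tau$ with endpoints $u_\tau,v_\tau$, and it suffices to show $\sum_\tau\bigl(\ell(\tau)-\dist_1(u_\tau,v_\tau)\bigr)\ge 2-\frac{2}{j+3}$. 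Splitting each trip length into horizontal and vertical travel: a trip whose endpoints both lie on the $X$-line must reach the $Y$-line and return, so its vertical travel exceeds $\dist_1(u_\tau,v_\tau)$ (purely horizontal) by at least $2b_1$; analogously $2b_2$ for two $Z$-line endpoints; for the corner pairs $\{X_0,Z_0\}$ and $\{X_{i+1},Z_{k+1}\}$ the vertical travel is at least $b_1+b_2$, exactly the vertical part of $\dist_1(u_\tau,v_\tau)$, but the trip must then leave $x=0$ (resp.\ $x=1$), reach the extreme $x$-coordinate of the inner vertices it covers, and come back, contributing at least twice that protrusion to the horizontal detour — in particular a corner-pair trip covering all inner vertices has detour $\ge 2\cdot\frac{j+2}{j+3}=2-\frac2{j+3}$. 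Feeding in the inner-vertex positions $\frac1{j+3},\dots,\frac{j+2}{j+3}$ (spacing $\frac1{j+3}$, so the inner vertices covered by the various trips partition this arithmetic progression) and the values of $b_1,b_2$, a case distinction over all these partitions shows the total detour is always at least $2-\frac2{j+3}$.

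\textbf{Main obstacle.} The difficulty is precisely this last case distinction: one must verify that distributing the inner vertices over several trips, or routing the ``long'' trip between non-corner outer vertices, never produces a shorter tour than the ``boundary plus one detour'' tour of the upper bound. This is the only step that genuinely uses the metric geometry of $I^2_{i,j,k}$ (the precise inner-vertex spacing and that $b_1,b_2$ are large enough for vertical detours never to be the cheaper option), rather than only the combinatorial input of Lemmas~\ref{lemma:nocrossing} and~\ref{lemma:convexhull}; the horizontal/vertical splitting is set up so that each admissible trip configuration can be compared with the optimal one term by term.
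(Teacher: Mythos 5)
Your upper bound is fine and is exactly the paper's: the non-intersecting shortcut of $T^\leftarrow$ has length $4+2b_1+2b_2-\frac{2}{j+3}$. Your lower-bound setup (convex-hull ordering of the outer vertices, writing the tour as the rectangle cycle with some edges replaced by trips, reducing to ``total detour $\ge 2-\frac{2}{j+3}$'') is also the same skeleton the paper uses. But the proof has a genuine gap where you yourself locate the ``main obstacle'': the per-trip estimates you state are not strong enough, and the concluding ``case distinction over all these partitions'' is precisely the content of the lemma and is never carried out. Concretely, for a trip between two $X$-vertices (or two $Z$-vertices) you only record the vertical detour $2b_1$ (resp.\ $2b_2$) and mention a horizontal protrusion term only for the corner trips. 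That cannot suffice: a single trip between two adjacent $X$-vertices that picks up \emph{all} inner vertices has detour exactly $2\dist_1(Y_0,Y_{j+1})-2\dist_1(X_r,X_{r+1})+2b_1=2-\frac{2}{j+3}$, i.e.\ the bound is attained only because of the horizontal term $2\dist_1(Y_s,Y_t)-2\dist_1(X_r,X_{r+1})$, which your estimate for non-corner trips omits; with only $2b_1\approx 1$ the required total $\approx 2$ is unreachable. Moreover, the configurations to be checked (number of trips, which consecutive block of inner vertices each trip covers, which outer edge it replaces) are not a finite list independent of $i,j,k$, so ``a case distinction'' cannot be left implicit — one needs a uniform inequality plus a summation argument.

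This is exactly what the paper supplies and what is missing from your proposal: for \emph{every} replaced outer edge (interior edges $\{X_r,X_{r+1}\}$, $\{Z_r,Z_{r+1}\}$, the boundary edges such as $\{Z_0,Z_1\}$, and the corner edges $\{X_0,Z_0\}$, $\{X_{i+1},Z_{k+1}\}$) a trip covering the consecutive inner vertices $Y_s,\dots,Y_t$ has detour at least $2\dist_1(Y_s,Y_t)+\frac{2}{j+3}$; the point of the specific choice of $b_1,b_2$ and of the vertex spacings is that the vertical surplus exactly compensates the length of the replaced edge, making this bound independent of which edge is replaced. Once one has this uniform bound, no case analysis is needed: if there are $w$ trips, the blocks they cover miss only $w-1$ of the inner edges, so
\begin{align*}
\sum_\tau\bigl(\ell(\tau)-\dist_1(u_\tau,v_\tau)\bigr)\;\ge\; 2\Bigl(\dist_1(Y_0,Y_{j+1})-(w-1)\tfrac{1}{j+3}\Bigr)+w\cdot\tfrac{2}{j+3}\;=\;2-\tfrac{2}{j+3},
\end{align*}
with the dependence on $w$ cancelling. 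To repair your write-up you would have to state and prove the per-trip inequality for each type of replaced edge (a short triangle-inequality computation in each case, using the explicit coordinates) and then run this telescoping sum; as submitted, the lower bound — the only nontrivial half of the lemma — is asserted rather than proved.
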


\begin{proof}
Assume that we have given an optimal tour. Since all inner vertices lie on a line, every trip visits a set of consecutive inner vertices. We start with a cycle visiting the outer vertices in cyclic order. This cycle has length
\begin{align*}
\dist_1(X_0,X_{i+1})+\dist_1(X_{i+1},Z_{k+1})+\dist_1(Z_{0},Z_{k+1})+\dist_1(X_0,Z_0)=2+2b_1+2b_2
\end{align*}
Now, we successively replace an edge between two consecutive outer vertices by a trip until every inner vertex is visited and we get the given optimal tour.

If we replace $\{Z_{r}, Z_{r+1}\}$ for some $1\leq r\leq k-1$ by a trip visiting the inner vertices $Y_s, Y_{s+1}, \dots, Y_t$, the cost of the cycle increases by at least
\begin{align*}
&\dist_1(Z_r,Y_s)+\dist_1(Y_s,Y_t)+\dist_1(Y_t,Z_{r+1})-\dist_1(Z_r,Z_{r+1})\\
\geq& 2\dist_1(Y_s,Y_t)-2\dist_1(Z_r,Z_{r+1})+2b_2\\
=& 2\dist_1(Y_s,Y_t)-2\cdot\frac{j+1}{j+3}\cdot\frac{1}{k+1}+1+\frac{j+1}{j+3}\left(\frac{2}{k+1}-1\right)=2\dist_1(Y_s,Y_t)+\frac{2}{j+3}
\end{align*}
Similarly, the cost increases by at least the same amount when we replace $\{X_r,X_{r+1}\}$ for some $1\leq r \leq k-1$ by a trip. 

In the case where we replace $\{Z_{0}, Z_{1}\}$ by a trip visiting the inner vertices $Y_s, Y_{s+1}, \dots, Y_t$ let $Y_0'$ be the orthogonal projection of $Y_0$ to the line $Z_0Z_{k+1}$. The cost of the cycle increases by at least
\begin{align*}
&\dist_1(Z_0,Y_s)+\dist_1(Y_s,Y_t)+\dist_1(Y_t,Z_{1})-\dist_1(Z_0,Z_{1})\\
\geq& 2\dist_1(Y_s,Y_t)-2\dist_1(Y'_0,Z_{1})+2b_2\\
=& 2\dist_1(Y_s,Y_t)-2\cdot\frac{j+1}{j+3}\cdot\frac{1}{k+1}+1+\frac{j+1}{j+3}\left(\frac{2}{k+1}-1\right)=2\dist_1(Y_s,Y_t)+\frac{2}{j+3}
\end{align*}
Similarly, the cost increases by at least the same amount when we replace one of the edges $\{Z_k,Z_{k+1}\}$, $\{X_0,X_{1}\}$ or $\{X_i,X_{i+1}\}$  by a trip. 

If we replace $\{X_0,Z_0\}$ by a trip visiting the vertices $Y_s, Y_{s+1}, \dots, Y_t$, the cost of the cycle increases by at least:
\begin{align*}
&\dist_1(X_0,Y_s)+\dist_1(Y_s,Y_t)+\dist_1(Y_t,Z_{0})-\dist_1(X_0,Z_{0})\\
=&\dist_1(X_0,Y_t)+\dist_1(Z_0,Y_t)-\dist_1(X_0,Z_0)\\
=&(t+1)\frac{1}{j+3}+b_1+(t+1)\frac{1}{j+3}+b_2 -(b_1+b_2)= 2(t+1)\frac{1}{j+3}\\
\geq& 2\dist(Y_s,Y_t)+\frac{2}{j+3}
\end{align*}
Similarly, we get the same value when we replace $\{X_{i+1},Z_{k+1}\}$ by a trip.
Assume that the optimal tour has exactly $w$ trips $t_1,\dots, t_w$. Since all the inner vertices are visited by the $w$ trips, all except $w-1$ edges of the form $\{Y_s,Y_{s+1}\}$ for $0\leq s\leq j$ are contained in the trips. Hence the total length of the optimal tour is at least:
\begin{align*}
&2+2b_1+2b_2+2\dist(Y_0,Y_{j+1})-(w-1)\cdot 2 \cdot \frac{1}{j+3}+w \frac{2}{j+3}\\
=&2+2b_1+2b_2+2-\frac{4}{j+3}+\frac{2}{j+3}=4+2b_1+2b_2-\frac{2}{j+3}
\end{align*}
In fact a straightforward calculation shows that all non-intersecting shortcuts of $\mathfrak{T}$ are optimal tours. For example the non-intersecting shortcut of the tour $T^\leftarrow$ has length:
\begin{align*}
&\dist_1(X_0,Y_{j+1})+\dist_1(Y_0,Y_{j+1})+\dist_1(Y_{0},Z_{0})+\dist_1(Z_0,Z_{k+1})+\dist_1(Z_{k+1},X_{i+1})\\
+&\dist_1(X_{i+1},X_0)=(b_1+1-\frac{1}{j+3})+(1-2\cdot \frac{1}{j+3})+(b_2+\frac{1}{j+3})+1+(b_1+b_2)+1\\
=&4+2b_1+2b_2-\frac{2}{j+3}
\end{align*}
Hence, the lower bound on the length of the optimal tour is tight.
\end{proof}

\subsection{The Integrality Ratio of $I_{i,j,k}^2$}
In this section we investigate the integrality ratio of $I_{i,j,k}^2$. Recall that $I^2_n$ is defined as the instance of the form $I_{i,j,k}$ with $n$ vertices and maximal integrality ratio.

\begin{theorem}
The integrality ratio of $I^2_{i,j,k}$ is at least $1+\frac{1}{3+2\left(\frac{5}{j+1}+\frac{1}{k+1}+\frac{1}{i+1}\right)}$. In particular, the integrality ratios of instances $I^2_{n}$ converge to $\frac{4}{3}$ as $n\to \infty$.
\end{theorem}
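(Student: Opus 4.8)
The plan is to bound $\frac{OPT(I^2_{i,j,k})}{OPT_{LP}(I^2_{i,j,k})}$ from below by combining the tour length already computed in Lemma~\ref{length optimal tour Iijk} with an upper bound on $OPT_{LP}(I^2_{i,j,k})$ coming from the explicitly described fractional tour $x_{i,j,k}$. Lemma~\ref{length optimal tour Iijk} gives $OPT(I^2_{i,j,k})=4+2b_1+2b_2-\tfrac{2}{j+3}$. For the denominator I would first check that $x_{i,j,k}$ is a feasible fractional tour: every vertex has $x_{i,j,k}$-degree $2$ (chain-interior vertices from their two weight-$1$ edges, the six triangle corners from one weight-$1$ edge plus two weight-$\tfrac12$ edges), and every cut has $x_{i,j,k}$-value at least $2$ because the support graph is two triangles joined by three internally disjoint paths; alternatively one invokes Theorem~\ref{2ECSS LP} and only verifies the subtour constraints. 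Either way $OPT_{LP}(I^2_{i,j,k})\le c(x_{i,j,k})$.

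Next I would evaluate $c(x_{i,j,k})$ in the $\ell_1$-metric using the explicit coordinates of $I^2_{i,j,k}$. The three lines $l_1,l_2,l_3$ are all horizontal and the vertices along each are monotone in the $x$-coordinate, so each chain of weight-$1$ edges contributes exactly its horizontal span: the $X$- and $Z$-chains contribute $1$ each and the $Y$-chain contributes $\tfrac{j+1}{j+3}$. A direct computation of the weight-$\tfrac12$ edges gives $\dist_1(X_0,Y_0)=\tfrac{1}{j+3}+b_1$, $\dist_1(X_0,Z_0)=b_1+b_2$, $\dist_1(Y_0,Z_0)=\tfrac{1}{j+3}+b_2$, with the same three values for the mirrored triangle, for a total of $4b_1+4b_2+\tfrac{4}{j+3}$; weighting by $\tfrac12$ yields $2b_1+2b_2+\tfrac{2}{j+3}$. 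Summing, $c(x_{i,j,k})=2+\tfrac{j+1}{j+3}+\tfrac{2}{j+3}+2b_1+2b_2=3+2b_1+2b_2$.

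Dividing, the integrality ratio is at least
\begin{align*}
\frac{4+2b_1+2b_2-\frac{2}{j+3}}{3+2b_1+2b_2}=1+\frac{\frac{j+1}{j+3}}{3+2b_1+2b_2}.
\end{align*}
Substituting the definitions of $b_1,b_2$ gives $3+2b_1+2b_2=5+2\tfrac{j+1}{j+3}\left(\tfrac{1}{k+1}+\tfrac{1}{i+1}-1\right)$, so multiplying the last fraction above by $\tfrac{j+3}{j+1}$ in numerator and denominator turns its denominator into $\tfrac{5(j+3)}{j+1}+2\left(\tfrac{1}{k+1}+\tfrac{1}{i+1}-1\right)=3+2\left(\tfrac{5}{j+1}+\tfrac{1}{k+1}+\tfrac{1}{i+1}\right)$, which is precisely the claimed bound. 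For the convergence statement, since $I^2_n$ maximizes the ratio over all $I_{i,j,k}$ with $i+j+k+6=n$, its ratio is at least the bound above for any admissible $(i,j,k)$; choosing for instance $i=k=\lfloor(n-6)/4\rfloor$ and $j=n-6-i-k$ sends all of $i,j,k$ to infinity, so $\tfrac{5}{j+1}+\tfrac{1}{k+1}+\tfrac{1}{i+1}\to0$ and the lower bound tends to $1+\tfrac13=\tfrac43$; combined with the fact that $x_{i,j,k}$ is actually an optimal fractional tour (so the bound is an equality and the ratio stays strictly below $\tfrac43$), this gives $\lim_{n\to\infty}=\tfrac43$.

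The arithmetic is elementary and the algebraic identity $\tfrac{j+3}{j+1}(3+2b_1+2b_2)=3+2\left(\tfrac{5}{j+1}+\tfrac{1}{k+1}+\tfrac{1}{i+1}\right)$ is the crux of the clean form of the bound. The two points that need genuine care are the feasibility check for $x_{i,j,k}$ — a routine but slightly fiddly cut argument for the "two triangles joined by three paths" support — and, if one wants the literal convergence rather than merely $\liminf\ge\tfrac43$, the assertion that $x_{i,j,k}$ attains $OPT_{LP}(I^2_{i,j,k})$. The latter is consistent with the computations for $n\le10$ and should be certifiable by exhibiting a complementary dual solution to the \textsc{2-Edge Connected Spanning Subgraph} LP; I expect this optimality argument to be the main obstacle, whereas the lower bound itself requires only the feasible $x_{i,j,k}$ and Lemma~\ref{length optimal tour Iijk}.
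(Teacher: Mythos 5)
Your proposal is correct and is essentially the paper's own argument: upper-bound $OPT_{LP}$ by the cost of the feasible point $x_{i,j,k}$, which evaluates to $3+2b_1+2b_2$, divide the optimal tour length $4+2b_1+2b_2-\tfrac{2}{j+3}$ from Lemma~\ref{length optimal tour Iijk} by it, and simplify algebraically to $1+\frac{1}{3+2\left(\frac{5}{j+1}+\frac{1}{k+1}+\frac{1}{i+1}\right)}$. The only deviations are minor: for the convergence the paper optimizes the triple $(i,j,k)$ via Cauchy--Schwarz (plus a rounding step to integer values), where you simply take a balanced split, which suffices because $I^2_n$ maximizes over all admissible triples; and the LP-optimality of $x_{i,j,k}$ that you flag as the obstacle to an exact limit (rather than only a $\liminf\geq\frac{4}{3}$) is likewise not proved in the paper, which also establishes only the lower-bound direction.
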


\begin{proof}
The cost of the optimal fractional tour of $I_{i,j,k}$ is at most the cost of $x_{i,j,k}$ which is
\begin{align*}
&\dist_1(X_0,X_{i+1})+\dist_1(Y_0,Y_{j+1})+\dist_1(Z_0,Z_{k+1})+\frac{1}{2}\dist_1(X_0,Y_0)+\frac{1}{2}\dist_1(X_0,Z_0)\\
+&\frac{1}{2}\dist_1(Y_0,Z_0)+\frac{1}{2}\dist_1(X_{i+1},Y_{j+1})+\frac{1}{2}\dist_1(X_{i+1},Z_{k+1})+\frac{1}{2}\dist_1(Y_{j+1},Z_{k+1})\\
=&3+2b_1+2b_2.
\end{align*}
By Lemma \ref{length optimal tour Iijk}, the cost of the optimal tour of $I_{i,j,k}$ is $4+2b_1+2b_2-\frac{2}{j+3}$. Hence, the integrality ratio is at least
\begin{align*}
&\frac{4+2b_1+2b_2-\frac{2}{j+3}}{3+2b_1+2b_2}=1+\frac{1-\frac{2}{j+3}}{3+2b_1+2b_2}\\
=&1+\frac{1-\frac{2}{j+3}}{3+1+\frac{j+1}{j+3}\left(\frac{2}{k+1}-1\right)+1+\frac{j+1}{j+3}\left(\frac{2}{i+1}-1\right)}\\
=&1+\frac{j+3-2}{5(j+3)+(j+1)\left(\frac{2}{k+1}+\frac{2}{i+1}-2\right)}=1+\frac{j+1}{3j+13+(j+1)\left(\frac{2}{k+1}+\frac{2}{i+1}\right)}\\
=&1+\frac{1}{3+2\left(\frac{5}{j+1}+\frac{1}{k+1}+\frac{1}{i+1}\right)}
\end{align*}

To get the highest integrality ratio of the instances $I_{i,j,k}^2$ we have to find $\min_{i,j,k} \frac{5}{j+1}+\frac{1}{i+1}+\frac{1}{k+1}$ where $i+j+k=n-6$ is fixed.

We get the following estimate by the Cauchy-Schwarz inequality:
\begin{align*}
\left(\frac{5}{j+1}+\frac{1}{i+1}+\frac{1}{k+1}\right)(n-3)&=\left(\frac{5}{j+1}+\frac{1}{i+1}+\frac{1}{k+1}\right)(j+1+i+1+k+1)\\
&\geq(\sqrt{5}+\sqrt{1}+\sqrt{1})^2 =(\sqrt{5}+2)^2
\end{align*}
With equality if and only if $\frac{5}{(j+1)^2}=\frac{1}{(i+1)^2}=\frac{1}{(k+1)^2}$. In this case the actual integrality ratio is
\begin{align*}
1+\frac{1}{3+2\frac{(\sqrt{5}+2)^2}{n-3}}\in \frac{4}{3}-\Theta(n^{-1})
\end{align*}
We see that this expression converges to $\frac{4}{3}$ as $n\to \infty$. However, the optimal values of $i, j$ and $k$ we chose above do not have to be integral. So in order to conclude that the actual integrality ratios of the instance $I^2_n=I_{i,j,k}^2$ for the best choice of $i,j,k$ also converge to $\frac{4}{3}$ we need to show that the optimal integral values of $i,j,k$ achieve a similar integrality ratio. Let $i', j'$ and $k'$ be real numbers satisfy $\frac{5}{(j')^2}=\frac{1}{(i')^2}=\frac{1}{(k'+1)^2}$ and $i'+j'+k'=n-6$. With the restriction that $i,j$ and $k$ are integers we can choose $i=\lfloor i' \rfloor, j=\lfloor j' \rfloor$ and $k=\lceil k' \rceil$ or $k=\lceil k' \rceil+1$ such that $i+j+k=i'+j'+k'=n-6$. In this case we have by a similar application of the Cauchy-Schwarz inequality as above:
\begin{align*}
1+\frac{1}{3+2\left(\frac{5}{j+1}+\frac{1}{k+1}+\frac{1}{i+1}\right)}&\geq 1+\frac{1}{3+2\left(\frac{5}{j'}+\frac{1}{k'+1}+\frac{1}{i'}\right)}= 1+\frac{1}{3+2\frac{(\sqrt{5}+2)^2}{n-5}} \\
&\in \frac{4}{3}-\Theta(n^{-1}).
\end{align*}
\end{proof}

\section{Integrality Ratio for Metric TSP} \label{sec int ratio metric}
In this section we give an upper bound on the integrality ratio for \textsc{Metric TSP} instances whose optimal fractional tour is isomorphic to $x_{i,j,k}$ for some $i,j,k$ with $i+j+k+6=n$. This implies that, assuming Conjecture \ref{structure conjecture}, the \textsc{Metric TSP} instances described in \cite{benoit2008finding} maximize the integrality ratio. Note that in \cite{BOYD2011525} it was already shown that these instances have integrality ratios that are upper bounded by $\frac{4}{3}$. We start by defining coefficients for the pseudo-tours in $\mathfrak{T}$.

\begin{definition}
We define the real coefficients
\begin{align*}
\lambda^\uparrow&:=\frac{1}{\lvert T^\uparrow \rvert} \cdot\frac{1}{3+2(\frac{1}{i+1}+\frac{1}{j+1}+\frac{1}{k+1})}\\
\lambda^\circ&:=\frac{1}{\lvert T^\circ \rvert} \cdot \frac{1}{3+2(\frac{1}{i+1}+\frac{1}{j+1}+\frac{1}{k+1})}\\
\lambda^\downarrow&:=\frac{1}{\lvert T^\downarrow \rvert} \cdot \frac{1}{3+2(\frac{1}{i+1}+\frac{1}{j+1}+\frac{1}{k+1})}\\
\lambda^\nwarrow&:=\lambda^\nearrow:=\frac{\frac{1}{k+1}}{3+2(\frac{1}{i+1}+\frac{1}{j+1}+\frac{1}{k+1})}\\
\lambda^\leftarrow&:=\lambda^\rightarrow:=\frac{\frac{1}{j+1}}{3+2(\frac{1}{i+1}+\frac{1}{j+1}+\frac{1}{k+1})}\\
\lambda^\swarrow&:=\lambda^\searrow:=\frac{\frac{1}{i+1}}{3+2(\frac{1}{i+1}+\frac{1}{j+1}+\frac{1}{k+1})}.
\end{align*}
\end{definition}

\begin{lemma} \label{optimal gap metric}
The integrality ratio of \textsc{Metric TSP} instances whose optimal fractional tour isomorphic to $x_{i,j,k}$ is at most $1+\frac{1}{3+2(\frac{1}{i+1}+\frac{1}{j+1}+\frac{1}{k+1})}$.
\end{lemma}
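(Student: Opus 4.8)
The plan is to exhibit an explicit convex combination of tours of cost at most $\bigl(1+\tfrac1{3+2(\frac1{i+1}+\frac1{j+1}+\frac1{k+1})}\bigr)\,OPT_{LP}$, built from the pseudo-tours in $\mathfrak{T}$ with the coefficients $\lambda^\uparrow,\lambda^\circ,\lambda^\downarrow,\lambda^\nwarrow,\dots$ just defined. After relabelling we may assume the optimal fractional tour of the given metric instance $I$ is $x_{i,j,k}$ itself; writing $c(z):=\sum_{e}c(e)\,z(e)$ for an edge vector $z$, we then have $OPT_{LP}=c(x_{i,j,k})$. For $T\in\mathfrak{T}$ let $\chi^T$ denote its vector of edge multiplicities and $c(T):=c(\chi^T)$ the length of the closed walk $T$. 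Since $c$ is metric, shortcutting $T$ to a Hamilton cycle cannot increase the length, so $OPT\le c(T)$ for every $T\in\mathfrak{T}$. Put $S:=\frac1{i+1}+\frac1{j+1}+\frac1{k+1}$ and $D:=3+2S$.

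First I would check that $\sum_{T\in\mathfrak{T}}\lambda^T=1$: the $k+1$ pseudo-tours of $T^\uparrow$ contribute $\tfrac1D$ in total, likewise $T^\circ$ and $T^\downarrow$ each contribute $\tfrac1D$, while $\lambda^\nwarrow+\lambda^\nearrow=\tfrac{2/(k+1)}D$, $\lambda^\leftarrow+\lambda^\rightarrow=\tfrac{2/(j+1)}D$ and $\lambda^\swarrow+\lambda^\searrow=\tfrac{2/(i+1)}D$, so the grand total is $\tfrac{3+2S}D=1$. Hence $y:=\sum_{T\in\mathfrak{T}}\lambda^T\chi^T$ is a convex combination of the $\chi^T$, and because $OPT\le c(T)$ for each $T$,
\[
OPT\ \le\ \sum_{T\in\mathfrak{T}}\lambda^T\,c(T)\ =\ c(y).
\]
The heart of the proof is then the edge identity $y=\bigl(1+\tfrac1D\bigr)\,x_{i,j,k}$. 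I would prove it by a direct multiplicity count: every pseudo-tour in $\mathfrak{T}$ is a closed walk through all vertices using only the edges of the three spines $X_\bullet,Y_\bullet,Z_\bullet$ together with the six edges $\{X_0,Y_0\},\{X_0,Z_0\},\{Y_0,Z_0\},\{X_{i+1},Y_{j+1}\},\{X_{i+1},Z_{k+1}\},\{Y_{j+1},Z_{k+1}\}$, i.e.\ exactly $\operatorname{supp}(x_{i,j,k})$; and since $\mathfrak{T}$ and the coefficients $\lambda^\bullet$ are invariant both under permuting the three spines and under simultaneously reversing all three spines, it suffices to check the identity on the edges of one spine and on one corner edge. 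For $\{Z_t,Z_{t+1}\}$ (value $1$; the count below does not depend on $t$): it appears twice in each of the $k$ pseudo-tours $T^\uparrow_l$ with $l\ne t$, once in every tour of $T^\circ$ and of $T^\downarrow$, twice in each of $T^\nwarrow,T^\nearrow$, and once in each of $T^\leftarrow,T^\rightarrow,T^\swarrow,T^\searrow$, so $D\cdot y(\{Z_t,Z_{t+1}\})=\tfrac{2k}{k+1}+1+1+\tfrac4{k+1}+\tfrac2{j+1}+\tfrac2{i+1}=4+2S=D+1$. For $\{X_0,Z_0\}$ (value $\tfrac12$): it appears once in every tour of $T^\uparrow$ and of $T^\downarrow$, in exactly one tour from each of the pairs $\{T^\nwarrow,T^\nearrow\}$, $\{T^\leftarrow,T^\rightarrow\}$, $\{T^\swarrow,T^\searrow\}$, and in no tour of $T^\circ$, so $D\cdot y(\{X_0,Z_0\})=1+1+\tfrac1{k+1}+\tfrac1{j+1}+\tfrac1{i+1}=2+S=\tfrac{D+1}2$. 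In both cases $y(e)=\bigl(1+\tfrac1D\bigr)x_{i,j,k}(e)$, and the remaining edges follow by the symmetries.

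Combining this with the displayed inequality gives
\[
OPT\ \le\ \Bigl(1+\tfrac1D\Bigr)\,c(x_{i,j,k})\ =\ \Bigl(1+\tfrac1D\Bigr)\,OPT_{LP},
\]
so the integrality ratio of $I$ is at most $1+\tfrac1{3+2(\frac1{i+1}+\frac1{j+1}+\frac1{k+1})}$, as claimed. The main obstacle is purely the bookkeeping for the edge identity: one must confirm that each pseudo-tour in $\mathfrak{T}$ is indeed a closed walk through all vertices supported on the edges of $x_{i,j,k}$, and then tally the multiplicity of each edge class — the three spine families and the six corner edges — in every pseudo-tour; once the symmetries are used this collapses to the two computations above, but getting the corner accounting exactly right (in particular which of the diagonal pseudo-tours $T^\nwarrow,\dots,T^\searrow$ contains which corner) is the delicate part.
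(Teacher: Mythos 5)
Your proposal is correct and is essentially the paper's own argument: it forms the same convex combination of the pseudo-tours in $\mathfrak{T}$ with the coefficients $\lambda^\uparrow,\dots,\lambda^\searrow$, proves the identity $\sum_T \lambda^T\chi^T=\bigl(1+\tfrac{1}{3+2(\frac{1}{i+1}+\frac{1}{j+1}+\frac{1}{k+1})}\bigr)x_{i,j,k}$ by checking a spine edge and a corner edge, and concludes via the metric shortcutting bound. The only differences are cosmetic (you check $\{X_0,Z_0\}$ instead of $\{Z_0,Y_0\}$ and make the symmetry and shortcutting steps explicit), and your multiplicity counts are accurate.
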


\begin{proof}
For a pseudo-tour $T$ let $\chi^T$ be the vector such that $\chi^T(e)$ is the number of occurrence of $e$ in $T$ for all $e\in E(K_n)$. In order to show the statement, we show that
\begin{align*}
&\sum_{T\in T^\uparrow}\lambda^\uparrow\chi^T+\sum_{T\in T^\circ}\lambda^\circ\chi^T+\sum_{T\in T^\downarrow}\lambda^\downarrow\chi^T+\lambda^\nwarrow\chi^{T^\nwarrow}+\lambda^\nearrow\chi^{T^\nearrow}+\lambda^\leftarrow\chi^{T^\leftarrow}+\lambda^\rightarrow\chi^{T\rightarrow}+\lambda^\swarrow\chi^{T^\swarrow}\\
+&\lambda^\searrow\chi^{T^\searrow}=\left(1+\frac{1}{3+2(\frac{1}{i+1}+\frac{1}{j+1}+\frac{1}{k+1})} \right) x_{i,j,k}.
\end{align*}
This implies the Lemma since
\begin{align*}
&\sum_{T\in T^\uparrow}\lambda^\uparrow+\sum_{T\in T^\circ}\lambda^\circ+\sum_{T\in T^\downarrow}\lambda^\downarrow+\lambda^\nwarrow+\lambda^\nearrow+\lambda^\leftarrow+\lambda^\rightarrow+\lambda^\swarrow+\lambda^\searrow\\
=&3\cdot \frac{1}{3+2(\frac{1}{i+1}+\frac{1}{j+1}+\frac{1}{k+1})}+ 2 \cdot \frac{\frac{1}{k+1}}{3+2(\frac{1}{i+1}+\frac{1}{j+1}+\frac{1}{k+1})}+2\cdot \frac{\frac{1}{j+1}}{3+2(\frac{1}{i+1}+\frac{1}{j+1}+\frac{1}{k+1})}\\
&+2\cdot \frac{\frac{1}{i+1}}{3+2(\frac{1}{i+1}+\frac{1}{j+1}+\frac{1}{k+1})}=1.
\end{align*}
Consider the edge $\{Z_l,Z_{l+1}\}$ for some $0\leq l \leq k$. It is contained in every pseudo-tour of $T^\circ, T^\downarrow, T^\leftarrow, T^\rightarrow, T^\swarrow, T^\searrow$ once and in each of $T^\nwarrow, T^\nearrow$ twice. In the pseudo-tours of $T^\uparrow$ it is contained twice except of the tour $T^\uparrow_l$ where it is not contained. Hence,
\begin{align*}
&\sum_{T\in T^\uparrow}\lambda^\uparrow\chi^T(\{Z_l,Z_{l+1}\})+\sum_{T\in T^\circ}\lambda^\circ\chi^T(\{Z_l,Z_{l+1}\})+\sum_{T\in T^\downarrow}\lambda^\downarrow\chi^T(\{Z_l,Z_{l+1}\})\\
+&\lambda^\nwarrow\chi^{T^\nwarrow}(\{Z_l,Z_{l+1}\})+\lambda^\nearrow\chi^{T^\nearrow}(\{Z_l,Z_{l+1}\})+\lambda^\leftarrow\chi^{T^\leftarrow}(\{Z_l,Z_{l+1}\})\\
+&\lambda^\rightarrow\chi^{T\rightarrow}(\{Z_l,Z_{l+1}\})+\lambda^\swarrow\chi^{T^\swarrow}(\{Z_l,Z_{l+1}\})+\lambda^\searrow\chi^{T^\searrow}(\{Z_l,Z_{l+1}\})\\
=&2\lambda^\uparrow-\frac{2}{\lvert T^\uparrow \rvert}\lambda^\uparrow+\lambda^\circ+\lambda^\downarrow+2\lambda^\nwarrow+2\lambda^\nearrow+\lambda^\leftarrow+\lambda^\rightarrow+\lambda^\swarrow+\lambda^\searrow\\
=&4\cdot \frac{1}{3+2(\frac{1}{i+1}+\frac{1}{j+1}+\frac{1}{k+1})}-\frac{2}{k+1}\cdot \frac{1}{3+2(\frac{1}{i+1}+\frac{1}{j+1}+\frac{1}{k+1})}+4 \cdot \frac{\frac{1}{k+1}}{3+2(\frac{1}{i+1}+\frac{1}{j+1}+\frac{1}{k+1})}\\
&+2\cdot \frac{\frac{1}{j+1}}{3+2(\frac{1}{i+1}+\frac{1}{j+1}+\frac{1}{k+1})}+2\cdot \frac{\frac{1}{i+1}}{3+2(\frac{1}{i+1}+\frac{1}{j+1}+\frac{1}{k+1})}\\
=&4\cdot \frac{1}{3+2(\frac{1}{i+1}+\frac{1}{j+1}+\frac{1}{k+1})}+2 \cdot \frac{\frac{1}{k+1}}{3+2(\frac{1}{i+1}+\frac{1}{j+1}+\frac{1}{k+1})}+2\cdot \frac{\frac{1}{j+1}}{3+2(\frac{1}{i+1}+\frac{1}{j+1}+\frac{1}{k+1})}\\
&+2\cdot \frac{\frac{1}{i+1}}{3+2(\frac{1}{i+1}+\frac{1}{j+1}+\frac{1}{k+1})}\\
=&1+\frac{1}{3+2(\frac{1}{i+1}+\frac{1}{j+1}+\frac{1}{k+1})}=\left(1+\frac{1}{3+2(\frac{1}{i+1}+\frac{1}{j+1}+\frac{1}{k+1})}\right)x_{i,j,k}(\{Z_l,Z_{l+1}\}).
\end{align*}
Next, consider the edge $\{Z_0,Y_0\}$. It is contained in the pseudo-tours of $T^\uparrow, T^\circ, T^\nwarrow, T^\leftarrow, T^\searrow$ once and not contained in the pseudo-tours of $T^\downarrow, T^\nearrow, T^\rightarrow, T^\swarrow$.  Hence,
\begin{align*}
&\sum_{T\in T^\uparrow}\lambda^\uparrow\chi^T(\{Z_0,Y_0\})+\sum_{T\in T^\circ}\lambda^\circ\chi^T(\{Z_0,Y_0\})+\sum_{T\in T^\downarrow}\lambda^\downarrow\chi^T(\{Z_0,Y_0\})\\
+&\lambda^\nwarrow\chi^{T^\nwarrow}(\{Z_0,Y_0\})+\lambda^\nearrow\chi^{T^\nearrow}(\{Z_0,Y_0\})+\lambda^\leftarrow\chi^{T^\leftarrow}(\{Z_0,Y_0\})\\
+&\lambda^\rightarrow\chi^{T\rightarrow}(\{Z_0,Y_0\})+\lambda^\swarrow\chi^{T^\swarrow}(\{Z_0,Y_0\})+\lambda^\searrow\chi^{T^\searrow}(\{Z_0,Y_0\})\\
=&\lambda^\uparrow+\lambda^\circ+\lambda^\nwarrow+\lambda^\leftarrow+\lambda^\searrow\\
=&2\cdot \frac{1}{3+2(\frac{1}{i+1}+\frac{1}{j+1}+\frac{1}{k+1})}+\frac{\frac{1}{k+1}}{3+2(\frac{1}{i+1}+\frac{1}{j+1}+\frac{1}{k+1})}+\frac{\frac{1}{j+1}}{3+2(\frac{1}{i+1}+\frac{1}{j+1}+\frac{1}{k+1})}\\
&+\frac{\frac{1}{i+1}}{3+2(\frac{1}{i+1}+\frac{1}{j+1}+\frac{1}{k+1})}\\
=&\frac{1}{2}\left(1+\frac{1}{3+2(\frac{1}{i+1}+\frac{1}{j+1}+\frac{1}{k+1})}\right)=\left(1+\frac{1}{3+2(\frac{1}{i+1}+\frac{1}{j+1}+\frac{1}{k+1})}\right)x_{i,j,k}(\{Z_0,Y_0\}).
\end{align*}
The statement can be shown for all other edges of $x_{i,j,k}$ analogously to one of the two cases above.
\end{proof}
\begin{remark}
Theorem 4.1 in \cite{benoit2008finding} shows that the upper bound in Lemma \ref{optimal gap metric} is tight, i.e.\ there is actually an instance where the integrality ratio is equal to $1+\frac{1}{3+2(\frac{1}{i+1}+\frac{1}{j+1}+\frac{1}{k+1})}$.
\end{remark}

\begin{theorem} \label{max int ratio metric}
The integrality ratio of \textsc{Metric TSP} instances whose optimal fractional tour is isomorphic to $x_{i,j,k}$ with $i+j+k+6=n$ is at most
\begin{align*}
\begin{cases}
  1+\frac{1}{3+\frac{18}{n-3}} & \text{if } n\equiv 0 \mod 3\\
  1+\frac{1}{3+2(\frac{6}{n-4}+\frac{3}{n-1})} & \text{if } n\equiv 1 \mod 3\\
  1+\frac{1}{3+2(\frac{3}{n-5}+\frac{6}{n-2})} & \text{if } n\equiv 2 \mod 3.
 \end{cases}
\end{align*}
\end{theorem}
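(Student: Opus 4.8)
The plan is to combine Lemma~\ref{optimal gap metric} with a purely discrete optimization over the triples $(i,j,k)$. By Lemma~\ref{optimal gap metric}, any \textsc{Metric TSP} instance whose optimal fractional tour is isomorphic to $x_{i,j,k}$ has integrality ratio at most $1+\frac{1}{3+2\left(\frac{1}{i+1}+\frac{1}{j+1}+\frac{1}{k+1}\right)}$, and this quantity is a strictly decreasing function of $S:=\frac{1}{i+1}+\frac{1}{j+1}+\frac{1}{k+1}$. Hence it suffices to determine the minimum of $S$ over all nonnegative integers $i,j,k$ with $i+j+k=n-6$; equivalently, writing $a=i+1$, $b=j+1$, $c=k+1$, to minimize $\frac1a+\frac1b+\frac1c$ over positive integers with $a+b+c=n-3$.

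The key tool is that $f(x)=1/x$ is convex on $(0,\infty)$, so Karamata's inequality (Theorem~\ref{Karamata}) applies: if one decreasing triple majorizes another, then the sum of $f$-values over the first is at least that over the second. I would therefore show that, among all decreasing triples of positive integers summing to $N:=n-3$, the ``balanced'' triple $(a^*,b^*,c^*)$ whose parts differ pairwise by at most $1$ is majorized by every other such triple, so it realizes the minimum of $\frac1a+\frac1b+\frac1c$. This majorization claim is immediate: for any decreasing triple $(a,b,c)$ with $a+b+c=N$ one has $a\ge\lceil N/3\rceil$ (it is the largest part) and $a+b=N-c\ge N-\lfloor N/3\rfloor=\lceil 2N/3\rceil$ (since $c$ is the smallest part), whereas the balanced triple attains both of these bounds with equality; as all triples share the same total, this is precisely the majorization condition for three-term sequences.

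It then remains to read off the balanced triple and the value of $\frac1{a^*}+\frac1{b^*}+\frac1{c^*}$ in each residue class of $n$ modulo $3$: for $n\equiv 0$ one gets $\left(\frac{n-3}{3},\frac{n-3}{3},\frac{n-3}{3}\right)$ and $S_{\min}=\frac{9}{n-3}$; for $n\equiv 1$ one gets $\left(\frac{n-1}{3},\frac{n-4}{3},\frac{n-4}{3}\right)$ and $S_{\min}=\frac{3}{n-1}+\frac{6}{n-4}$; for $n\equiv 2$ one gets $\left(\frac{n-2}{3},\frac{n-2}{3},\frac{n-5}{3}\right)$ and $S_{\min}=\frac{6}{n-2}+\frac{3}{n-5}$ (all parts are positive since $n\ge 6$). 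Substituting $S_{\min}$ into $1+\frac{1}{3+2S}$ yields exactly the three cases in the statement, and since every instance of the relevant form has integrality ratio at most $1+\frac{1}{3+2S}\le 1+\frac{1}{3+2S_{\min}}$, the bound follows.

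The argument is essentially routine once Lemma~\ref{optimal gap metric} is available; the only point demanding care is the reduction to an \emph{integer} optimization — one cannot simply set $i=j=k=\frac{n-6}{3}$ — which is exactly what the majorization argument via Karamata's inequality handles, together with the bookkeeping of translating the three balanced integer partitions into closed-form expressions in $n$.
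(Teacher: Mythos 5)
Your proposal is correct and follows essentially the same route as the paper: it invokes Lemma~\ref{optimal gap metric} and then minimizes $\frac{1}{i+1}+\frac{1}{j+1}+\frac{1}{k+1}$ over integer triples via convexity of $1/x$ and Karamata's inequality, identifying the balanced triple in each residue class of $n$ modulo $3$. The only (welcome) difference is that you spell out the majorization of the balanced triple explicitly, whereas the paper merely asserts it and uses Jensen's inequality in the case $n\equiv 0 \pmod 3$.
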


\begin{proof}
In order to maximize the integrality ratio, we need to find $i,j,k$ with $i+j+k+6=n$ that minimize $\frac{1}{i+1}+\frac{1}{j+1}+\frac{1}{k+1}$. Since the function $f(x)=\frac{1}{x}$ is convex, we can use Jensen's inequality to see that the integrality ratio is maximized for $i+1=j+1=k+1$ if $n$ is divisible by 3. For $n\equiv 1 \mod 3$ and $n\equiv 2 \mod 3$ we use the Karamata's inequality (Theorem \ref{Karamata}) to find the best values of $i,j$ and $k$. For $n\equiv 1 \mod 3$ it is maximized for $i+1=j+1=k$ since this triple is been majorized by all other integer triples. Similarly, for $n\equiv 2 \mod 3$ the integrality ratio is maximized for $i+1=j=k$.
\end{proof}

\begin{remark}
Conjecture 4.1 in \cite{benoit2008finding} states the given bounds in Theorem \ref{max int ratio metric} hold for arbitrary \textsc{Metric TSP} instances with $n$ vertices and is tight. Thus, Conjecture \ref{structure conjecture} would imply Conjecture 4.1 in \cite{benoit2008finding}.
\end{remark}

\section{Integrality Ratio for Multidimensional Rectilinear TSP} \label{sec int ratio rectilinear multi}
In this section we show that there are \textsc{Rectilinear TSP} instances in $\R^3$ that have the same integrality ratio as the upper bounds given in Theorem \ref{max int ratio metric} for the \textsc{Metric TSP}. Hence, assuming Conjecture \ref{structure conjecture} the exact integrality ratio for \textsc{Multidimensional Rectilinear TSP} is the same as in the metric case. Since the instances in $\R^3$ can be embedded into $\R^d$ for $d\geq 3$, the statement also holds for these spaces.

We start by constructing an instance $I^3_{i,j,k}$ with the vertex set of $x_{i,j,k}$. The coordinates of the vertices are given by $X_s=(0,0,\frac{s}{i+1}), Y_s=(\frac{1}{i+1}+\frac{1}{j+1},0,\frac{s}{j+1})$ and $Z_s=(\frac{1}{i+1},\frac{1}{k+1},\frac{s}{k+1})$. The vertices form a prism in the three dimensional space where the triangle $X_0,Y_0,Z_0$ lies in the plane $z=0$ and the triangle $X_{i+1},Y_{j+1},Z_{k+1}$ lies in the plane $z=1$. The sequences of vertices $X_0,\dots, X_{i+1}$ and $Y_0,\dots,Y_{j+1}$ and $Z_0,\dots, Z_{k+1}$ are equidistant progressions such that each sequence lies on one of three parallel lines.  

We can check that $\dist_1(X_l,X_{l+1})=\frac{1}{i+1}, \dist_1(Y_l,Y_{l+1})=\frac{1}{j+1}$ and $\dist_1(Z_l,Z_{l+1})=\frac{1}{k+1}$. Moreover, $\dist_1(X_0,Y_0)=\frac{1}{i+1}+\frac{1}{j+1}, \dist_1(Y_0,Z_0)=\frac{1}{j+1}+\frac{1}{k+1}$ and $\dist_1(X_0,Z_0)=\frac{1}{i+1}+\frac{1}{k+1}$. The same distances also hold for the triangle $X_{i+1},Y_{j+1},Z_{k+1}$. Note these distances are the same as the corresponding distances of the \textsc{Metric TSP} instances Benoit and Boyd described in \cite{benoit2008finding}. Nevertheless, it is not clear that they have the same integraltiy ratio, since the remaining distances are given by the 1-norm instead of the metric closure of the weighted graph.

For all fixed $n$ let the instance $I^3_n:=I_{i^*,j^*,k^*}^3$ be the instance that maximizes the integrality ratio among the instances $I^3_{i,j,k}$ with $i+j+k+6=n$. Next, we determine the length of the optimal tours of $I^3_{i,j,k}$.

\begin{lemma} \label{optimal tour 3D bound}
Every optimal tour of $I_{i,j,k}^3$ has at least length $4+\frac{2}{i+1}+\frac{2}{j+1}+\frac{2}{k+1}$.
\end{lemma}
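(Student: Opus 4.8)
Proof proposal. The plan is a coordinate-wise analysis of an arbitrary tour $T$, exploiting that in $I^3_{i,j,k}$ the $x$-coordinate takes only three values and the $y$-coordinate only two. The matching upper bound is immediate: the Hamilton cycle $X_0,X_1,\dots,X_{i+1},Y_{j+1},Y_j,\dots,Y_0,Z_0,Z_1,\dots,Z_{k+1}$ traverses each of the three parallel paths once, uses $\{X_{i+1},Y_{j+1}\}$ and $\{Y_0,Z_0\}$, and closes with $\{Z_{k+1},X_0\}$, so by the distances computed above it has length exactly $4+\frac{2}{i+1}+\frac{2}{j+1}+\frac{2}{k+1}$; the content of the lemma is the lower bound. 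Write $\mathrm{length}(T)=\ell_x(T)+\ell_y(T)+\ell_z(T)$ with $\ell_x(T)=\sum_{\{u,v\}\in T}\lvert u_x-v_x\rvert$, and analogously for $y$ and $z$. For $P\in\{P_X,P_Y,P_Z\}$ (the paths $X_0\cdots X_{i+1}$, $Y_0\cdots Y_{j+1}$, $Z_0\cdots Z_{k+1}$) let $r_P$ be the number of maximal blocks of consecutive $V(P)$-vertices in the cyclic order of $T$, so the number of edges of $T$ with exactly one endpoint in $V(P)$ equals $2r_P$. Since all vertices of $P_X$ have $(x,y)=(0,0)$, all of $P_Z$ have $x=\tfrac1{i+1}$, and all of $P_Y$ have $x=\tfrac1{i+1}+\tfrac1{j+1}$, sweeping a level through the $x$-axis gives $\ell_x(T)=\tfrac1{i+1}\cdot 2r_X+\tfrac1{j+1}\cdot 2r_Y$, and sweeping through the $y$-axis gives $\ell_y(T)=\tfrac1{k+1}\cdot 2r_Z$. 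Thus it suffices to prove
\[
\ell_z(T)\ \ge\ 4-\frac{2(r_X-1)}{i+1}-\frac{2(r_Y-1)}{j+1}-\frac{2(r_Z-1)}{k+1}.
\]

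For the $z$-part I would use the identity $\ell_z(T)=\int_0^1 c(t)\,dt$, where $c(t)$ is the number of edges of $T$ whose two endpoints lie on opposite sides of the level $z=t$; in the cycle $T$ one has $c(t)=2\rho(t)$, where $\rho(t)$ is the number of maximal blocks of $\{v:v_z<t\}$ in $T$. Since $\rho(t)\ge 1$ for all $t\in(0,1)$, we get $\ell_z(T)=2\int_0^1\rho(t)\,dt\ge 2(2-\mu)$, where $\mu$ is the Lebesgue measure of $B:=\{t\in(0,1):\rho(t)=1\}$. Substituting this into the displayed inequality, the whole lemma reduces to the single estimate
\[
\mu\ \le\ \frac{r_X-1}{i+1}+\frac{r_Y-1}{j+1}+\frac{r_Z-1}{k+1},
\]
which is trivial when the right-hand side is $\ge 1$ (as $\mu\le1$), the real content being the case where each path is traversed in only a few blocks. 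Observe that $t\in B$ means exactly that $T$ is bitonic at level $t$: the vertices with $z<t$ form a single arc of $T$ and those with $z\ge t$ form the complementary arc.

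To bound $\mu$ I would track a maximal bitonic interval $(t_1,t_2)\subseteq B$. As $t$ increases across $(t_1,t_2)$ the low set $\{v:v_z<t\}$ can only grow by absorbing, one vertex at a time, the endpoint of one of the two $T$-edges that currently join the low arc to the high arc; hence every vertex whose $z$-value lies in $(t_1,t_2)$ is absorbed at one of these two seams, in the order in which $T$ lists the vertices on either side of the starting arc. Writing $\mu$ as a sum of the consecutive $z$-gaps lying inside the bitonic intervals and charging each such gap to the vertex just below it, one checks that the vertices of $P_X$ that get charged correspond to distinct consecutive pairs $\{X_s,X_{s+1}\}$ that fail to be edges of $T$, of which there are at most $r_X-1$, and each contributes a gap of at most $\tfrac1{i+1}$; the same for $P_Y,P_Z$. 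Summing yields $\mu\le\frac{r_X-1}{i+1}+\frac{r_Y-1}{j+1}+\frac{r_Z-1}{k+1}$, and combining with the first paragraph gives $\mathrm{length}(T)\ge 2(2-\mu)+\frac{2r_X}{i+1}+\frac{2r_Y}{j+1}+\frac{2r_Z}{k+1}\ge 4+\frac2{i+1}+\frac2{j+1}+\frac2{k+1}$.

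The main obstacle is precisely this seam-tracking estimate: staying bitonic over a long range of levels forces the three bottom vertices, the three top vertices, and then the subsequently absorbed vertices to appear in $T$ in a very rigid pattern, and one must convert that rigidity into the quantitative bound on $\mu$ — in particular ruling out that one path can remain ``unbroken'' ($r_P=1$) while $T$ is bitonic over many levels and the other two paths are not heavily interleaved. The bound $\mu\le\sum_P\frac{r_P-1}{|P|}$, and hence Lemma~\ref{optimal tour 3D bound}, is tight: equality already holds for the Hamilton cycle exhibited at the outset (there $r_X=r_Y=r_Z=1$, $\mu=0$, and $\ell_z=4$), so there is no slack to spare in the argument.
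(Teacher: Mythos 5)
Your reduction in the first two paragraphs is correct and genuinely different from the paper's argument (the paper reroutes the horizontal $xy$-parts of the non-vertical tour edges into ``gaps'' on the three lines via an explicit assignment, adds auxiliary gaps below $z=0$ and above $z=1$, and then uses a four-crossings-per-level argument; you instead decompose the length coordinatewise, identify $\ell_x+\ell_y$ exactly with $\tfrac{2r_X}{i+1}+\tfrac{2r_Y}{j+1}+\tfrac{2r_Z}{k+1}$, and bound $\ell_z\ge 4-2\mu$). Up to and including the reduction of the lemma to the single estimate $\mu\le\tfrac{r_X-1}{i+1}+\tfrac{r_Y-1}{j+1}+\tfrac{r_Z-1}{k+1}$ everything checks out. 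But that estimate is the entire content of the lemma, you yourself call it the main obstacle, and the argument you sketch for it does not go through. The specific counting claim it rests on is false: the number of index-consecutive pairs $\{X_s,X_{s+1}\}$ that fail to be edges of $T$ is \emph{not} bounded by $r_X-1$. A single block may visit its line in non-monotone index order (a block $X_0X_2X_1X_3$ has $r_X=1$ yet two failing pairs), and even the pairs whose endpoints lie in different blocks can outnumber $r_X-1$ (blocks $\{X_0,X_2\}$ and $\{X_1,X_3\}$ give $r_X=2$ but three straddling pairs). So ``charge each gap to the vertex below it and observe there are at most $r_X-1$ charged vertices per line'' is not justified by anything you have written; what is actually needed is the structural fact that, while $\rho(t)=1$, the low vertices of each line form an index prefix that must be compatible with the block partition (all untouched blocks entirely low or entirely high, at most two blocks split by the arc), and that this pins each charged gap injectively to a distinct block boundary of the corresponding line, without double counting across the possibly many maximal subintervals of $B$. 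That is a genuine combinatorial argument, not a one-line observation.

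Note also that your target inequality is not even automatically implied by the truth of the lemma, because the step $\ell_z\ge 4-2\mu$ throws away all crossings beyond two ($\rho(t)\ge 3$); so you cannot appeal to the known bound to excuse the gap --- you would have to prove the measure estimate outright. It does appear plausible (for instance, when $r_X=2$, $r_Y=r_Z=1$ the arc condition forces the two $X$-blocks to be separated by a single index threshold, giving $\mu\le\tfrac1{i+1}$, and a bitonic tour with $i=j=k$ can be shown to need many line switches), so the approach may well be salvageable, but as it stands the crux is asserted rather than proved. By contrast, the paper's gap-assignment argument avoids this difficulty entirely because the replacement of the horizontal subedges by incident gaps is length-preserving edge by edge and needs no control over how each line is ordered inside its blocks.
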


\begin{proof}
Assume that we have given an optimal tour $T$. We call an edge \emph{vertical} if it is parallel to the $z$-axis, otherwise it is called \emph{non-vertical}. A vertical edge is called \emph{base edge} if it connects two consecutive vertices $\{X_{s},X_{s+1}\}, \{Y_s,Y_{s+1}\}$ or $\{Z_s,Z_{s+1}\}$. We may assume that all vertical edges are base edges, otherwise we can replace them by a set of base edges to get a pseudo-tour with equal length. We call all base edges which are not in $T$ \emph{gaps}. Furthermore, we add auxiliary vertices $X_{-1}=(0,0,-\frac{1}{i+1})$, $X_{i+2}=(0,0,\frac{i+2}{i+1})$, $Y_{-1}=(\frac{1}{i+1}+\frac{1}{j+1},0,-\frac{1}{j+1})$, $Y_{j+2}=(\frac{1}{i+1}+\frac{1}{j+1},0,\frac{j+2}{j+1})$ and $Z_{-1}=(\frac{1}{i+1},\frac{1}{k+1},-\frac{1}{k+1})$, $Z_{k+2}=(\frac{1}{i+1},\frac{1}{k+1},\frac{k+2}{k+1})$ such that the sequences $X_{-1},X_0,\dots,X_{i+2}$ and $Y_{-1},Y_0, \dots,Y_{j+2}$ and $Z_{-1},Z_0,\dots, Z_{k+2}$ are equidistant progressions. We call the edges $\{X_{-1},X_0\}$, $\{X_{i+1},X_{i+2}\}$, $\{Y_{-1},Y_{0}\}$, $\{Y_{j+1},Y_{j+2}\}$, $\{Z_{-1},Z_0\}$ and $\{Z_{k+1},Z_{k+2}\}$ the \emph{auxiliary gaps}. 

Next, we assign the non-vertical edges in $T$ to the gaps such that every non-vertical edge is assigned to two gaps incident to the edge on different lines, every non-auxiliary gap is assigned to two non-vertical edges and every auxiliary gap is assigned to one non-vertical edge. We do this as follows: For any endpoint of a non-vertical edge that is only incident to one gap we assign this edge to that gap. For any endpoint of a non-vertical edge incident to two gaps there has to be another non-vertical edge incident to that endpoint. We arbitrarily assign one of the edges to a gap and the other to the other gap. Note that since every gap has two endpoints we assigned two non-vertical edges to it this way. Moreover, every auxiliary gap has only one endpoint which is vertex of $x_{i,j,k}$ and hence it is assigned to one non-vertical edge.

Since we use the Manhattan norm, we can replace every edge $e\in T$ by three edges we call the \emph{subedges} of $e$ without changing the length of the tour such that they are parallel to the $x$-, $y$- and $z$-axis, respectively. After the replacement, we also call the subedges parallel to the $z$-axis \emph{vertical}. Next, we replace all non-vertical subedges originated from a non-vertical edges of $T$ by the two gaps it is assigned to and get a multiset of edges $T'$. We claim that $T'$ has the same length as $T$. To see this assume that the non-vertical edge $\{X_l,Y_s\}$ is in $T$ and note that the non-vertical subedges of it have total length $\dist_1(X_0,Y_0)=\frac{1}{i+1}+\frac{1}{j+1}$. Assume that we assigned this edge to the gaps $\{X_l,X_{l+1}\}$ and $\{Y_s, Y_{s+1}\}$. Then, these two edges we add have also total length $\frac{1}{i+1}+\frac{1}{j+1}$. Similar statements hold for the non-vertical edges of the form $\{X_l,Z_s\}$ and $\{Y_l,Z_s\}$. Hence, $T'$ has the same length as $T$. 

Now, for $a\in \R$ consider the intersection of the plane $z=a$ with $T'$. We claim that it intersects $T'$ at least four times for all $0 \leq a\leq 1$. The plane intersects each of the segments $X_0X_{i+1}$, $Y_0Y_{j+1}$ and $Z_0Z_{k+1}$ at a base edge or a gap. If it intersects at least one gap, the statement is true since every non-auxiliary gap was assigned to two non-vertical edges and we filled the gap by two edges. Otherwise, the plane intersects three base edges of $T$. Since the pseudo-tour $T$ intersects a plane an even number of times, it has to intersect at least 4 times. Thus, it also intersects $T'$ at least 4 times since we only replaced the non-vertical subedges. This shows the claim and implies that the part of the edges in $T'$ with $z$-coordinate between 0 and 1 has length at least 4.

Moreover, we added an edge in $T'$ to every auxiliary gap. Since the interior of the auxiliary gaps does not have $z$-coordinates between 0 and 1, this increases the lower bound of the length of $T'$ by $\frac{2}{i+1}+\frac{2}{j+1}+\frac{2}{k+1}$. Therefore, the total length of $T'$ and thus also that of $T$ is at least $4+\frac{2}{i+1}+\frac{2}{j+1}+\frac{2}{k+1}$.
\end{proof}

\begin{corollary} \label{3d embedding same ratio}
The integrality ratio of $I^3_{i,j,k}$ is at least $1+\frac{1}{3+2(\frac{1}{i+1}+\frac{1}{j+1}+\frac{1}{k+1})}$.
\end{corollary}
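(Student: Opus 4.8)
The plan is to combine a lower bound on $OPT(I^3_{i,j,k})$ with an upper bound on $OPT_{LP}(I^3_{i,j,k})$ and take the quotient. Lemma~\ref{optimal tour 3D bound} already supplies the first half: $OPT(I^3_{i,j,k}) \ge 4 + \frac{2}{i+1} + \frac{2}{j+1} + \frac{2}{k+1}$. So the only remaining task is an explicit upper bound on the fractional optimum.

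For that I would exhibit the weighted graph $x_{i,j,k}$ itself as a feasible point of the subtour LP of $I^3_{i,j,k}$. One checks directly that $x_{i,j,k}$ satisfies the degree constraints (every vertex is incident either to two weight-$1$ edges, or to one weight-$1$ edge and two weight-$\frac12$ edges, so its weighted degree is $2$) and the subtour elimination constraints; this verification is routine and, importantly, depends only on the edge weights and not on the embedding, so it is literally the same fact already used for $I^2_{i,j,k}$ in Section~\ref{sec int ratio rectilinear}. Hence $OPT_{LP}(I^3_{i,j,k}) \le c(x_{i,j,k})$. Evaluating $c(x_{i,j,k})$ with the $\R^3$ coordinates recorded before Lemma~\ref{optimal tour 3D bound}: the three weight-$1$ paths $X_0\cdots X_{i+1}$, $Y_0\cdots Y_{j+1}$, $Z_0\cdots Z_{k+1}$ contribute $\dist_1(X_0,X_{i+1}) + \dist_1(Y_0,Y_{j+1}) + \dist_1(Z_0,Z_{k+1}) = 3$, and the six weight-$\frac12$ edges of the two triangles contribute $\frac12\cdot 2\cdot\big(\dist_1(X_0,Y_0) + \dist_1(Y_0,Z_0) + \dist_1(X_0,Z_0)\big) = 2\left(\tfrac{1}{i+1} + \tfrac{1}{j+1} + \tfrac{1}{k+1}\right)$, using $\dist_1(X_0,Y_0) = \tfrac{1}{i+1}+\tfrac{1}{j+1}$, $\dist_1(Y_0,Z_0) = \tfrac{1}{j+1}+\tfrac{1}{k+1}$, $\dist_1(X_0,Z_0) = \tfrac{1}{i+1}+\tfrac{1}{k+1}$ together with the fact that the top triangle $X_{i+1},Y_{j+1},Z_{k+1}$ has the same side lengths (the factor $2$ above). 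Therefore
\begin{align*}
c(x_{i,j,k}) = 3 + 2\left(\tfrac{1}{i+1} + \tfrac{1}{j+1} + \tfrac{1}{k+1}\right).
\end{align*}

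Putting the two estimates together,
\begin{align*}
\frac{OPT(I^3_{i,j,k})}{OPT_{LP}(I^3_{i,j,k})} \ge \frac{4 + 2\left(\tfrac{1}{i+1} + \tfrac{1}{j+1} + \tfrac{1}{k+1}\right)}{3 + 2\left(\tfrac{1}{i+1} + \tfrac{1}{j+1} + \tfrac{1}{k+1}\right)} = 1 + \frac{1}{3 + 2\left(\tfrac{1}{i+1} + \tfrac{1}{j+1} + \tfrac{1}{k+1}\right)},
\end{align*}
which is the claim. There is no genuine obstacle here: all the real work is concentrated in Lemma~\ref{optimal tour 3D bound}, which may be assumed. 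The only points that deserve a sentence are that feasibility of $x_{i,j,k}$ carries over to this $\R^3$ instance because it is a purely combinatorial property of the edge weights, and that $OPT_{LP}(I^3_{i,j,k}) > 0$, so the quotient is well defined.
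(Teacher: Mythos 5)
Your proposal is correct and follows essentially the same route as the paper: both use Lemma~\ref{optimal tour 3D bound} for the lower bound on the optimal tour and bound the fractional optimum by the cost $3+2(\tfrac{1}{i+1}+\tfrac{1}{j+1}+\tfrac{1}{k+1})$ of $x_{i,j,k}$, then take the quotient. The only difference is that you spell out the (routine) feasibility of $x_{i,j,k}$, which the paper leaves implicit.
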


\begin{proof}
By Lemma \ref{optimal tour 3D bound}, the length of every tour is at least $4+\frac{2}{i+1}+\frac{2}{j+1}+\frac{2}{k+1}$. The length of the fractional tour $x_{i,j,k}$ is
\begin{align*}
&\dist_1(X_0,X_{i+1})+\dist_1(Y_0,Y_{j+1})+\dist_1(Z_0,Z_{k+1})+\frac{1}{2}\dist_1(X_0,Y_0)+\frac{1}{2}\dist_1(X_0,Z_0)\\
+&\frac{1}{2}\dist_1(Y_0,Z_0)+\frac{1}{2}\dist_1(X_{i+1},Y_{j+1})+\frac{1}{2}\dist_1(X_{i+1},Z_{k+1})+\frac{1}{2}\dist_1(Y_{j+1},Z_{k+1})\\
=&3+\frac{2}{i+1}+\frac{2}{j+1}+\frac{2}{k+1}.
\end{align*}
Hence, the integrality ratio is at least
\begin{align*}
\frac{4+\frac{2}{i+1}+\frac{2}{j+1}+\frac{2}{k+1}}{3+\frac{2}{i+1}+\frac{2}{j+1}+\frac{2}{k+1}}=1+\frac{1}{3+2(\frac{1}{i+1}+\frac{1}{j+1}+\frac{1}{k+1})}.
\end{align*}
\end{proof}

\begin{corollary}
The instance $I^3_{i,j,k}$ has the same or higher integrality ratio as any \textsc{Metric TSP} instance whose optimal fractional solution is isomorphic to $x_{i,j,k}$. 
\end{corollary}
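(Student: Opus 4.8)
The plan is to simply chain together the two one-sided bounds already established. On the one hand, Corollary~\ref{3d embedding same ratio} gives a lower bound on the integrality ratio of $I^3_{i,j,k}$, namely
\begin{align*}
\frac{OPT(I^3_{i,j,k})}{OPT_{LP}(I^3_{i,j,k})}\geq 1+\frac{1}{3+2\left(\frac{1}{i+1}+\frac{1}{j+1}+\frac{1}{k+1}\right)},
\end{align*}
which was obtained by lower bounding $OPT(I^3_{i,j,k})$ via Lemma~\ref{optimal tour 3D bound} and upper bounding $OPT_{LP}(I^3_{i,j,k})$ by the cost of the feasible fractional tour $x_{i,j,k}$. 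On the other hand, Lemma~\ref{optimal gap metric} gives the matching upper bound
\begin{align*}
\frac{OPT(I)}{OPT_{LP}(I)}\leq 1+\frac{1}{3+2\left(\frac{1}{i+1}+\frac{1}{j+1}+\frac{1}{k+1}\right)}
\end{align*}
for \emph{every} \textsc{Metric TSP} instance $I$ whose optimal fractional tour is isomorphic to $x_{i,j,k}$. Combining the two displays yields $\frac{OPT(I^3_{i,j,k})}{OPT_{LP}(I^3_{i,j,k})}\geq \frac{OPT(I)}{OPT_{LP}(I)}$ for any such $I$, which is exactly the claim.

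The only point that requires a word of care is that Corollary~\ref{3d embedding same ratio} does not assert that $x_{i,j,k}$ is itself an \emph{optimal} fractional solution of $I^3_{i,j,k}$ — it only uses it as a feasible one to get an upper bound on $OPT_{LP}$. This is harmless here: using a possibly suboptimal fractional solution only makes the lower bound on the ratio of $I^3_{i,j,k}$ weaker, and that lower bound still matches the universal metric upper bound of Lemma~\ref{optimal gap metric}, so the comparison goes through regardless. (In fact one could note that, since $I^3_{i,j,k}$ uses the $1$-norm and is therefore metric, the two bounds together also show $x_{i,j,k}$ attains the metric upper bound and hence is optimal for $I^3_{i,j,k}$, but this stronger statement is not needed for the corollary.)

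There is essentially no genuine obstacle: the content lives entirely in Lemma~\ref{optimal tour 3D bound} and Lemma~\ref{optimal gap metric}, which are already proved, and this corollary is a bookkeeping step. If anything, the step that deserves a sentence of justification is why the $1$-norm replacement argument in Lemma~\ref{optimal tour 3D bound} indeed yields a valid lower bound on $OPT(I^3_{i,j,k})$ rather than just on the length of some relaxed structure, but that has been handled in the proof of that lemma, so here I would just cite both results and conclude in two lines.
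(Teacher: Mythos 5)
Your proposal is correct and is essentially identical to the paper's own proof: the paper also just chains the lower bound of Corollary~\ref{3d embedding same ratio} with the metric upper bound of Lemma~\ref{optimal gap metric} (the paper's proof nominally cites Lemma~\ref{optimal tour 3D bound} for the upper bound, but that is a citation slip; the content is the same two-line comparison you give). Your extra remark about $x_{i,j,k}$ only being used as a feasible fractional solution is a fair point of care, though not needed for the conclusion.
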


\begin{proof}
By Lemma \ref{optimal tour 3D bound}, any \textsc{Metric TSP} instance with $x_{i,j,k}$ as the optimal fractional tour has at most integrality ratio $1+\frac{1}{3+2(\frac{1}{i+1}+\frac{1}{j+1}+\frac{1}{k+1})}$. The instance $I^3_{i,j,k}$ has at least the same integrality ratio by Corollary \ref{3d embedding same ratio}.
\end{proof}

\begin{remark}
Corollary \ref{3d embedding same ratio} implies an alternative proof of the following statement: There exists a \textsc{Metric TSP} instance with $i+j+k+6$ vertices having an integrality ratio of $1+\frac{1}{3+2(\frac{1}{i+1}+\frac{1}{j+1}+\frac{1}{k+1})}$.
This is a direct consequence of Lemma 4.1 in \cite{benoit2008finding} and Lemma 4.2 in \cite{benoit2008finding} and the key ingredient of Theorem 4.1 in \cite{benoit2008finding}.
\end{remark}

\section{Local Optimality} \label{sec local opt} \label{sec local optimal criterion}
In this section we consider TSP instances with $n$ vertices that can be embedded into $\R^d$ such that the distances arise from a norm which is totally differentiable in every non-zero point. An instance is called locally optimal if its integrality ratio cannot be increased by making small changes to its embedded vertices. We describe a criterion to check if an instance is locally optimal and develop an algorithm that finds locally optimal instances.

\subsection{A Criterion for Local Optimality}
Assume that we have given a norm $\Vert \cdot \Vert$ in $\R^d$ which is totally differentiable in every non-zero point. Let a TSP instance $(K_n,c)$ be given where the vertices can be embedded into $\R^d$ as the vertex set $v=\{v_1,\dots,v_n\}$ with $v_1, \dots, v_n \in \R^d$ such that the cost function $c$ arises from $v$ and the given norm in $\R^d$. Since the instance is completely characterized by the embedded vertex set $v$, we will simply call this instance $v$. W.l.o.g.\ assume that no two vertices of the instance coincide, i.e. $v_i\neq v_j$ for $i\neq j$. We can interpret the vertex set $v$ as a point in $\R^{nd}$ since each of the $n$ vertices is a point in $\R^d$. For another point  $w\in \R^{nd}$ and $\lambda\in \R$ we define a new instance $v+\lambda w$ where we add and multiply the coordinates of the vertices coordinate-wise. Moreover, let $\mathbb{T}$ be the set of optimal tours and $\mathbb{X}$ be the set of optimal fractional tours for $v$. 

\begin{definition}
For a given tour $T$ and fractional tour $x$ and instance $y\in \R^{nd}$ let $l_T,l_x:\R^{nd}\to\R$ be defined as follows: $l_T(y)$ and $l_x(y)$ denote the length of $T$ and cost of $x$ for the instance $y$, respectively. Moreover, let $r_{T,x}:\R^{nd}\to \R$ be defined as $r_{T,x}(y):=\frac{l_T(y)}{l_x(y)}$, the ratio of the length of $T$ and the cost of $x$ for the instance $v$. Furthermore, let $r_y$ be the integrality ratio of the instance $y$.
\end{definition} 

Now, we can define local optimality for a TSP instance.

\begin{definition}
The instance $v$ is called \emph{locally optimal} if there does not exist $w\in \R^{nd}$ such that $\liminf_{\epsilon\to 0}\frac{r_{v+\epsilon w}-r_v}{\epsilon}>0$. 
\end{definition}

\begin{lemma} \label{cond local optimal}
The instance $v$ is locally optimal if and only if for some $x\in \mathbb{X}$ there does not exist $w\in \R^{nd}$ such that $\partial_wr_{T,x}(v)> 0$ for all $T\in \mathbb{T}$.
\end{lemma}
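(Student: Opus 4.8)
The plan is to translate local optimality into a first–order condition on the ratio function, compute the relevant directional derivative in terms of the data $\partial_w r_{T,x}(v)$, and then pass from an ``every direction'' condition to the stated ``some $x$'' condition by a minimax/duality argument. Throughout I work near $v$, where I first record the necessary regularity. Since the norm is totally differentiable away from $0$ and the vertices of $v$ are pairwise distinct, every length function $l_T$ and every cost function $l_x$ is differentiable at $v$, so for each direction $w$ the maps $\epsilon\mapsto l_T(v+\epsilon w)$ and $\epsilon\mapsto l_x(v+\epsilon w)$ are differentiable at $0$ with derivatives $\partial_w l_T(v)$ and $\partial_w l_x(v)$. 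Because $OPT$ is a minimum over the finite tour set and $OPT_{LP}$ a minimum over the fixed subtour polytope, the optimal sets are stable under small perturbations: for small $\epsilon$ one has $OPT(v+\epsilon w)=\min_{T\in\mathbb{T}}l_T(v+\epsilon w)$ and $OPT_{LP}(v+\epsilon w)=\min_{x\in\mathbb{X}}l_x(v+\epsilon w)$. Hence $\epsilon\mapsto r_{v+\epsilon w}$ is a quotient of two one–sided differentiable functions, the defining $\liminf$ is governed by the corresponding one–sided directional derivatives, and an improving direction exists precisely when some $w$ has positive right derivative $\partial_w^{+}r(v)$.

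Next I would compute $\partial_w^{+}r(v)$. By Danskin's theorem for a minimum of finitely many differentiable functions, $\partial_w^{+}OPT(v)=\min_{T\in\mathbb{T}}\partial_w l_T(v)$ and $\partial_w^{+}OPT_{LP}(v)=\min_{x\in\mathbb{X}}\partial_w l_x(v)$, and the quotient rule gives
\[
\partial_w^{+}r(v)=\frac{\big(\min_{T\in\mathbb{T}}\partial_w l_T(v)\big)\,OPT_{LP}(v)-OPT(v)\,\big(\min_{x\in\mathbb{X}}\partial_w l_x(v)\big)}{OPT_{LP}(v)^2}.
\]
Since $l_T(v)=OPT(v)$ and $l_x(v)=OPT_{LP}(v)$ for all $T\in\mathbb{T}$, $x\in\mathbb{X}$, one has $\partial_w r_{T,x}(v)=\dfrac{\partial_w l_T(v)\,OPT_{LP}(v)-OPT(v)\,\partial_w l_x(v)}{OPT_{LP}(v)^2}$, and because $OPT,OPT_{LP}>0$ the two minima above combine to
\[
\partial_w^{+}r(v)=\max_{x\in\mathbb{X}}\ \min_{T\in\mathbb{T}}\ \partial_w r_{T,x}(v).
\]
Consequently $v$ is locally optimal if and only if $\max_{x\in\mathbb{X}}\min_{T\in\mathbb{T}}\partial_w r_{T,x}(v)\le 0$ for every $w$.

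The heart of the proof is to convert this into the asserted \emph{single}-$x$ statement. For a fixed $x$, the map $w\mapsto\partial_w r_{T,x}(v)$ is linear, so $P(x)$ (``no $w$ makes $\partial_w r_{T,x}(v)>0$ for all $T$'') is equivalent, by Gordan's theorem, to $0\in\mathrm{conv}\{\nabla_w\,\partial_{\,\cdot}\,r_{T,x}(v):T\in\mathbb{T}\}$. The subtle point—and the one that must be handled with care—is that the condition in the previous paragraph asserts that for \emph{every} direction \emph{some} optimal fractional tour blocks first–order improvement, whereas the lemma asks for \emph{one} fixed $x\in\mathbb{X}$ blocking improvement in \emph{all} directions; exchanging ``$\forall w\,\exists x$'' for ``$\exists x\,\forall w$'' is exactly a minimax assertion and is where the convexity of $\mathbb{X}$ is indispensable. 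I would set this up as a zero–sum game in which the maximizer chooses $x\in\mathbb{X}$ (convex, compact) and $(x,w)\mapsto\min_{T\in\mathbb{T}}\partial_w r_{T,x}(v)$ is affine in $x$ and, after passing to mixed strategies over the finite set $\mathbb{T}$, concave–convex in the two players' variables with $w$ ranging over the unit ball; applying the von Neumann/Sion minimax theorem then yields an optimal fractional tour $x^{\ast}$ with $\min_{T\in\mathbb{T}}\partial_w r_{T,x^{\ast}}(v)\le 0$ for all $w$, that is $P(x^{\ast})$, while the reverse implication follows by reading the same minimax value in the other order. I expect this minimax step, together with correctly reducing the two–sided $\liminf$ to the one–sided derivative at the (generally non-smooth) point $v$, to be the main obstacle; the differentiability and active–set stability in the first two paragraphs are routine by comparison.
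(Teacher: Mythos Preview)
Your derivation of the one-sided directional derivative via Danskin is correct and is exactly the content of the paper's short proof (stated there in a single sentence). But you then misread the resulting condition: $\max_{x\in\mathbb{X}}\min_{T\in\mathbb{T}}\partial_w r_{T,x}(v)\le 0$ means that \emph{every} $x\in\mathbb{X}$ satisfies $\min_T\partial_w r_{T,x}(v)\le 0$, not merely some $x$. So at the end of your second paragraph you already have ``$v$ locally optimal $\iff$ for \emph{all} $x\in\mathbb{X}$ there is no $w$ with $\partial_w r_{T,x}(v)>0$ for every $T$''. This is precisely what the paper obtains: negating its last displayed equivalence and swapping the commuting universal quantifiers gives $\forall x\,\forall w\,\exists T$. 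The ``for some $x$'' in the lemma's forward direction then follows trivially, with no minimax needed.

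Consequently the Sion/von Neumann step is aimed at a quantifier exchange that is not there. Worse, for the \emph{reverse} direction of the lemma as literally stated (``$\exists x$: no bad $w$'' $\Rightarrow$ locally optimal) it does not help either: since $\sup_w$ and $\max_x$ commute trivially, you would still need $\max_x \sup_w \min_T\le 0$, whereas the hypothesis only gives that the infimum over $x$ of $\sup_w \min_T$ is $\le 0$. Indeed the condition $P(x)$ depends on $x$ through the shift $-r_v\nabla l_x(v)$, so this direction is not a minimax identity at all. The paper does not address it; its proof, like yours once corrected, really establishes the stronger ``for all $x$'' version, and the ``for some $x$'' phrasing should be read as ``for any fixed $x$''. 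Drop the third paragraph: your argument is complete once you have the $\max$--$\min$ formula for the derivative.
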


\begin{proof}
Since the length of the tour and fractional tour is continuous in $v$ and the number of tours and fractional tours is finite, given $w$ for small enough $\epsilon$ the optimal tour of the instance $v+\epsilon w$ is still in $\mathbb{T}$ and the optimal fractional tour is still in $\mathbb{X}$. Hence, the new integrality ratio is 
\begin{align*}
r_{v+\epsilon w}=\frac{\min_{T\in \mathbb{T}}l_T(v+\epsilon w)}{\min_{x \in \mathbb{X}}l_x(v+\epsilon w)}=\max_{x \in \mathbb{X}}\min_{T\in \mathbb{T}}\frac{l_T(v+\epsilon w)}{l_x(v+\epsilon w)}=\max_{x \in \mathbb{X}}\min_{T\in \mathbb{T}}r_{T,x}(v+\epsilon w).
\end{align*}
Note that since the given norm is differentiable the functions $l_T,l_x$ and $r_{T,x}$ are also differentiable. Moreover, we have by definition $r_v=r_{T,x}(v)$ for all $T\in \mathbb{T}$ and $x\in \mathbb{X}$. Therefore, $\liminf_{\epsilon\to 0}\frac{r_{v+\epsilon w}-r_v}{\epsilon}>0$ if and only if we have $\lim_{\epsilon\to 0}\frac{r_{T,x}(v+\epsilon w)- r_{T,x}(v)}{\epsilon}=\partial_w r_{T,x}(v)>0$ for some $x\in \mathbb{X}$ and all $T\in \mathbb{T}$.
\end{proof}

For a tour $T$ and a fractional tour $x$ let $\delta_{T}(v_i)$ and $\delta_{x}(v_i)$ denote the set of vertices incident to $v_i$ in $T$ and the support graph of $x$, respectively. 

\begin{lemma} We have
\begin{align*}
\partial_wl_T(v)&=\frac{1}{2}\sum_{i\in \{1,\dots, n\}} \sum_{v_j\in \delta_{T}(v_i)} \partial_w \Vert v_j-v_i \Vert\\
\partial_wl_x(v)&=\frac{1}{2}\sum_{i\in \{1,\dots, n\}} \sum_{v_j\in \delta_{x}(v_i)} x(\{v_i,v_j\}) \partial_w \Vert v_j-v_i \Vert.
\end{align*}
\end{lemma}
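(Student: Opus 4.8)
The plan is to differentiate the two length functions term by term. Writing out the definitions, for any $y=(y_1,\dots,y_n)\in\R^{nd}$ we have $l_T(y)=\sum_{\{v_i,v_j\}\in E(T)}\Vert y_i-y_j\Vert$, a sum over the $n$ edges of the tour $T$, and $l_x(y)=\sum_{\{v_i,v_j\}\in E(K_n)}x(\{v_i,v_j\})\Vert y_i-y_j\Vert$. In each case every summand is the composition of the fixed linear map $y\mapsto y_i-y_j$ from $\R^{nd}$ to $\R^d$ with the norm $\Vert\cdot\Vert\colon\R^d\to\R$.

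First I would verify that each summand is differentiable at $v$ in the direction $w$. By assumption no two vertices of $v$ coincide, so $v_i-v_j\neq 0$ whenever $\{v_i,v_j\}$ is an edge; since the norm is totally differentiable at every non-zero point of $\R^d$, and the inner map is linear (hence differentiable), the chain rule gives that $y\mapsto\Vert y_i-y_j\Vert$ is differentiable at $v$ with directional derivative exactly $\partial_w\Vert v_j-v_i\Vert$ (the directional derivative of that single function at $v$ in direction $w$, which by the chain rule equals the derivative of the norm at $v_i-v_j$ applied to $w_i-w_j$). A finite sum of functions differentiable at $v$ is again differentiable at $v$, and the directional derivative is additive, so $\partial_wl_T(v)=\sum_{\{v_i,v_j\}\in E(T)}\partial_w\Vert v_j-v_i\Vert$ and likewise $\partial_wl_x(v)=\sum_{\{v_i,v_j\}\in E(K_n)}x(\{v_i,v_j\})\partial_w\Vert v_j-v_i\Vert$.

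Finally I would rewrite each edge sum as a sum over vertices. For any symmetric array $(a_{ij})$ one has $\sum_{\{v_i,v_j\}\in E(T)}a_{ij}=\frac12\sum_{i=1}^n\sum_{v_j\in\delta_T(v_i)}a_{ij}$, because each edge $\{v_i,v_j\}$ of $T$ is picked up twice on the right, once from $v_j\in\delta_T(v_i)$ and once from $v_i\in\delta_T(v_j)$. Applying this with $a_{ij}=\partial_w\Vert v_j-v_i\Vert$ gives the first identity. The same manipulation applied to the support graph of $x$, with $a_{ij}=x(\{v_i,v_j\})\partial_w\Vert v_j-v_i\Vert$ (edges outside the support contribute $0$), gives the second identity; here the array is still symmetric since $x(\{v_i,v_j\})=x(\{v_j,v_i\})$.

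There is no real obstacle here: the only point that needs care is invoking differentiability of the norm, which is legitimate precisely because the hypothesis $v_i\neq v_j$ keeps every argument $v_i-v_j$ away from the origin, the unique point where a general norm can fail to be differentiable. Everything else is the routine bookkeeping of the factor $\frac12$ coming from counting each edge from both of its endpoints.
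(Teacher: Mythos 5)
Your proof is correct and follows essentially the same route as the paper: write the tour and fractional-tour lengths as (weighted) sums of norms of vertex differences and use linearity of the directional derivative, with the $\frac{1}{2}$ accounting for counting each edge from both endpoints. The extra care you take in justifying differentiability of each summand via the chain rule and the hypothesis $v_i\neq v_j$ is a welcome elaboration of what the paper leaves implicit, but it is not a different argument.
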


\begin{proof}
Note that the length of the tour and fractional tour can be expressed as 
\begin{align*}
l_T(v)&=\frac{1}{2}\sum_{i\in \{1,\dots,n\}}\sum_{v_j\in \delta_{T}(v_i)}\Vert v_j-v_i \Vert\\
l_x(v)&=\frac{1}{2} \sum_{i\in \{1,\dots,n\}}\sum_{v_j\in \delta_{x}(v_i)}x(\{v_i,v_j\}) \Vert v_j- v_i \Vert.
\end{align*}
The statement follows from the fact that the derivative is linear.
\end{proof}

\begin{definition}
Define the function $g_{T,x,v}:\R^{nd} \to \R$ as $g_{T,x,v}(y):=l_T(y)-r_vl_x(y)$. 
\end{definition}

\begin{lemma} \label{optcriterion}
The instance $v$ is locally optimal if and only if for some $x\in \mathbb{X}$ there does not exist $w\in \R^{nd}$ such that $\partial_{w}g_{T,x,v}(v)> 0$ for all $T\in \mathbb{T}$.
\end{lemma}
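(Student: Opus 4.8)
The plan is to derive this directly from Lemma \ref{cond local optimal}, which already characterizes local optimality in terms of the directional derivatives $\partial_w r_{T,x}(v)$. The only work is to show that, at the point $v$, the sign of $\partial_w r_{T,x}(v)$ agrees with the sign of $\partial_w g_{T,x,v}(v)$ for every $T\in\mathbb{T}$ and $x\in\mathbb{X}$; once this is established, the two existence statements about $w$ are literally the same, and the equivalence follows.

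First I would recall that $r_{T,x}(y)=l_T(y)/l_x(y)$ and that $l_x$ is positive and differentiable in a neighborhood of $v$: positivity holds because no two vertices of the instance coincide, so all distances $\Vert v_j-v_i\Vert$ are strictly positive and hence $l_x(v)>0$. Applying the quotient rule at $y=v$ gives
\begin{align*}
\partial_w r_{T,x}(v)=\frac{\partial_w l_T(v)\cdot l_x(v)-l_T(v)\cdot\partial_w l_x(v)}{l_x(v)^2}.
\end{align*}
Now I would use the defining property of $\mathbb{T}$ and $\mathbb{X}$: since $T$ is an optimal tour and $x$ an optimal fractional tour for $v$, we have $r_v=r_{T,x}(v)=l_T(v)/l_x(v)$, i.e. $l_T(v)=r_v\, l_x(v)$. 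Substituting this into the numerator yields
\begin{align*}
\partial_w r_{T,x}(v)=\frac{\partial_w l_T(v)-r_v\,\partial_w l_x(v)}{l_x(v)}=\frac{\partial_w g_{T,x,v}(v)}{l_x(v)},
\end{align*}
where the last equality is just the definition $g_{T,x,v}(y)=l_T(y)-r_v l_x(y)$ together with linearity of the derivative.

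Since $l_x(v)>0$ is a fixed positive constant (independent of $T$ and $w$), we conclude that for every $w\in\R^{nd}$ and every $T\in\mathbb{T}$, $\partial_w r_{T,x}(v)>0$ if and only if $\partial_w g_{T,x,v}(v)>0$. Hence, for a fixed $x\in\mathbb{X}$, there exists $w$ with $\partial_w r_{T,x}(v)>0$ for all $T\in\mathbb{T}$ precisely when there exists $w$ with $\partial_w g_{T,x,v}(v)>0$ for all $T\in\mathbb{T}$. Plugging this equivalence into the statement of Lemma \ref{cond local optimal} gives exactly the claimed criterion. I do not expect any real obstacle here; the only point requiring a word of care is justifying $l_x(v)>0$ (so that the quotient rule applies and the sign is preserved), which is immediate from the standing assumption $v_i\neq v_j$ for $i\neq j$ and the fact that a norm is positive on nonzero vectors.
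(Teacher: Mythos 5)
Your proposal is correct and follows essentially the same route as the paper: both reduce the claim to Lemma \ref{cond local optimal} and show that $\partial_w r_{T,x}(v)>0$ iff $\partial_w g_{T,x,v}(v)>0$ via the quotient rule together with $r_v=l_T(v)/l_x(v)$ and $l_x(v)>0$. Your explicit justification that $l_x(v)>0$ (from the standing assumption $v_i\neq v_j$) is a minor point the paper leaves implicit, but otherwise the arguments coincide.
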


\begin{proof}
By Lemma \ref{cond local optimal} $v$ is locally optimal if and only if for some $x\in \mathbb{X}$ there does not exist $w\in \R^{nd}$ such that $\partial_wr_{T,x}(v)>0$ for all $T\in \mathbb{T}$. We have:
\begin{align*}
\partial_wr_{T,x}(v)=\partial_w\frac{l_T(v)}{l_x(v)}=\frac{(\partial_wl_T(v))l_x(v)-l_T(v)(\partial_wl_x(v))}{l_x(v)^2}.
\end{align*}
Hence
\begin{align*}
\partial_wr_{T,x}(v)> 0 &\Leftrightarrow \frac{(\partial_wl_T(v))l_x(v)-l_T(v)(\partial_wl_x(v))}{l_x(v)^2}> 0\\
&\Leftrightarrow (\partial_wl_T(v))l_x(v)-l_T(v)(\partial_wl_x(v))> 0\\
&\Leftrightarrow \partial_wl_T(v)- \frac{l_T(v)}{l_x(v)}\partial_wl_x(v)> 0\\
&\Leftrightarrow \partial_w(l_T(v)- r_vl_x(v))> 0\\
&\Leftrightarrow \partial_wg_{T,x,v}(v)> 0.
\end{align*}
\end{proof}

\begin{lemma} \label{optcriterion2}
The instance $v$ is locally optimal if and only if for some $x\in \mathbb{X}$ there does not exist $w\in \R^{nd}$ such that $\langle w, \nabla g_{T,x,v}(v) \rangle> 0$ for all $T\in \mathbb{T}$.
\end{lemma}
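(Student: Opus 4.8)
The plan is to derive Lemma~\ref{optcriterion2} directly from Lemma~\ref{optcriterion} by observing that the directional derivative appearing there is simply the inner product of the direction with a gradient. Since both statements carry the identical quantifier structure (``for some $x\in\mathbb{X}$ there does not exist $w\in\R^{nd}$ such that $\dots$ for all $T\in\mathbb{T}$''), it suffices to show that, for each fixed $x\in\mathbb{X}$ and $T\in\mathbb{T}$, the condition $\partial_w g_{T,x,v}(v)>0$ is equivalent to $\langle w,\nabla g_{T,x,v}(v)\rangle>0$.

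First I would check that $g_{T,x,v}$ is differentiable at the point $v$. By definition $g_{T,x,v}(y)=l_T(y)-r_v\, l_x(y)$, and $l_T$ and $l_x$ are finite sums of terms of the form $\Vert y_j-y_i\Vert$ (respectively $x(\{v_i,v_j\})\Vert y_j-y_i\Vert$) ranging over the edges of $T$ and over the support of $x$. Because we assumed $v_i\neq v_j$ for $i\neq j$, each such term is evaluated at the nonzero point $v_j-v_i$, where the norm $\Vert\cdot\Vert$ is totally differentiable by hypothesis; composing with the linear map $y\mapsto y_j-y_i$ keeps differentiability, and a finite linear combination of differentiable functions is differentiable. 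Hence $g_{T,x,v}$ is (totally) differentiable at $v$, and its gradient $\nabla g_{T,x,v}(v)\in\R^{nd}$ exists.

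Then, by the definition of the gradient of a totally differentiable function, the directional derivative in direction $w$ satisfies $\partial_w g_{T,x,v}(v)=\langle w,\nabla g_{T,x,v}(v)\rangle$ for every $w\in\R^{nd}$. Substituting this identity into the criterion of Lemma~\ref{optcriterion} yields exactly the criterion of Lemma~\ref{optcriterion2}: $v$ is locally optimal if and only if for some $x\in\mathbb{X}$ there is no $w\in\R^{nd}$ with $\langle w,\nabla g_{T,x,v}(v)\rangle>0$ for all $T\in\mathbb{T}$.

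This argument is essentially a reformulation rather than a new computation, so I do not anticipate a genuine obstacle; the only point requiring any care is the justification of differentiability of $g_{T,x,v}$ at $v$, which relies on the standing assumptions that the norm is totally differentiable away from the origin and that the embedded vertices are pairwise distinct (so no edge in $T$ or in the support graph of $x$ has length zero). Once that is in place, the equivalence $\partial_w g_{T,x,v}(v)>0\iff\langle w,\nabla g_{T,x,v}(v)\rangle>0$ is immediate and the lemma follows.
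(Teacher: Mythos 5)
Your proposal is correct and follows the paper's own argument: the paper likewise deduces Lemma~\ref{optcriterion2} from Lemma~\ref{optcriterion} via the identity $\partial_w g_{T,x,v}(v)=\langle w,\nabla g_{T,x,v}(v)\rangle$, which holds because the norm is totally differentiable at non-zero points. Your added justification that $g_{T,x,v}$ is differentiable at $v$ (pairwise distinct vertices, finite sums of norms composed with linear maps) is a correct elaboration of what the paper leaves implicit.
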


\begin{proof}
Recall that we assumed that the norm is totally differentiable for $p>1$ in any non-zero point. Therefore, for $w\in \R^{nd}$ we have $\partial_wg_{T,x,v}(v)=\langle w, \nabla g_{T,x,v}(v) \rangle$. The statement follows from Lemma \ref{optcriterion}.
\end{proof}

\begin{theorem}
The instance $v$ is locally optimal if and only if for some $x\in \mathbb{X}$ there exist $\{\lambda_T \geq 0\}_{T\in \mathbb{T}}$ not all zero such that $\sum_{T \in \mathbb{T}}\lambda_T \nabla g_{T,x,v}(v)=\vec{0}$ where $\vec{0}$ is the vector consisting of zeros. 
\end{theorem}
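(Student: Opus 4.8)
The plan is to read off the theorem directly from Lemma~\ref{optcriterion2} by invoking a theorem of the alternative (Gordan's theorem), applied, for each fixed $x\in\mathbb{X}$, to the finite family of vectors $\{\nabla g_{T,x,v}(v)\}_{T\in\mathbb{T}}\subseteq\R^{nd}$. Recall Gordan's theorem: for finitely many vectors $a_1,\dots,a_m\in\R^N$, exactly one of the following holds: (i) there exists $w\in\R^N$ with $\langle w,a_t\rangle>0$ for all $t$; (ii) there exist $\lambda_1,\dots,\lambda_m\geq 0$, not all zero, with $\sum_{t=1}^m\lambda_t a_t=\vec{0}$. This is the standard consequence of the separating-hyperplane theorem applied to the convex hull $C$ of $\{a_1,\dots,a_m\}$: either $\vec{0}\in C$, which after collecting barycentric coefficients yields~(ii), or $\vec{0}$ can be strictly separated from the (compact, convex) set $C$ by a hyperplane whose normal $w$ then satisfies~(i). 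Since there are only finitely many tours, $\mathbb{T}$ is finite, so the family $\{\nabla g_{T,x,v}(v)\}_{T\in\mathbb{T}}$ is finite and Gordan's theorem applies; moreover $\mathbb{T}\neq\emptyset$ and $\mathbb{X}\neq\emptyset$ because the subtour LP is feasible and bounded, so no degenerate empty-index-set case arises.

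With this in hand the argument is short. By Lemma~\ref{optcriterion2}, $v$ is locally optimal if and only if there exists some $x\in\mathbb{X}$ for which there is \emph{no} $w\in\R^{nd}$ with $\langle w,\nabla g_{T,x,v}(v)\rangle>0$ for all $T\in\mathbb{T}$; in other words, if and only if alternative~(i) fails for some $x$. Fixing such an $x$ and applying Gordan's theorem to $\{\nabla g_{T,x,v}(v)\}_{T\in\mathbb{T}}$, the failure of~(i) is equivalent to~(ii), i.e.\ to the existence of $\{\lambda_T\geq 0\}_{T\in\mathbb{T}}$ not all zero with $\sum_{T\in\mathbb{T}}\lambda_T\nabla g_{T,x,v}(v)=\vec{0}$. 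Chaining the two equivalences — and noting that the existential quantifier over $x$ is literally the same on both sides, since Gordan is applied with $x$ held fixed — gives the claimed characterization.

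The only point requiring a little care, rather than a genuine obstacle, is the handling of the quantifier over $x\in\mathbb{X}$: Lemma~\ref{optcriterion2} already expresses local optimality in the form ``for some $x\in\mathbb{X}$, (i) fails'', and the Gordan equivalence replaces ``(i) fails'' by ``(ii) holds'' uniformly in $x$, so the ``for some $x$'' passes through the equivalence unchanged. (If one prefers, one can instead state and use the contrapositive: $v$ is \emph{not} locally optimal iff for \emph{every} $x\in\mathbb{X}$ alternative~(i) holds, which by Gordan is iff for every $x$ alternative~(ii) fails; negating recovers the theorem.) Everything else is the verbatim application of the separating-hyperplane theorem, which is legitimate here precisely because the total differentiability of the norm was used in Lemma~\ref{optcriterion2} to turn the directional derivatives $\partial_w g_{T,x,v}(v)$ into the linear functionals $\langle w,\nabla g_{T,x,v}(v)\rangle$.
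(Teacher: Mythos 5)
Your proposal is correct, and it reaches the theorem by a slightly different formal route than the paper. The paper works with an explicit primal--dual pair of linear programs: the primal is $\min 0$ subject to $\langle w,\nabla g_{T,x,v}(v)\rangle\geq 0$ for all $T\in\mathbb{T}$ (the same LP that later drives the local search algorithm), the dual is $\max 0$ subject to $\sum_T\lambda_T\nabla g_{T,x,v}(v)=\vec{0}$, $\lambda\geq 0$, and the equivalence is argued through complementary slackness between the two all-optimal feasible regions. You instead invoke Gordan's theorem of the alternative for the finite family $\{\nabla g_{T,x,v}(v)\}_{T\in\mathbb{T}}$, proved by separating $\vec{0}$ from the compact convex hull of these gradients, and then push the resulting pointwise-in-$x$ equivalence through the existential quantifier of Lemma~\ref{optcriterion2} --- which you handle correctly, since Gordan is applied with $x$ fixed. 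The two arguments carry the same duality content, but your version has a concrete advantage: the nontrivial direction (no $w$ with all $\langle w,\nabla g_{T,x,v}(v)\rangle>0$ implies a nonzero nonnegative combination summing to $\vec{0}$) is exactly what the separation step delivers, whereas plain complementary slackness, as used in the paper, only directly gives the easy direction (a nonzero dual solution forces every primal solution to have a tight constraint); recovering the hard direction from the LP pair really needs strict complementarity (Goldman--Tucker) or precisely the separation argument you give. The paper's formulation buys a direct link to the algorithmic LP (\ref{local search LP}); yours buys a self-contained and fully rigorous dichotomy. Both are valid; no gap in your argument.
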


\begin{proof}
Consider the following LP:
\begin{align}
\min \ &0 \nonumber\\
s.t. \ \langle w,\nabla g_{T,x,v}(v) \rangle \geq &0 \qquad  \forall T\in \mathbb{T} \label{improvement LP}
\end{align}
and its dual LP
\begin{align*}
\max \ &0\\
s.t. \ \sum_{T\in \mathbb{T}}\lambda_T\nabla g_{T,x,v}(v)= &\vec{0}\\
\lambda_T \geq &0 \qquad \forall T\in \mathbb{T}
\end{align*}
Note that all feasible solutions are optimal and both systems are feasible since we can set all variables equal to zero. By Lemma \ref{optcriterion2}, the instance $v$ is not locally optimal if and only if the primal has a solution where all inequalities are not tight. By complementary slackness, the dual has in this case only solutions where all $\lambda_T$ are zero. Moreover, if the dual has a solution where $\lambda_T\neq 0$ for some $T\in \mathbb{T}$, by complementary slackness the corresponding inequality of the primal is tight for any primal solution.
\end{proof}

\subsection{Local Optimality for the $p$-Norm}
In this subsection we apply the criterion from the last subsection to the $p$-norm for $p>1$ explicitly, i.e.\ we choose $\Vert \cdot \Vert = \Vert \cdot \Vert_p$. Note that the $p$-norm is differentiable in every non-zero point for $p>1$ and hence satisfies the condition for the criterion.

Using a straightforward calculation with the chain rule we get for every unit vector $e\in \R^{nd}$:

\begin{align*}
\partial_e \Vert v_i - v_j \Vert_p=\sgn(\langle v_i-v_j,e \rangle) \frac{\lvert\langle v_i-v_j,e \rangle\rvert^{p-1}}{\Vert v_i-v_j \Vert^{p-1}_p}
\end{align*}
where 
\begin{align*}
\sgn(x):=\begin{cases*} 1 & if $x\geq 0$\\0 & if $x=0$\\ -1& else \end{cases*}
\end{align*}
is the sign function of $x$.

Thus, we have:
\begin{align*}
\partial_el_T(v)&=\frac{1}{2}\sum_{i\in \{1,\dots,n\}} \sum_{q\in \delta_{T}(v_i)} \sgn(\langle v_i-q,e \rangle) \frac{\lvert\langle v_i-q,e \rangle\rvert^{p-1}}{\Vert v_i-q \Vert^{p-1}}\\
\partial_el_x(v)&=\frac{1}{2}\sum_{i\in \{1,\dots,n\}} \sum_{q\in \delta_{x}(v_i)} x(\{v_i,q\})\sgn(\langle v_i-q,e \rangle) \frac{\lvert\langle v_i-q,e \rangle\rvert^{p-1}}{\Vert v_i-q \Vert^{p-1}}
\end{align*}

With this we can compute $\nabla g_{T,x,v}$ as $\nabla g_{T,x,v}=\nabla l_T +g_v\nabla l_x$.

\begin{remark}
The above criterion can also be applied in the case of the 1-norm in restricted form. The 1-norm is totally differentiable for every instance where no two vertices have equal coordinates at the same position, i.e. in the two-dimensional case no two vertices have the same $x$- or $y$-coordinate. Hence, the criterion can be applied in these cases. For instances where there exist multiple vertices with equal coordinates at the same position we can treat these coordinates as one variable. In this case it is differentiable again and the criterion can be applied. Note that if the criterion is applied that way it does not necessarily detect all instances that are not locally optimal.
\end{remark}

\subsection{A Local Search Algorithm} \label{sec local search int ratio}
In this subsection we develop a local search algorithm that finds a local optimal solution with respect to the integrality ratio.

Note that for instances that are not locally optimal the LP (\ref{improvement LP}) has a solution where all inequalities are not tight. Therefore, we can solve a slightly modified LP that gives a direction vector that can be added to the instance to improve the integrality ratio.

We start by generating random instances until we get an instance $v$ with integrality ratio by a given constant greater than 1. In every iteration we solve the following LP and try to improve the current integrality ratio:

\begin{align}
\max \ &\delta \nonumber\\
s.t. \ \langle w,\nabla g_{T,x,v}(v) \rangle \geq &\delta \qquad  \forall T\in \mathbb{T} \label{local search LP}\\
-1 \leq  w_i \leq &1 \qquad \forall i\in\{1,\dots, nd\} \nonumber
\end{align}

If the objective value $\delta$ is greater than zero, $w$ corresponds to a solution of LP (\ref{improvement LP}) where all inequalities are not tight. Note that we added bounds for $w_i$ to ensure that the LP is bounded. Given an optimal solution $w$ of the LP we use binary search to determine the maximal $\eta$ such that $v+\eta w$ has higher integrality ratio than $v$. We maintain a list $\mathbb{T}$ of optimal or near-optimal tours. In each iteration we include the current optimal tour $T^*$ to $\mathbb{T}$ and delete the tours that are by more than a given constant longer than $T^*$. In contrast to the optimal tours we only store one current optimal fractional tour since in practice the local optima has usually many optimal tours but a unique optimal fractional tour (Algorithm \ref{local search for integrality ratio}).

\begin{algorithm}
\caption{Local Search Algorithm for Integrality Ratio}
\label{local search for integrality ratio}
 \hspace*{\algorithmicindent} \textbf{Input:} Number of vertices $n$, accuracy parameters $\epsilon_0,\epsilon_1,\epsilon_2,\epsilon_3>0$ \\
 \hspace*{\algorithmicindent} \textbf{Output:} Locally optimal instance $v$
\begin{algorithmic}[1]
\Do
\State Generate a random instance $v$ with $n$ vertices
\doWhile{integrality ratio of $v$ is smaller than $1+\epsilon_0$}
\State Compute an optimal tour $T^*$ and an optimal fractional tour $x^*$ of $v$
\State Let $\mathbb{T}:=\{T^*\}$
\While{LP (\ref{local search LP}) has a solution $w$ with objective value $>\epsilon_1$}
\State Find by binary search $\eta$ maximal such that $g(v+\eta w)>g(v)$
\If {$\eta<\epsilon_2$}
\State \textbf{break}
\EndIf
\State Let $v:=v+\eta w$
\State Compute an optimal tour $T^*$ and an optimal fractional tour $x^*$ of $v$
\State Set $\mathbb{T}:=\mathbb{T}\cup \{T^*\}$
\State Delete all tours in $\mathbb{T}$ that is at least $\epsilon_3$ longer than $T^*$
\EndWhile
\State \Return $v$
\end{algorithmic}
\end{algorithm}

\section{Integrality Ratio for Euclidean TSP} \label{sec int ratio euclidean}
In this section, we investigate the integrality ratio of \textsc{Euclidean TSP}. Using the local search algorithm described in Section \ref{sec local search int ratio} we can find local optima with respect to the integrality ratio for \textsc{Euclidean TSP} instances. Unfortunately, there are many such local optima. This means that we had to restart the algorithm several times with a small random modification of the last local optimum to get good results. The instances we found in the end with the highest integrality ratio seem to have the following structural properties that share similarities with the instances maximizing the integrality ratio in the rectilinear and metric case:
\begin{observation} \label{property euclidean}
We observe the following properties for the instances with the highest integrality ratio for \textsc{Euclidean TSP} found by the local search algorithm:
\begin{enumerate}
\item The optimal fractional solution is isomorphic to $x_{i,j,k}$ for some $i,j,k$. \label{prop support}
\item The non-intersecting shortcuts of the pseudo-tours in $\mathfrak{T}$ are optimal tours (Figure \ref{non intersect shortcuts}). \label{prop shortest tours}
\end{enumerate}
If the optimal fractional solution is isomorphic to $x_{i,j,i}$ for some $i,j$, we further obtain the following properties:
\begin{enumerate}
\setcounter{enumi}{2}
\item The instance can be rotated and shifted such that it is symmetric to the $x$- and $y$-axis and the inner vertices lie on the $x$-axis. \label{prop symmetry}
\item The outer vertices lie on an ellipse with foci on the $x$-axis. \label{prop ellipse}
\end{enumerate}
\end{observation}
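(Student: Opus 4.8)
Strictly speaking Observation~\ref{property euclidean} is an empirical statement about the output of Algorithm~\ref{local search for integrality ratio}, so in place of a proof I would supply two things: a description of the computational procedure that produced the evidence, and the partial structural reasoning that explains why these four properties are the natural ones to expect. For the computational part the plan is, for each small $n$ in the tested range, to run the local search from many random starts and---since the integrality-ratio landscape has numerous local optima---to repeatedly restart from small perturbations of the best instance found so far until no further improvement occurs. For the resulting instance one then reads off an optimal fractional tour $x^*$, exhibits an explicit isomorphism of its support graph onto some $x_{i,j,k}$ (property~\ref{prop support}), enumerates all tours to confirm that precisely the non-intersecting shortcuts of the pseudo-tours in $\mathfrak{T}$ are optimal (property~\ref{prop shortest tours}), and in the case $x^*\cong x_{i,j,i}$ applies a rigid motion placing the two $\tfrac12$-triangles in mirror position, checks the resulting coordinates against reflection in both axes (property~\ref{prop symmetry}), and fits a conic through the outer vertices, observing that it is an ellipse with foci on the $x$-axis (property~\ref{prop ellipse}).

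For the structural side, properties~\ref{prop support} and~\ref{prop shortest tours} are exactly the combinatorial configuration that lets the averaging argument of Lemma~\ref{optimal gap metric} go through: a convex combination of the $\chi^T$ with $T\in\mathfrak{T}$, weighted by the $\lambda$-coefficients defined there, reproduces a scalar multiple of $x_{i,j,k}$, and this is in turn the content of Conjecture~\ref{structure conjecture}, so an instance that pushes the ratio as high as possible should match this pattern. Property~\ref{prop symmetry} is consistent with $x_{i,j,i}$ being invariant under the involution that swaps the $X$- and $Z$-paths and reverses each path, so one expects a symmetric optimizer, even though symmetry of a local optimum is not forced in general.

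The step I expect to be the real obstacle---and the one that is presumably only supported numerically here rather than proved---is property~\ref{prop ellipse}. By Lemma~\ref{lemma:convexhull} the optimal tours must visit the outer (hull) vertices in convex position in their cyclic order, and the local-optimality criterion of Section~\ref{sec local opt} (existence of nonnegative $\lambda_T$, not all zero, with $\sum_{T}\lambda_T\nabla g_{T,x,v}(v)=\vec{0}$) yields a system of gradient equations on the vertex coordinates. One would have to show that the only solutions compatible with the required symmetry place the hull vertices on a common ellipse whose foci lie on the $x$-axis; turning the first-order stationarity equations for the Euclidean norm into this clean geometric conclusion is precisely the hard part. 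Accordingly I would present property~\ref{prop ellipse} (together with the specific triples $i,j,k$ that occur) as an observed regularity, and then---as the remainder of the section does---take the four properties as the \emph{definition} of the family of instances to be analysed, rather than as something derived, noting explicitly that the observation is asserted only for the best instances the search produced and not for arbitrary local optima.
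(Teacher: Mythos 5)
Your treatment matches the paper's: the statement is labelled an Observation precisely because it is an empirical finding about the local-search output, and the paper offers no proof beyond the computational evidence and the remark that the structure mirrors the rectilinear and metric optimizers, after which properties \ref{prop support}--\ref{prop ellipse} are taken as the working assumptions for the ellipse construction algorithm. Your proposal correctly identifies this status, describes essentially the same verification procedure, and rightly flags property \ref{prop ellipse} as the part supported only numerically, so there is nothing to correct.
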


In the following we will refer to these properties by property \ref{prop support}, \ref{prop shortest tours}, \ref{prop symmetry} and \ref{prop ellipse}. Using them we develop an efficient algorithm that constructs instances with high integrality ratio.

\subsection{The Ellipse Construction Algorithm}
In this subsection we describe an algorithm we call the \emph{ellipse generation algorithm} that efficiently generates instances satisfying Observation \ref{property euclidean}.

In the following we assume that $i=k$ to use the additional properties \ref{prop symmetry} and \ref{prop ellipse} for an efficient algorithm to construct instances that match these patterns. This lets us construct instances with a larger number of vertices since due to the high number of local optima it is quite time consuming to use the local search method to generate good instances. The instances generated by the new efficient algorithm have for all tested $n$ high integrality ratio. Like in the rectilinear case there are values of $n$ where the optimal fractional solution is isomorphic to $x_{i,j,i+1}$ for some $i,j$, for example $n=8,11$ (Figure~\ref{optEuc8}). Nevertheless, we ignore these cases since these instances are less symmetric and we do not understand their structure well. 

\begin{figure}
\centering
\begin{minipage}[t]{0.46\textwidth}
	\centering
 \includegraphics[trim=70 0 0 0,scale=0.6]{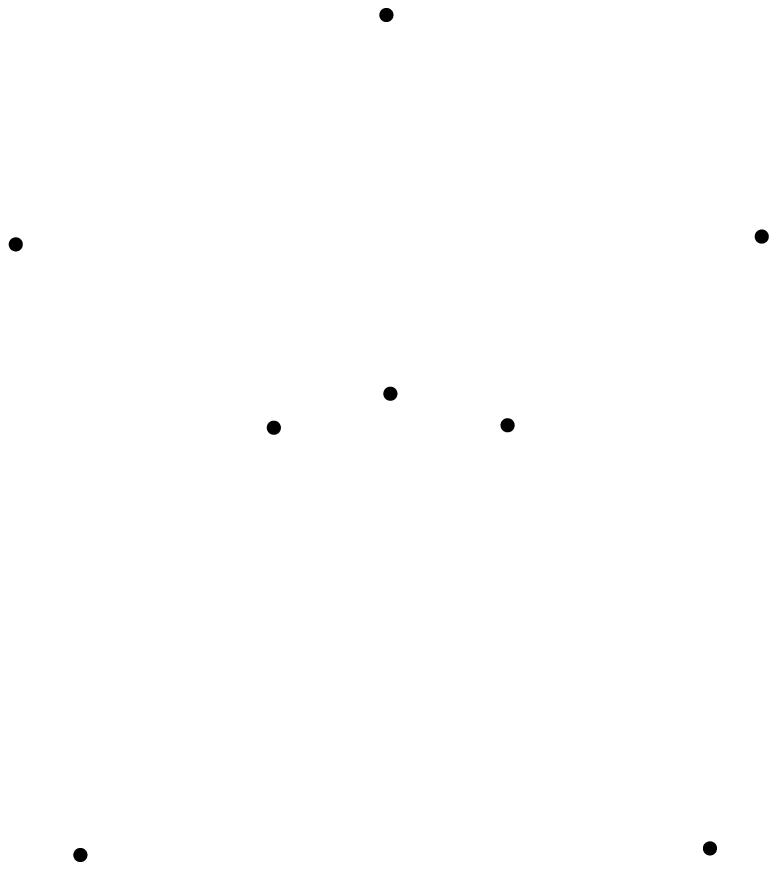}
	\captionsetup{labelformat=empty}
	\caption{$n=8$}
	\end{minipage}
	\hfill%
	\begin{minipage}[t]{0.46\textwidth}
	\centering	
 \includegraphics[trim=70 0 0 0,scale=0.6]{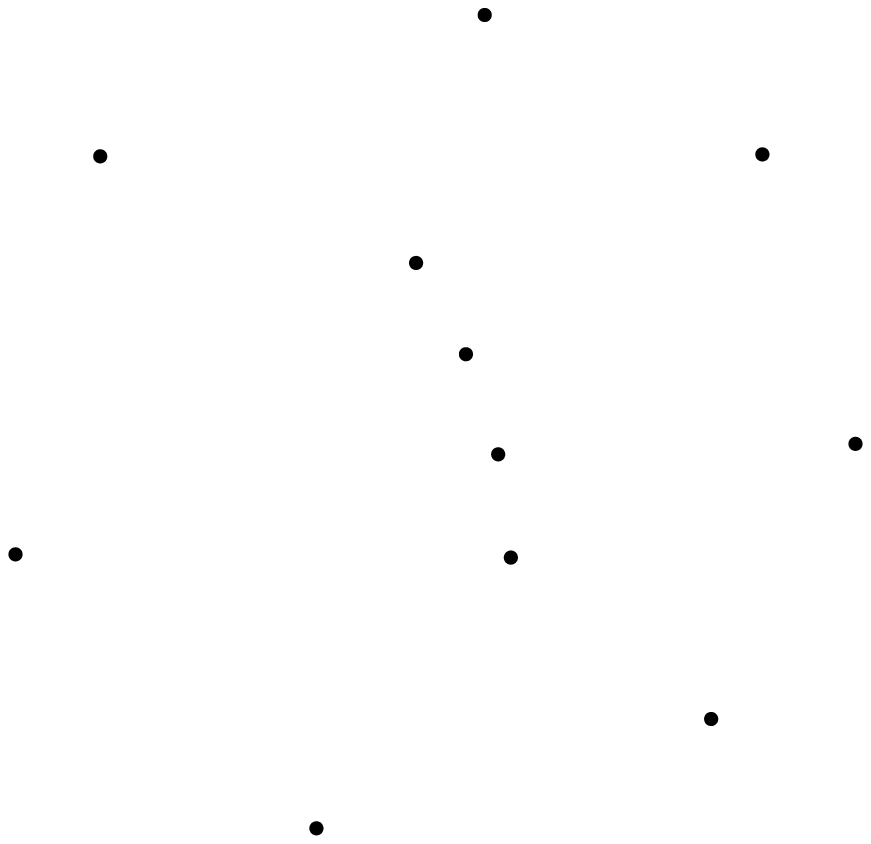}
	\captionsetup{labelformat=empty}
	\caption{$n=11$}	
	\end{minipage} 
 
  \caption{The instances with the highest integrality ratio for $n=8,11$ found by the local search algorithm. Their optimal fractional tours are isomorphic to $x_{1,1,0}$ and $x_{2,2,1}$, respectively. The integrality ratios of these instances are approximately 1.0435 and 1.0789, respectively. The best integrality ratios found by the ellipse generation algorithm are approximately 1.0413 and 1.0784 achieved by instances which optimal fractional tour are isomorphic to $x_{1,0,1}$ and $x_{1,3,1}$, respectively.}
  \label{optEuc8}
\end{figure}

By property \ref{prop support}, we may assume that the optimal fractional tour is isomorphic to $x_{i,j,i}$ with some fixed $i,j$. By symmetry, we can w.l.o.g.\ assume that the vertices $X_0,X_{i+1},Z_0,Z_{i+1}$ have the coordinates $(-b,-1),(b,-1),(-b,1),(b,1)$ for some $b>0$, respectively. The explicit value of $b$ will be chosen later in the procedure. In the following we assume that an explicit value of $b$ is given and describe how to determine the coordinates of the inner and outer vertices.

\subsubsection*{Inner Vertices}
We first compute the coordinates of the inner vertices. By property \ref{prop symmetry}, we know that the inner vertices lie on the $x$-axis and are symmetric to $(0,0)$. We first set the coordinates of $Y_0$ to $(-f,0)$ where $f$ is a parameter to be determined later. With a given value for $f$ the coordinates of $Y_{h+1}$ for increasing $h$ can be iteratively determined as follows: Assume we already know the coordinates of $Y_h$. Consider the difference of arbitrary shortcuts of the pseudo-tours $T^\leftarrow$ and $T^\circ_h$ (by symmetry all shortcuts have the same length) which is 
\begin{align*}
\diffInner_h:=&\dist_2(X_0,Y_{j+1})+\dist_2(X_{i+1},Z_{i+1})+\dist_2(Y_h,Y_{h+1})-\dist_2(X_0,Y_h)\\
-&\dist_2(Y_{h+1},Z_{i+1})-\dist_2(X_{i+1},Y_{j+1})
\end{align*}
(Figure \ref{inner diff}). Since the coordinates of $X_0,X_{i+1},Z_{i+1},Y_h$ and $Y_{j+1}$ are already known, this is equivalent to $\diffInner_h:=\dist_2(Y_h,Y_{h+1})-\dist_2(Y_{h+1},Z_{i+1})+c$ for some constant $c$. Since $Y_{h+1}$ lies on the $x$-axis right of $Y_{h}$ and left of $Z_{i+1}$ the condition $\diffInner_h=0$ from property \ref{prop shortest tours} determines the position of $Y_{h+1}$ uniquely as $\diffInner_h$ is monotonic increasing in the $x$-coordinate of $Y_{h+1}$. By symmetry, this also determines the position of $Y_{j+1-(h+1)}$. 

\begin{figure}
\centering
\definecolor{ffqqqq}{rgb}{1,0,0}
\begin{tikzpicture}[line cap=round,line join=round,>=triangle 45,x=2cm,y=2cm]
\draw [line width=2pt] (-0.5761520080509908,0)-- (-0.18897861556520823,0);
\draw [line width=2pt] (1.9894031203497522,1)-- (1.9894031203497522,-1);
\draw [line width=2pt,color=ffqqqq] (1.9894031203497522,-1)-- (0.9973071205038739,0);
\draw [line width=2pt,color=ffqqqq] (-0.18897861556520823,0)-- (1.9894031203497522,1);
\draw [line width=2pt] (-1.9894031203497522,-1)-- (0.9973071205038739,0);
\draw [line width=2pt,color=ffqqqq] (-1.9894031203497522,-1)-- (-0.5761520080509908,0);
\begin{scriptsize}
\draw [fill=black] (-1.9894031203497522,1) circle (2pt);
\draw[color=black] (-1.9446839190869623,1.097071007979368) node {$Z_0$};
\draw [fill=black] (1.9894031203497522,1) circle (2pt);
\draw[color=black] (2.056288670280544,1.137071007979368) node {$Z_{i+1}$};
\draw [fill=black] (-1.9894031203497522,-1) circle (2pt);
\draw[color=black] (-2.0085398407127775,-0.8784715673192991) node {$X_0$};
\draw [fill=black] (1.9894031203497522,-1) circle (2pt);
\draw[color=black] (2.236288670280544,-0.9024175379289799) node {$X_{i+1}$};
\draw [fill=black] (-0.18897861556520823,0) circle (2pt);
\draw[color=black] (-0.22855602539316885,0.11927720808406819) node {$Y_2$};
\draw [fill=black] (-0.5761520080509908,0) circle (2pt);
\draw[color=black] (-0.5797635943351545,0.11927720808406819) node {$Y_1$};
\draw [fill=black] (-0.9973071205038739,0) circle (2pt);
\draw[color=black] (-0.9788631044965016,0.11927720808406819) node {$Y_0$};
\draw [fill=black] (0.9973071205038739,0) circle (2pt);
\draw[color=black] (1.0625308899787895,0.13932223257600084) node {$Y_{j+1}$};
\end{scriptsize}
\end{tikzpicture}
  \caption{Construction of $Y_{h+1}$ for $h+1=2$. The edges shown are that of $T^\leftarrow \triangle T^\circ_h$. The condition $\diffInner_h=0$ gives that the length of the red edges is equal to that of the black edges.}
  \label{inner diff}
\end{figure}
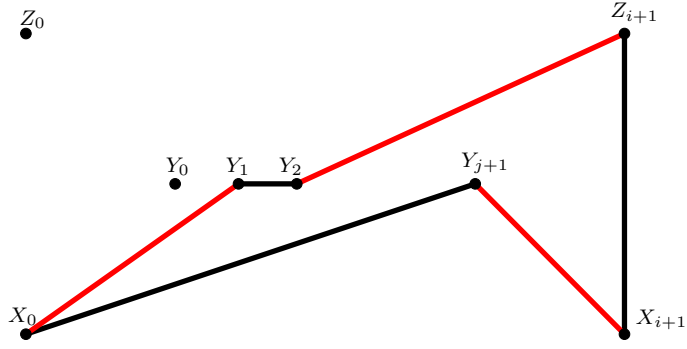

For odd $j$ we further know that the coordinate of $Y_{\frac{j+1}{2}}$ is by symmetry $(0,0)$. Therefore, we need to find $f$ such that $\diffInner_\frac{j-1}{2}=0$. Similarly, for even $j$ the coordinate of $Y_{\frac{j}{2}+1}$ is determined by symmetry and we need to find $f$ such that $\diffInner_{\frac{j}{2}}=0$. Now, we could try every value of $f$ up to a certain accuracy. To speed up the calculation, we make the following observation:

\begin{observation}
The $x$-coordinate of $Y_l$ for a fixed $l$ with $1\leq l \leq j$ determined by the procedure above is monotonically decreasing in $f$.
\end{observation}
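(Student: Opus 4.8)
The plan is to write the inner-vertex recursion in closed form and then push monotonicity through it by induction on the index. First I would fix coordinates: write $Y_l=(y_l,0)$ for $0\le l\le j+1$, so that $y_0=-f$ and, by property \ref{prop symmetry}, $y_{j+1}=f$; recall also that $X_0=(-b,-1)$, $X_{i+1}=(b,-1)$ and $Z_{i+1}=(b,1)$ are fixed. Substituting these into the definition of $\diffInner_h$ and writing out each Euclidean distance, the defining equation $\diffInner_h=0$ becomes, after moving the two terms depending on $Y_{h+1}$ to one side,
\begin{align*}
y_{h+1}-\sqrt{(b-y_{h+1})^2+1}\;=\;y_h+\sqrt{(y_h+b)^2+1}\;+\;\underbrace{\Big(\sqrt{(b-f)^2+1}-\sqrt{(f+b)^2+1}-2\Big)}_{=:\,C(f)}.
\end{align*}
Introducing $\phi(t):=t-\sqrt{(b-t)^2+1}$ and $\psi(t):=t+\sqrt{(t+b)^2+1}$, this reads $\phi(y_{h+1})=\psi(y_h)+C(f)$, i.e.\ $y_{h+1}=\phi^{-1}\big(\psi(y_h)+C(f)\big)$; the inversion is legitimate because, as already noted in the text, $\diffInner_h$ — equivalently $\phi$ — is strictly monotone in the $x$-coordinate of $Y_{h+1}$, so on the relevant range $\phi$ is a strictly increasing bijection.

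Next I would collect the three elementary monotonicity facts that make the induction run. From $\lvert (b-t)/\sqrt{(b-t)^2+1}\rvert<1$ we get $\phi'(t)=1+(b-t)/\sqrt{(b-t)^2+1}>0$, so $\phi$ is strictly increasing and hence $\phi^{-1}$ is strictly increasing on its domain; similarly $\psi'(t)=1+(t+b)/\sqrt{(t+b)^2+1}>0$, so $\psi$ is strictly increasing. Finally, since $u\mapsto u/\sqrt{u^2+1}$ is strictly increasing and $f-b<f+b$ for $b>0$,
\begin{align*}
C'(f)=\frac{f-b}{\sqrt{(b-f)^2+1}}-\frac{f+b}{\sqrt{(f+b)^2+1}}<0,
\end{align*}
so $C$ is strictly decreasing in $f$.

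The statement then follows by induction on $l$. The base case is $y_0=-f$, which is strictly decreasing in $f$. For the step, assume $y_h$ is strictly decreasing in $f$; then $\psi(y_h)$ is strictly decreasing (the increasing $\psi$ composed with a decreasing function), adding the strictly decreasing $C(f)$ keeps $\psi(y_h)+C(f)$ strictly decreasing, and applying the increasing $\phi^{-1}$ yields that $y_{h+1}=\phi^{-1}\big(\psi(y_h)+C(f)\big)$ is strictly decreasing in $f$. Hence every $y_l$ fixed by this recursion is strictly decreasing in $f$, which is the claim; the restriction $l\le j$ in the statement simply reflects that $y_{j+1}=f$ is, by contrast, increasing in $f$.

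The analysis here is routine; the only point that needs care is the bookkeeping around the regime in which the construction is defined. One has to know that, for all $f$ in the range under consideration, the recursion stays inside this regime — the vertices keep their order, $y_h<y_{h+1}<b$ so that $\dist_2(Y_h,Y_{h+1})=y_{h+1}-y_h$, and $\psi(y_h)+C(f)$ stays in the range of $\phi$ — but this is exactly the regime in which the procedure (and the uniqueness of $Y_{h+1}$ quoted from the text) is defined, so no additional estimates are required beyond what is already implicit in the construction.
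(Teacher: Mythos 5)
Your proof is correct and takes exactly the route the paper only sketches (``induction and monotonicity arguments''): you make the recursion explicit as $y_{h+1}=\phi^{-1}\bigl(\psi(y_h)+C(f)\bigr)$ and propagate strict monotonicity in $f$ by induction, and the sign computations for $\phi'$, $\psi'$ and $C'$ as well as the reduction of $\diffInner_h=0$ to this form all check out against the stated coordinates. Since the paper provides no further detail for this observation, your write-up is a faithful, fully worked-out version of its intended argument, including the sensible handling of the regime $y_h<y_{h+1}<b$ in which the procedure is defined.
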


This observation can be shown using induction and monotonicity arguments. Hence, we can use binary search to find the correct $f$ such that the inner vertices are symmetric to the $y$-axis (Algorithm \ref{Inner Vertices Algorithm}).

\begin{algorithm}[H]
\caption{Inner Vertices Algorithm}
\label{Inner Vertices Algorithm}
 \hspace*{\algorithmicindent} \textbf{Input:} $i,j,b$, accuracy $\epsilon$ \\
 \hspace*{\algorithmicindent} \textbf{Output:} Coordinates of the inner vertices $Y_0,\dots, Y_{j+1}$
\begin{algorithmic}[1]
\If{$j$ is odd}:
\State Set the coordinates of the vertex $Y_{\frac{j+1}{2}}:=(0,0)$
\EndIf
\State \textbf{Binary search} for $f$ such that the inner vertices satisfy Observation \ref{property euclidean}:
\For{\textbf{each} value $f$ we evaluate}
\State Set the coordinates of $Y_0:=(-f,0)$ and $Y_{j+1}:=(f,0)$
\For {$h$ from 0 to $\lfloor\frac{j}{2} \rfloor-1$}
\State Determine the coordinates of $Y_{h+1}$
\State Set by symmetry the coordinates of $Y_{j+1-(h+1)}$
\EndFor
\If{$\lvert \diffInner_{\lfloor \frac{j}{2} \rfloor} \rvert<\epsilon$}
\State \Return coordinates of $Y_0,\dots, Y_{j+1}$
\Else
\State Increase or decrease $f$ depending on the sign of $\diffInner_{\lfloor \frac{j}{2} \rfloor}$ and repeat
\EndIf
\EndFor
\end{algorithmic}
\end{algorithm}

\subsubsection{Outer Vertices}
The outer vertices are harder to compute since they are not uniquely determined by $b$ in contrast to the inner vertices. By property \ref{prop ellipse}, we know that $X_0,X_{i+1},Z_0$ and $Z_{i+1}$ and the other outer vertices lie on an ellipse. Unfortunately, five vertices are needed to determine an ellipse, therefore we need one more parameter. By property \ref{prop symmetry}, the coordinate of the foci of the ellipse have the form $(-e,0)$ and $(e,0)$. We will take the value of $e$ as an additional parameter.

Using the ellipse we make the observation that the coordinate of the other outer vertices are uniquely determined. We iteratively construct the coordinates of $Z_{h+1}$ for increasing $h$. Assume that the coordinates of $Z_h$ are already known. Consider the difference of the length of any non-intersecting shortcuts of $T^\leftarrow$ and $T^\uparrow_h$ (by symmetry all of the non-intersecting shortcuts have the same length). It is equal to 
\begin{align*}
\diffOuter_h:=&\dist_2(X_0,Y_0)+\dist_2(Z_0,Y_{j+1})-\dist_2(X_0,Z_0)-\dist_2(Z_h,Y_0)\\
-&\dist_2(Z_{h+1},Y_{j+1})+\dist_2(Z_h,Z_{h+1}).
\end{align*}
Moreover, we know the coordinates of $X_0,Z_0, Z_h, Y_0$ and $Y_{j+1}$. By property \ref{prop shortest tours} we have $\diffOuter_h=0$ and therefore $\dist_2(Z_h,Z_{h+1})-\dist_2(Z_{h+1},Y_{j+1})+c=0$ for some constant $c$. This equation describes a hyperbola with foci $Z_h$ and $Y_{j+1}$. It intersects the ellipse twice, once on each side of the line $Z_hY_{j+1}$ (Figure \ref{intersection hyperbola ellipse}). Thus, we can determine $Z_{h+1}$ as the intersection of the hyperbola with the ellipse that lies on the same side as $Z_{i+1}$.

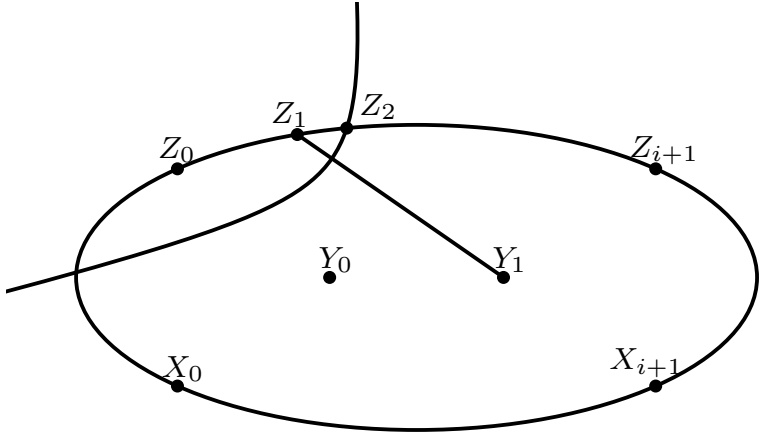
\begin{figure}
\centering
\resizebox{.7\textwidth}{!}{%
\begin{tikzpicture}[line cap=round,line join=round,>=triangle 45,x=1.cm,y=1.cm]
\clip(-3.77266940744422,-2.0423205570473986) rectangle (3.881616063018635,2.5329202675613285);
\draw [rotate around={0:(0,0)},line width=1pt] (0,0) ellipse (3.1326049462809213cm and 1.4047112690740755cm);
\draw [line width=1pt] (-1.0975232623670848,1.3156765228537806)-- (0.7997396721745,0);
\draw [samples=50,domain=-0.99:0.99,rotate around={145.26022782461806:(-0.1488917950962924,0.6578382614268903)},xshift=-0.1488917950962924cm,yshift=0.6578382614268903cm,line width=1pt] plot ({0.7668762603316844*(1+(\x)^2)/(1-(\x)^2)},{0.8628752170697573*2*(\x)/(1-(\x)^2)});
\begin{scriptsize}
\draw [fill=black] (2.2,1) circle (1.5pt);
\draw[color=black] (2.2785487286386785,1.1718227387055875) node {$Z_{i+1}$};
\draw [fill=black] (2.2,-1) circle (1.5pt);
\draw[color=black] (2.116797790394935,-0.7838494355572719) node {$X_{i+1}$};
\draw [fill=black] (-2.2,1) circle (1.5pt);
\draw[color=black] (-2.2042629884022085,1.17915319637456) node {$Z_0$};
\draw [fill=black] (-2.2,-1) circle (1.5pt);
\draw[color=black] (-2.1464947961723,-0.842733531672226) node {$X_0$};
\draw [fill=black] (0.7997396721745,0) circle (1.5pt);
\draw[color=black] (0.8574511997829334,0.16243301312817615) node {$Y_1$};
\draw [fill=black] (-0.7997396721745,0) circle (1.5pt);
\draw[color=black] (-0.7427277249855276,0.16243301312817615) node {$Y_0$};
\draw [fill=black] (-1.0975232623670848,1.3156765228537806) circle (1.5pt);
\draw[color=black] (-1.1644355282638588,1.4995477959700824) node {$Z_1$};
\draw [fill=black] (-0.6425331580457277,1.3748451724394826) circle (1.5pt);
\draw[color=black] (-0.3505390521808485,1.574646445868963) node {$Z_2$};
\end{scriptsize}
\end{tikzpicture}}
  \caption{Construction of $Z_{h+1}$ with $h+1=2$ and $j=0$ as the intersection of the ellipse with the hyperbola.}
  \label{intersection hyperbola ellipse}
\end{figure}

By symmetry, the coordinates of $Z_{i+1-(h+1)},X_{h+1},X_{i+1-(h+1)}$ are determined by that of $Z_{h+1}$. For odd $i$ by symmetry the coordinate of the vertex $Z_{\frac{i+1}{2}}$ has $x$-coordinate 0 and has positive $y$-coordinate. Since the vertex lies on the ellipse, its coordinates are uniquely determined. Therefore, we want to choose $e$ such that $\diffOuter_{\frac{i-1}{2}}=0$. For even $i$ the coordinates of $Z_{\frac{i}{2}+1}$ are determined by symmetry and we want to choose $e$ such that $\diffOuter_{\frac{i}{2}}=0$. It remains to determine the value(s) of $e$ such that the condition above is satisfied. We could simply try every value of $e$ up to a certain accuracy. To speed up the process, we make the following observations:

\begin{observation} \label{outer coordinate observation}
The $x$-coordinate of $Z_{\lfloor\frac{i}{2}\rfloor}$ determined by the procedure above is monotonically increasing in $e$.
\end{observation}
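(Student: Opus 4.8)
The plan is to prove, by induction on $h$, the stronger statement that for every $0\le h\le\lfloor i/2\rfloor$ the $x$-coordinate of the vertex $Z_h$ produced by the construction is a non-decreasing function of $e$, and strictly increasing once $h\ge 1$; the observation is then the case $h=\lfloor i/2\rfloor$. Throughout, $b$ is held fixed, so the inner vertices, which by Algorithm~\ref{Inner Vertices Algorithm} depend only on $b$ (and $i,j$), are fixed as well; write $Y_0=(-y_0,0)$, $Y_{j+1}=(y_0,0)$ with $y_0>0$, and recall $X_0=(-b,-1)$, $Z_0=(-b,1)$ are fixed. By construction (cf.\ Figure~\ref{intersection hyperbola ellipse}) the vertices $Z_0,\dots,Z_{\lfloor i/2\rfloor}$ lie on the upper-left arc of the ellipse with $-b\le x<0$ and increasing $x$-coordinate, and only $e$ varies.

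\emph{Step 1 (the family of ellipses).} First I would pin down $E_e$. It has foci $(\pm e,0)$ and passes through $(\pm b,\pm1)$; writing $a=a(e)$ for the horizontal semi-axis and using $b'^2=a^2-e^2$, substitution of $(b,1)$ yields $(a^2-e^2)(a^2-b^2)=a^2$, so $a^2$ is the larger root of $u^2-(e^2+b^2+1)u+e^2b^2=0$. A short computation with this quadratic shows $a(e)$ is strictly increasing in $e$, and eliminating $b'$ gives the identity
\begin{align*}
y_{E_e}(x)^2=\frac{a(e)^2-x^2}{a(e)^2-b^2}\qquad\text{for }|x|\le a(e)
\end{align*}
for the height of the upper arc. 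Differentiating in $a^2$ shows that for every fixed $x$ with $-b<x\le 0$ the height $y_{E_e}(x)$ is \emph{strictly decreasing} in $e$ (the sign of the derivative is that of $x^2-b^2<0$), while for $|x|>b$ it increases and the leftmost point moves left. So as $e$ grows, the arc carrying $Z_1,\dots,Z_{\lfloor i/2\rfloor}$ sinks strictly downward, and on it $y_{E_e}$ is increasing in $x$.

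\emph{Step 2 (the recursion) and Step 3 (the induction).} Using $\dist_2(X_0,Z_0)=2$, $\dist_2(X_0,Y_0)=\sqrt{(b-y_0)^2+1}$ and $\dist_2(Z_0,Y_{j+1})=\sqrt{(b+y_0)^2+1}$, the condition $\diffOuter_h=0$ from property~\ref{prop shortest tours} simplifies to
\begin{align*}
\dist_2(Z_h,Z_{h+1})-\dist_2(Z_{h+1},Y_{j+1})=\dist_2(Z_h,Y_0)+C,\qquad C:=2-\sqrt{(b-y_0)^2+1}-\sqrt{(b+y_0)^2+1},
\end{align*}
with $C$ independent of $e$. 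For $Z_{h+1}$ confined to the upper-left arc of $E_e$ and to the right of $Z_h$, the left-hand side $\Phi_e(Z_{h+1})$ is \emph{strictly increasing} as $Z_{h+1}$ moves rightward along the arc (its first summand increases, its second decreases), so this equation determines $Z_{h+1}$ uniquely — the intersection ``on the $Z_{i+1}$-side''. For the induction: in the base case $h=0$ the data $Z_0,C$ are $e$-independent, so only $E_e$ moves; since its arc drops strictly (Step~1) and $\Phi_e$ is strictly increasing in the rightward position, the solution $Z_1$ is forced strictly to the right. For the step $h\to h+1$, I would assume $x(Z_h)$ non-decreasing (strictly for $h\ge1$); as $e$ increases the arc of $E_e$ drops, $Z_h$ moves weakly right along it, and $\dist_2(Z_h,Y_0)+C$ changes accordingly, and one shows the combined effect raises the value that $\Phi_e$ must attain, hence pushes $Z_{h+1}$ strictly to the right. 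Taking $h=\lfloor i/2\rfloor$ gives the observation (and, combined with properties~\ref{prop symmetry}–\ref{prop ellipse}, the monotonicity of $\diffOuter_{\lfloor i/2\rfloor}$ used for the binary search).

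\emph{Main obstacle.} The hard part will be the quantitative comparison hidden in Step~3, which is already present in the base case: lowering the arc of $E_e$ (and moving $Z_h$ right) \emph{decreases} $\dist_2(\cdot,Z_h)$ and $\dist_2(\cdot,Y_{j+1})$ simultaneously, so the two terms of $\Phi_e$ partially cancel, and likewise the change in $\dist_2(Z_h,Y_0)$ pulls the solution the opposite way because $Y_0$ lies to the left. One must therefore show the ``$\Phi_e$-side'' strictly wins. Since $\dist_2(\cdot,Z_h)$, $\dist_2(\cdot,Y_{j+1})$ and $\dist_2(\cdot,Y_0)$ are all $1$-Lipschitz, this amounts to a careful sign analysis of directional derivatives of Euclidean distance functions along the arc, carried out by a case distinction on the relative horizontal positions of $Z_h$, $Y_0$, $Y_{j+1}$ and $Z_{h+1}$ (exploiting that $Z_{h+1}$ lies strictly right of $Z_h$ on a concave arc, and that the downward drift of $E_e$ from Step~1 provides an independent rightward push). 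This is the analogue for the outer vertices of the monotonicity argument underlying the corresponding observation for the inner vertices, and is expected to be more delicate precisely because here the ellipse itself — not only the recursively constructed point — deforms with the parameter.
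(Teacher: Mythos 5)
There is nothing in the paper to compare your attempt against: Observation \ref{outer coordinate observation} is not proven there at all. Like the analogous statement for the inner vertices and the concavity of $\ratio(b)$, it is an empirically observed property of the construction, stated only to justify replacing brute force by binary search (the paper does not even add the remark ``can be shown by induction and monotonicity arguments'' that it offers for the inner-vertex observation). So your proposal has to stand on its own, and judged that way it is a plan rather than a proof. Its solid part is Step~1: the computation $(a^2-e^2)(a^2-b^2)=a^2$, the monotonicity of $a(e)$, and the identity $y_{E_e}(x)^2=(a(e)^2-x^2)/(a(e)^2-b^2)$, hence the strict downward drift of the arc for $|x|<b$, are all correct and easy to verify.

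The genuine gap is exactly the step you label the ``main obstacle.'' Both the base case and the induction step reduce to the claim that, as $e$ grows, the combined effect of (i) the arc of $E_e$ dropping, (ii) $Z_h$ moving weakly right onto a lower arc, and (iii) the resulting change of the target value $\dist_2(Z_h,Y_0)+C$ pushes the solution of $\Phi_e(Z_{h+1})=\dist_2(Z_h,Y_0)+C$ strictly to the right. You do not prove this; you only note that the competing contributions have opposite signs and defer the decision to ``a careful sign analysis'' that is never carried out. But this comparison \emph{is} the content of Observation \ref{outer coordinate observation}, so the proposal establishes nothing beyond locating the difficulty, and it is not a routine verification (the $1$-Lipschitz bounds you invoke cannot by themselves decide which side wins). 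Two secondary points would also need repair: the asserted strict monotonicity of $\Phi_e$ along the arc (``its second summand decreases'') is not obvious, since moving right on the upper-left arc also increases the height; it does hold there because the tangential derivative of $\dist_2(\cdot,Y_{j+1})$ has the sign of $x\bigl(1-\tfrac{1}{a^2-b^2}\bigr)-y_0$ and $a^2-b^2=a^2/b'^2>1$, but that argument has to be supplied, and it becomes delicate once points cross $x=0$. Moreover, your standing hypothesis that $Z_0,\dots,Z_{\lfloor i/2\rfloor}$ all lie on the upper-left arc with $x<0$ need not hold for every $e$ the binary search visits -- Algorithm \ref{Outer Vertices Algorithm} explicitly treats the case where $Z_{\lfloor i/2\rfloor}$ has non-negative $x$-coordinate -- so the induction statement would have to be formulated and proven on that larger range of $e$ as well.
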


\begin{observation} \label{outer observation}
$\diffOuter_{\lfloor \frac{i}{2} \rfloor}$ is monotonically decreasing in $e$ if the vertex $Z_{\lfloor\frac{i}{2}\rfloor}$ constructed by the procedure above has negative $x$-coordinate. 
\end{observation}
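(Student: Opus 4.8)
The plan is to regard everything downstream of the choice of $b$ as a one–parameter family indexed by the focal parameter $e$, to record how the ellipse flattens as $e$ grows, to push monotonicity through the iterative construction of the outer vertices, and finally to differentiate a closed form for $\diffOuter_{\lfloor i/2\rfloor}$ in $e$. First I would fix $b$; then the inner vertices $Y_0=(-f,0),\dots,Y_{j+1}=(f,0)$ and the quantity $c_0:=\dist_2(X_0,Y_0)+\dist_2(Z_0,Y_{j+1})-\dist_2(X_0,Z_0)$ occurring in every $\diffOuter_h$ are constants independent of $e$. The ellipse $\mathcal E(e)$ is centred at the origin, passes through the corners $(\pm b,\pm1)$, and has semi-axes $A(e)\ge B(e)$ with $A(e)^2-B(e)^2=e^2$; eliminating $A$ gives $e^2=\frac{B^2(b^2+1-B^2)}{B^2-1}$, so on the admissible range $B\in(1,\sqrt{b^2+1})$ the semi-axis $B(e)$ is strictly decreasing and $A(e)$ strictly increasing in $e$. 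This is the only ``shape'' input needed.

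Next I would carry out the induction underlying Observation~\ref{outer coordinate observation}, recording along the way the information about the $y$-coordinate of each intermediate vertex that the final step will consume: for $0\le h\le\lfloor i/2\rfloor$ the vertex $Z_h(e)$ varies continuously and piecewise-smoothly, $\partial x_{Z_h}/\partial e>0$, and the $e$-derivative of $y_{Z_h}$ is controlled relative to that of $x_{Z_h}$. The induction step uses that $Z_{h+1}$ is cut out as the correct intersection of $\mathcal E(e)$ with the hyperbola branch $\{P:\dist_2(P,Y_{j+1})-\dist_2(P,Z_h)=c_0-\dist_2(Z_h,Y_0)\}$, together with the monotone dependence of both curves on $e$ and on $Z_h$.

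Finally I would exploit the reflection symmetry at the middle vertex. For $m:=\lfloor i/2\rfloor$ and even $i$, $Z_{m+1}$ is the mirror image of $Z_m$ across the $y$-axis, so $\dist_2(Z_{m+1},Y_{j+1})=\dist_2(Z_m,Y_0)$ and $\dist_2(Z_m,Z_{m+1})=-2x_{Z_m}$, giving
\begin{align*}
\diffOuter_{m}=c_0-2\dist_2(Z_m,Y_0)-2x_{Z_m};
\end{align*}
for odd $i$ the vertex $Z_{m+1}=(0,B(e))$ sits at the top of the ellipse and one gets the analogous formula with the extra term $-\sqrt{f^2+B(e)^2}$. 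Differentiating in $e$, the contribution of $\partial x_{Z_m}/\partial e>0$ comes with the strictly positive coefficient $1+\tfrac{x_{Z_m}+f}{\dist_2(Z_m,Y_0)}$; substituting $y_{Z_m}^2=B^2\bigl(1-x_{Z_m}^2/A^2\bigr)$ and its $e$-derivative turns the remaining terms into an explicit expression in $x_{Z_m},B,B',A,A'$, and the hypothesis $x_{Z_m}<0$ is what makes the cross-term $-\tfrac{B^2x_{Z_m}x'}{A^2}$ nonnegative and keeps $Z_m$ on the arc where this expression stays negative, which yields $\partial\diffOuter_m/\partial e<0$.

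The hard part will be exactly this last sign check: it pits the flattening of $\mathcal E(e)$ (the term $B'y_{Z_m}^2/B<0$, which pulls $y_{Z_m}$, hence $\dist_2(Z_m,Y_0)$, downward) against $Z_m$ sliding toward the $y$-axis and against the positive terms coming from the ellipse constraint, and one must show the latter dominate precisely when $x_{Z_m}<0$. I expect this to reduce, as in the inner-vertex observation, to elementary estimates once $x_{Z_m}<0$ and $y_{Z_m}>0$ are in force, but pinning down the precise inductive invariant on $\partial y_{Z_h}/\partial e$ in the second step—strong enough to close the inequality for $h=m$, yet preserved by the hyperbola–ellipse intersection—is where the real work lies.
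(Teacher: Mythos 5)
You should first be aware that the paper never proves Observation \ref{outer observation}: like Observation \ref{outer coordinate observation} and the structural properties in Observation \ref{property euclidean}, it is stated as an empirical observation, used only to justify the binary search in Algorithm \ref{Outer Vertices Algorithm}, so there is no proof in the paper against which to match your route. Judged on its own terms, your proposal is a plan rather than a proof: the preparatory facts you do establish (the corner constraint $b^2/A^2+1/B^2=1$ with $A^2-B^2=e^2$ forcing $A'(e)>0$, $B'(e)<0$, the constancy of $c_0$ and of the inner vertices in $e$, and the reduction via mirror symmetry to $\diffOuter_m=c_0-2\dist_2(Z_m,Y_0)-2x_{Z_m}$ for even $i$) are correct, but the two steps that actually carry the statement are left open.

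Concretely: (i) you assume $\partial x_{Z_h}/\partial e>0$ for \emph{every} intermediate vertex together with an unspecified control of $\partial y_{Z_h}/\partial e$; the former is already a strengthening of Observation \ref{outer coordinate observation}, which is itself unproven, and the inductive invariant that would survive the hyperbola--ellipse intersection step is never formulated. (ii) The final sign check does not follow from $x_{Z_m}<0$ alone. Differentiating your formula, $\partial_e\diffOuter_m<0$ is equivalent to $x'\bigl((x_{Z_m}+f)+\dist_2(Z_m,Y_0)\bigr)+y_{Z_m}y'>0$; the first summand is nonnegative, but $y_{Z_m}y'=BB'\bigl(1-x_{Z_m}^2/A^2\bigr)-B^2x_{Z_m}x'/A^2+B^2x_{Z_m}^2A'/A^3$ contains the negative flattening term $BB'\bigl(1-x_{Z_m}^2/A^2\bigr)$, and nothing in the proposal bounds it against the other contributions --- this is precisely the comparison you defer, so the claimed conclusion is asserted, not derived. (iii) The odd-$i$ case is not ``analogous with one extra term'': there $Z_{m+1}=(0,B(e))$ itself moves with $e$, so both $\dist_2(Z_m,Z_{m+1})$ and $\dist_2(Z_{m+1},Y_{j+1})$ acquire $B'$-terms and the identity $\dist_2(Z_m,Z_{m+1})=-2x_{Z_m}$ is lost. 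In short, the route is plausible and more explicit than anything in the paper, but the core of the argument --- the inductive control of the intermediate vertices and the final sign estimate --- is missing, so the statement remains unproven by the proposal.
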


Using Observation \ref{outer coordinate observation} and \ref{outer observation} we can speed up the process of finding the correct value of $e$ by using binary search on $e$ (Algorithm \ref{Outer Vertices Algorithm}).

\begin{algorithm}
\caption{Outer Vertices Algorithm}
\label{Outer Vertices Algorithm}
 \hspace*{\algorithmicindent} \textbf{Input:} $i,j,b$, coordinates for the vertices $Y_0,Y_{j+1}$, accuracy $\epsilon$ \\
 \hspace*{\algorithmicindent} \textbf{Output:} Coordinates for the vertices $X_1,\dots, X_{i}, Z_1,\dots, Z_{i}$
\begin{algorithmic}[1]
\State \textbf{Binary search} for $e$ such that the outer vertices are symmetric to the $y$-axis:
\For {\textbf{each} $e$ we evaluate}
\If{$i$ is odd}
\State Set $X_{\frac{i+1}{2}}$ and $Z_{\frac{i+1}{2}}$ to the intersections of $x=0$ with the ellipse through $X_0$ with foci $(-e,0)$ and $(e,0)$
\EndIf
\For{$h$ from 0 to $\lfloor\frac{i}{2}\rfloor-1$}
\State Compute the position of $Z_{h+1}$ as the intersection of the hyperbola $\diffOuter_h=0$ and the ellipse through the outer vertices with foci $(-e,0)$ and $(e,0)$
\State Set the position of $Z_{i+1-(h+1)}, X_{h+1}, X_{i+1-(h+1)}$ by symmetry
\EndFor
\If{$Z_{\lfloor \frac{i}{2}\rfloor}$ has non-negative $x$-coordinate}
\State Decrease $e$ and repeat
\EndIf
\If{$\lvert \diffOuter_{\lfloor\frac{i}{2} \rfloor} \rvert <\epsilon$}
\State \Return the coordinates of $X_1,\dots, X_i,Z_1,\dots, Z_i$
\Else
\State Increase or decrease $e$ depending on the sign of $\diffOuter_{\lfloor\frac{i}{2} \rfloor}$ and repeat
\EndIf
\EndFor 
\end{algorithmic}
\end{algorithm}

\subsubsection{Best Value for $b$}
Let us denote the integrality ratio of the instance constructed by the inner and outer vertices algorithm with $b$ as given parameter by $\ratio(b)$. Note that $\ratio(b)$ is not defined for every $b$: If $b$ is too small or too large, there is no $e$ such that the outer vertices can be constructed satisfying the properties \ref{prop shortest tours} and \ref{prop symmetry}. In this case the outer vertices algorithm fails to find suitable coordinates for the outer vertices. To find the instance with maximal integrality ratio by this construction we need to determine the value of $b$ that maximizes $\ratio(b)$. We make the following observation that helps us to do this efficiently:

\begin{observation}
The function $\ratio(b)$ is a concave function in $b$.
\end{observation}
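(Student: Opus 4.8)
The plan is to turn the statement into a one–variable calculus problem by first putting $\ratio(b)$ into closed form. Fix $i=k$ and $j$. By property~\ref{prop support} the optimal fractional value of $v(b)$ equals $l_{x_{i,j,i}}(v(b))=:C(b)$, and by property~\ref{prop shortest tours} together with property~\ref{prop symmetry} every non-intersecting shortcut of a pseudo-tour in $\mathfrak{T}$ is an optimal tour, so all of them share one length $L_{\mathrm{opt}}(b)$. Writing both $C(b)$ and $L_{\mathrm{opt}}(b)$ as sums of pairwise distances and using the two-fold symmetry, almost all of the occurring distances cancel in the difference, leaving
\[
L_{\mathrm{opt}}(b)-C(b)=B(b)-A(b),\qquad \ratio(b)=1+\frac{B(b)-A(b)}{C(b)},
\]
where $A(b)=\dist_2(X_0,Y_0)=\sqrt{(b-f(b))^2+1}$ and $B(b)=\dist_2(X_0,Y_{j+1})=\sqrt{(b+f(b))^2+1}$, so that $B-A=4bf/(A+B)$ and hence $\ratio(b)-1=\frac{4b\,f(b)}{(A(b)+B(b))\,C(b)}$. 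Here $f(b)$ is the offset parameter returned by Algorithm~\ref{Inner Vertices Algorithm}, which depends on $b$ only, while $C(b)$ depends on $b$ both through $f(b)$ and through the focal parameter $e(b)$ returned by Algorithm~\ref{Outer Vertices Algorithm} (it enters only via the total length $\sum_s\dist_2(X_s,X_{s+1})$ of the outer arcs, using property~\ref{prop ellipse}).

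Second, I would make the auxiliary functions $f(\cdot)$ and $e(\cdot)$ amenable to calculus. The equations $\diffInner_h=0$ and $\diffOuter_h=0$ that pin down all intermediate vertices are jointly smooth in the coordinates, in $f$, and in $e$; by an iterated application of the implicit function theorem — legitimate because the monotonicity facts already recorded (the $x$-coordinate of each $Y_l$ is strictly monotone in $f$; Observations~\ref{outer coordinate observation} and~\ref{outer observation}) furnish non-degenerate derivatives — the maps $b\mapsto f(b)$ and $b\mapsto e(b)$ are $C^2$ on the bounded open interval on which $\ratio(b)$ is defined. Differentiating the defining relations once and twice then gives the signs of $f'$, $e'$ and explicit bounds on $f''$, $e''$; from these one reads off that $A(b)$ is convex and that $C(b)=2\sum_s\dist_2(X_s,X_{s+1})+\dist_2(Y_0,Y_{j+1})+2A(b)+2$ is convex and increasing in $b$.

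Third, one substitutes everything into $\Phi(b):=\ratio(b)-1=\frac{4b\,f(b)}{(A(b)+B(b))\,C(b)}$ and checks $\Phi''(b)\le 0$, which yields $\ratio''\le 0$. This is where the real work lies: the sign of $\Phi''$ does not come from any single term but from a cancellation between the near-convexity of the numerator $4bf(b)$ and that of the denominator $(A+B)C$, so one must push the bounds on $f',f'',e',e''$ through the quotient rule and estimate carefully. I expect exactly this algebraic cancellation, aggravated by the fact that $f$ and $e$ are only available implicitly, to be the main obstacle. A practical alternative for the last step — and the one consistent with the rest of the paper, where $\ratio(b)$ is only ever evaluated numerically to a fixed accuracy — is to observe that $b$ ranges over a bounded closed sub-interval (outside it Algorithm~\ref{Outer Vertices Algorithm} already fails to return outer vertices) and to certify $\Phi''\le 0$ there by interval arithmetic, using the derivative bounds from the second step to control the sampling.
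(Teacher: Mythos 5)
There is nothing to compare against on the paper's side: this statement is one of several unproven empirical \emph{Observations} in Sections~\ref{sec int ratio euclidean} (like Observations~\ref{outer coordinate observation} and~\ref{outer observation}) and is used only to justify a heuristic speed-up of the search over $b$; the paper offers no argument for it at all. Your reduction step, however, is correct and worthwhile: assuming properties~\ref{prop support}--\ref{prop symmetry}, the cost of $x_{i,j,i}$ is $C(b)=2\sum_s\dist_2(X_s,X_{s+1})+\dist_2(Y_0,Y_{j+1})+2A(b)+2$ and the length of any non-intersecting shortcut exceeds it by exactly $B(b)-A(b)$ with $A=\dist_2(X_0,Y_0)$, $B=\dist_2(X_0,Y_{j+1})$, so indeed $\ratio(b)-1=\frac{4b\,f(b)}{(A(b)+B(b))\,C(b)}$. (Note this is the ratio \emph{assuming} property~\ref{prop shortest tours}, i.e.\ the quantity the algorithm actually optimizes, which is the right reading of the Observation.)

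The genuine gap is that the concavity itself is never established: your third step is announced ("check $\Phi''\le 0$") but not carried out, and you yourself flag it as the main obstacle. The intermediate claims you would feed into it are also unsupported. Convexity of $A(b)$ and of $C(b)$ depends on the second derivatives of the implicitly defined $f(b)$ and $e(b)$; the implicit function theorem application rests on non-degeneracy/monotonicity facts that are themselves only unproven Observations in the paper, and no bounds on $f''$, $e''$ are derived. More importantly, even granting smoothness, a sign for $f'$, and convexity of $A$ and $C$, concavity of the quotient $4bf/((A+B)C)$ does not follow from any of these ingredients — quotients of a (near-)convex numerator by a convex increasing denominator are not concave in general, so the claimed cancellation has to be exhibited, not expected. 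The proposed fallback does not close the gap either: interval arithmetic can only certify the inequality for finitely many pairs $(i,j)$, requires rigorous a priori enclosures of the implicitly defined $f(b)$, $e(b)$ and their derivatives on the whole feasibility interval (including near its endpoints, where the outer-vertex construction degenerates), and in any case would put the statement on the same empirical footing it already has in the paper. As it stands, the proposal is a correct closed-form reduction plus a plan, not a proof.
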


Therefore, we can efficiently minimize a concave function to find the $b$ maximizing $\ratio(b)$ instead of using brute force. Given a constructed instance with a fixed $b$ the properties \ref{prop support} and \ref{prop shortest tours} allow us to speed up the computation of the integrality ratio: Instead of computing an optimal fractional tour and an optimal tour we can just compute the cost of $x_{i,j,i}$ and the length of any non-intersecting shortcut of a pseudo-tour in $\mathfrak{T}$. All in all, the above considerations result in the following algorithm we call the ellipse construction algorithm:

\begin{algorithm}[H]
\caption{\textsc{Euclidean TSP} Ellipse Construction Algorithm}
 \hspace*{\algorithmicindent} \textbf{Input:} $i,j$, accuracy $\epsilon$ \\
 \hspace*{\algorithmicindent} \textbf{Output:} \textsc{Euclidean TSP} instance
\begin{algorithmic}[1]
\State Optimize over a concave function to find $b$ that maximizes $\ratio(b)$:
\For{\textbf{each} $b$ we evaluate}
\State Use the inner vertices algorithm to compute the inner vertices
\State Use the outer vertices algorithm to compute the outer vertices
\State Compute the length of any non-intersecting shortcut of a pseudo-tour in $\mathfrak{T}$
\State Compute the cost of the fractional tour $x_{i,j,i}$
\State Divide the two values to get the integrality ratio assuming properties \ref{prop support} and \ref{prop shortest tours}.
\EndFor
\end{algorithmic}
\end{algorithm}

\subsection{Results of the Ellipse Construction Algorithm}
In this subsection we describe the instances found by the ellipse construction algorithm. For $\epsilon=10^{-9}$ and every $6\leq n\leq 199$, $x_{i,j,i}$ with $i+j+i+6=n$ we executed the algorithm and took the best result for every $n$. The actual integrality ratio of the resulting instances have been computed using the \texttt{Concorde} TSP solver for $6\leq n \leq 109$ vertices. The computation time was too high for the remaining instances, see Section \ref{sec runtime concorde} for more details on this phenomenon. Up to this point we could verify that property \ref{prop shortest tours} of Observation \ref{property euclidean} holds, i.e.\ the non-intersecting shortcuts of the pseudo-tours in $\mathfrak{T}$ are optimal tours. For higher number of vertices the integrality ratio was computed assuming property \ref{prop shortest tours}. Some of the instances the ellipse construction algorithm generated are shown in Figure \ref{results euclidean}.
\begin{figure}[!htb]
\begin{minipage}[t]{0.32\textwidth}
	\centering
	\includegraphics[trim=100 0 0 0,scale=0.4]{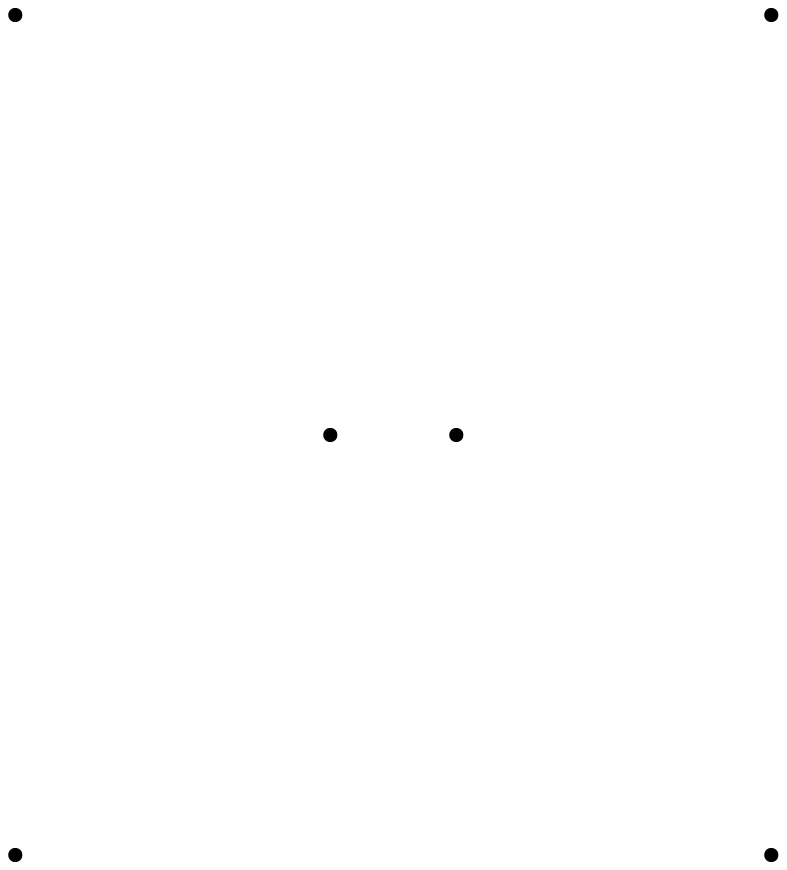}
	\caption*{$n=6, i=0, j=0$,\\ ratio $\approx 1.0238$}
	\end{minipage}
	\hfill%
	\begin{minipage}[t]{0.32\textwidth}
	\centering	
	\includegraphics[trim=100 0 0 0,scale=0.4]{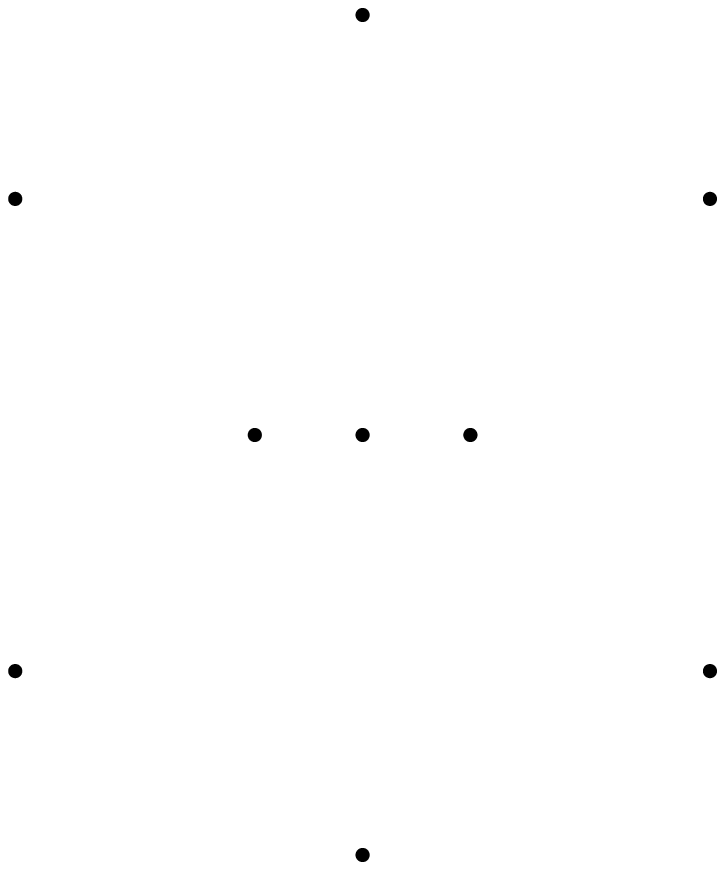}
	\caption*{$n=9, i=1, j=1$,\\ ratio $\approx 1.060$}
	\end{minipage}
  \hfill%
	\begin{minipage}[t]{0.32\textwidth}
	\centering	
	\includegraphics[trim=100 0 0 0,scale=0.4]{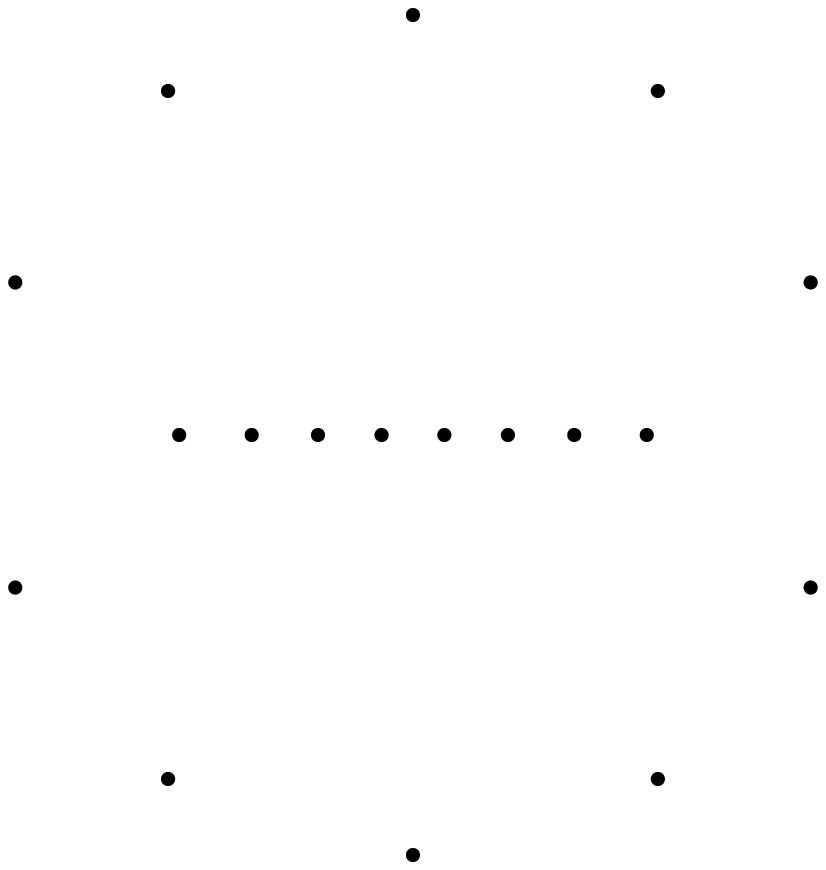}
	\caption*{$n=18, i=3, j=6$,\\ ratio $\approx 1.1319$}
	\end{minipage}
\hfill%
\begin{minipage}[t]{0.31\textwidth}
	\centering
	\includegraphics[trim=100 0 0 0,scale=0.4]{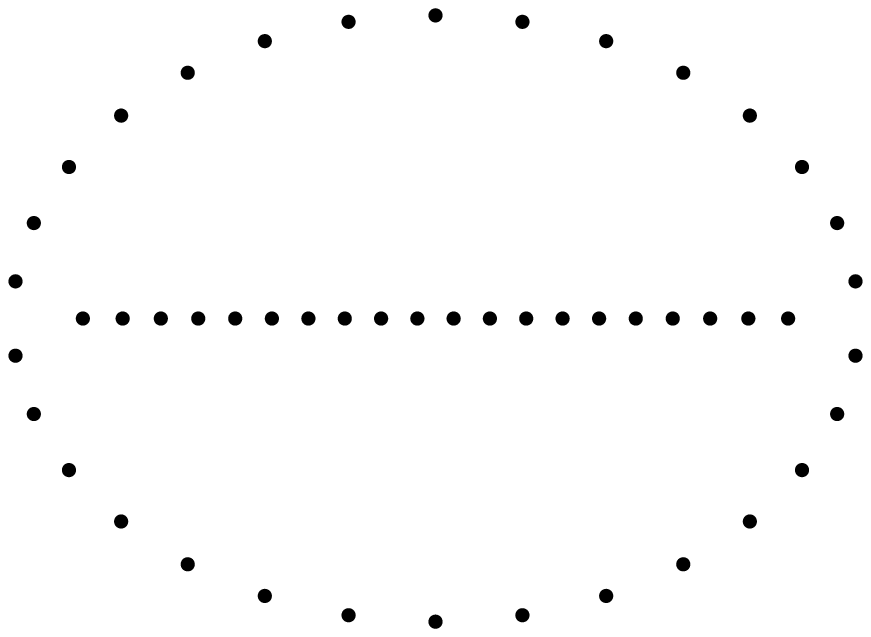}
	\caption*{$n=50, i=13, j=18$,\\ ratio $\approx 1.2263$}
	\end{minipage}
	\hfill%
	\begin{minipage}[t]{0.31\textwidth}
	\centering	
	\includegraphics[trim=100 0 0 0,scale=0.4]{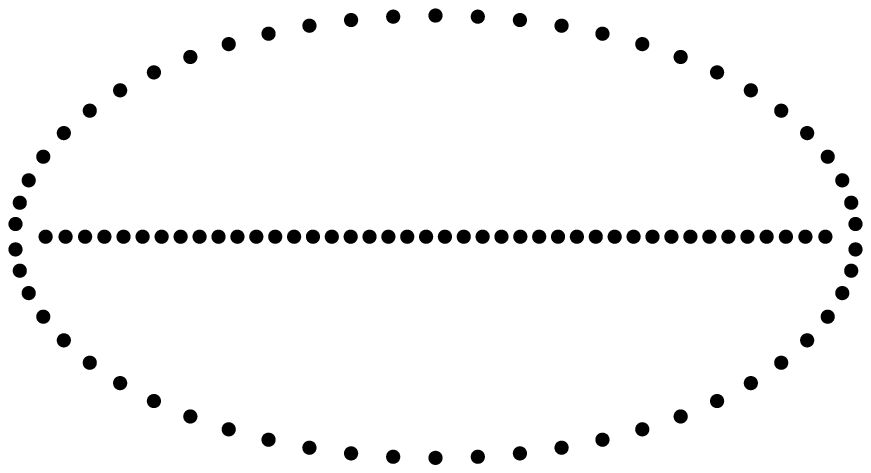}
	\caption*{$n=100, i=27, j=40$,\\ ratio $\approx 1.2695$}
	\end{minipage}
  \hfill%
	\begin{minipage}[t]{0.31\textwidth}
	\centering	
	\includegraphics[trim=100 0 0 0,scale=0.4]{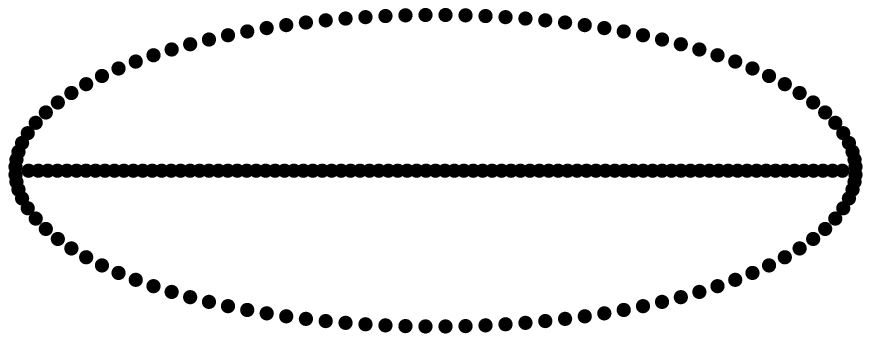}
	\caption*{$n=199, i=54, j=85$,\\ ratio $\approx 1.2970$}
	\end{minipage}
	\caption{Instances with $n$ vertices constructed by the ellipse construction algorithm.} \label{results euclidean}
\end{figure}

As in the rectilinear case we also see that vertices are not distributed equally. For large $n$ there are more inner than outer vertices. This unequal distribution occurs first at $n=18$ where we have 5+5 outer and 8 inner vertices. Moreover, for small $n$ the ellipse is nearly a circle and becomes flatter for increasing $n$.

\section{Comparing Integrality Ratio} \label{sec compare}
In this section we compare the lower bounds on the integrality ratio we found in the previous sections for the TSP variants to each other and to the instances from the literature. 

In the previous sections we showed lower bounds on the integrality ratio of $1+\frac{1}{3+2\left(\frac{5}{j+1}+\frac{1}{i+1}+\frac{1}{k+1}\right)}$ and $1+\frac{1}{3+2\left(\frac{1}{i+1}+\frac{1}{j+1}+\frac{1}{k+1}\right)}$ for the \textsc{Rectilinear} and \textsc{Multidimensional Rectilinear / Metric TSP}, respectively. As the deviation converges to 0 for $n\to \infty$, we discard in this subsection for simplicity the integrality constraints of $i,j$ and $k$. In this case the bounds for the \textsc{Rectilinear TSP} and \textsc{Multidimensional Rectilinear/ Metric TSP} are $1+\frac{1}{3+\frac{2(\sqrt{5}+2)^2}{n-3}}$ and $1+\frac{1}{3+\frac{18}{n-3}}$, respectively. As we can see, both values converge to $\frac{4}{3}$ as $n\to \infty$. By a straightforward calculation, we get
\begin{align*}
1+\frac{1}{3+\frac{2(\sqrt{5}+2)^2}{n-3}}&=1+\frac{1+\frac{2}{3}\cdot\frac{(\sqrt{5}+2)^2}{n-3}}{3+\frac{2(\sqrt{5}+2)^2}{n-3}}-\frac{\frac{2}{3}\cdot\frac{(\sqrt{5}+2)^2}{n-3}}{3+\frac{2(\sqrt{5}+2)^2}{n-3}}=\frac{4}{3}-\frac{\frac{2}{3}\frac{(\sqrt{5}+2)^2}{n-3}}{3+\frac{2(\sqrt{5}+2)^2}{n-3}}\\
&=\frac{4}{3}-\frac{\frac{2}{3}(\sqrt{5}+2)^2}{3(n-3)+2(\sqrt{5}+2)^2}.
\end{align*}
Similarly, we get $1+\frac{1}{3+\frac{18}{n-3}}=\frac{4}{3}-\frac{6}{3(n-3)+18}$. Hence, there are constants $c_1,c_2$ such that the lower bounds for the \textsc{Rectilinear} and \textsc{Multdimensonal Rectilinear / Metric TSP} are $\frac{4}{3}-\frac{\frac{2}{9}(\sqrt{5}+2)^2}{n+c_1}\approx \frac{4}{3}-\frac{3.988}{n+c_1}$ and $\frac{4}{3}-\frac{2}{n+c_2}$, respectively. Since the additive constants $c_1,c_2$ are neglectable as $n\to \infty$, we see that the latter converges to $\frac{4}{3}$ roughly twice as fast as the former.

The \textsc{Euclidean TSP} instances $G(n',\sqrt{n'-1})$ from \cite{Hougardy} have $n:=3n'$ vertices and an integrality ratio of $\frac{4n'-4+2\sqrt{n'-1}}{3n'-4+3\sqrt{n'-1}+\sqrt{n'}}\in\frac{4}{3}-\Theta(n^{-\frac{1}{2}})$ which converges asymptotically slower than that of $I_n^2$ and $I_n^3$. The tetrahedron instances for \textsc{Euclidean TSP} in \cite{Hougardy2020} have an integrality ratio between $\frac{4n+\frac{4n}{\sqrt{3}}-69}{3n+\frac{3n}{\sqrt{3}}}$ and $\frac{4n+\frac{4n}{\sqrt{3}}-17}{3n+\frac{3n}{\sqrt{3}}-33}$. These bounds are too inaccurate to directly compare the rate of convergence. Figure \ref{comp int Ratio} shows the integrality ratio which was explicitly computed in \cite{TnmIntegralityRatio}. Note that for each fixed number of vertices the tetrahedron instances depend on two parameters and they were chosen in \cite{TnmIntegralityRatio} to maximize the runtime of \texttt{Concorde} instead of the integrality ratio. 

Unfortunately, we do not have a formula for the integrality ratio of the Euclidean instances found by the ellipse construction algorithm. So we cannot directly compare their rate of convergence but only the explicitly computed integrality ratios. Figure \ref{comp int Ratio} shows the integrality ratio of the various instances described in the previous sections. Note that for the constructed Euclidean instances the integrality ratio was computed for $6\leq n\leq 109$ vertices by \texttt{Concorde}. For instances with $110 \leq n \leq 199$ vertices the runtime of \texttt{Concorde} was too high to verify the integrality ratio, see Section \ref{sec runtime concorde} for more details on this phenomenon. For these cases we rely on property \ref{prop shortest tours} of Observation \ref{property euclidean} and assume the non-intersecting shortcuts of the pseudo-tours in $\mathfrak{T}$ are optimal tours. This property holds for the instances with $6\leq n\leq 109$ vertices. As we can see from the plot, the integrality ratios of the constructed instances are higher than these of the instances $G(n',\sqrt{n'-1})$ and the tetrahedron instances. For this data they converge roughly two and four times slower than the rectilinear and metric instances, respectively.
\begin{figure}[!htb]
\centering
 \includegraphics[scale=0.7]{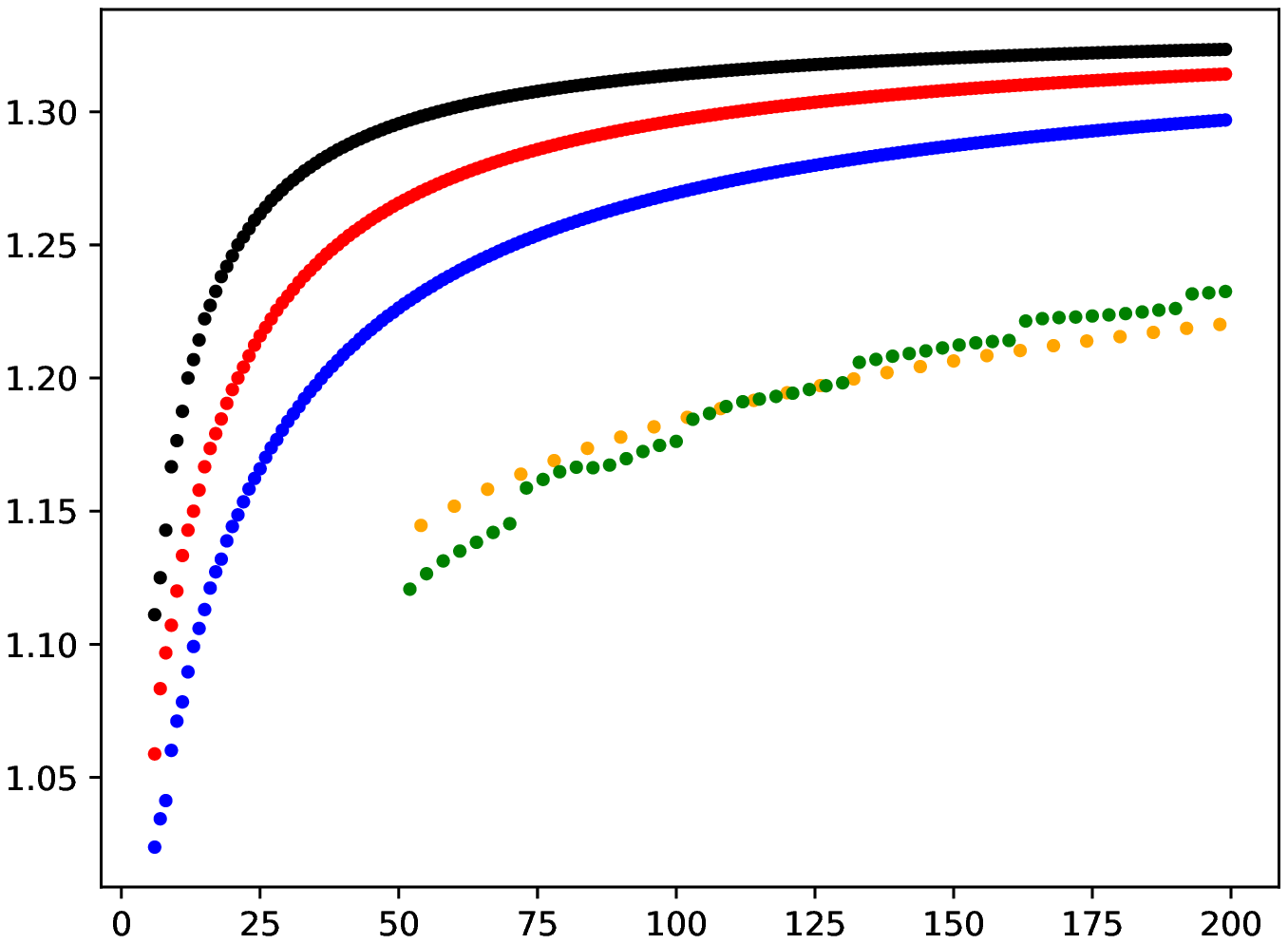}
  \caption{The integrality ratio for the TSP variants. The black, red and blue dots correspond to the lower bounds on the integrality ratio of the \textsc{Multidimensional Rectilinear TSP}/\textsc{Metric TSP} from \cite{benoit2008finding}, \textsc{Rectilinear TSP} and \textsc{Euclidean TSP}, respectively. The integrality ratio of the \textsc{Euclidean TSP} instances with $110 \leq n\leq 199$ vertices are computed assuming property \ref{prop shortest tours} of Observation \ref{property euclidean}. The orange and green dots correspond to the integrality ratios of the instances $G(n',\sqrt{n'-1})$ from \cite{Hougardy} and the tetrahedron instances from \cite{Hougardy2020}, respectively.}
  \label{comp int Ratio}
\end{figure}

\section{Hard to Solve Instances} \label{sec runtime concorde}
In this section we investigate the runtime of \texttt{Concorde} for the instances $I^3_{i,i-1,i+1}, I^3_{i,i-1,i+2}$ and $I^3_{i,i-1,i+3}$. The runtime of \texttt{Concorde} for solving these instances is much higher than for the known tetrahedron instances from \cite{Hougardy2020}.

First, we observe that symmetry seems to affect the computational results a lot. The instance $I^3_{10,10,10}$ can be solved by \texttt{Concorde} in less than a second. A small modification of the distribution of vertices on the three lines increases the runtime significantly: $I^3_{10,9,11}$ needs more than 1000 seconds to solve on the same hardware.

In the following we tested the runtime of the following instances: For $n=3(i+2)$, $n=3(i+2)+1$ and $n=3(i+2)+2$ we solved the instances $I^3_{i,i-1,i+1}, I^3_{i,i-1,i+2}$ and $I^3_{i,i-1,i+3}$ with seed 1 by \texttt{Concorde}, respectively. The distances were multiplied by 1000 and rounded down to the nearest integer. \texttt{Corcorde-03.12.19} was compiled with \texttt{gcc 4.8.5} and using \texttt{CPlex 12.04} as LP solver. We used a single core of an \texttt{AMD EPYC 7601} processor for every run. The resulting runtimes are shown in Figure \ref{runtime}. 

\begin{figure}[!htb]
\centering
 \includegraphics[scale=0.7]{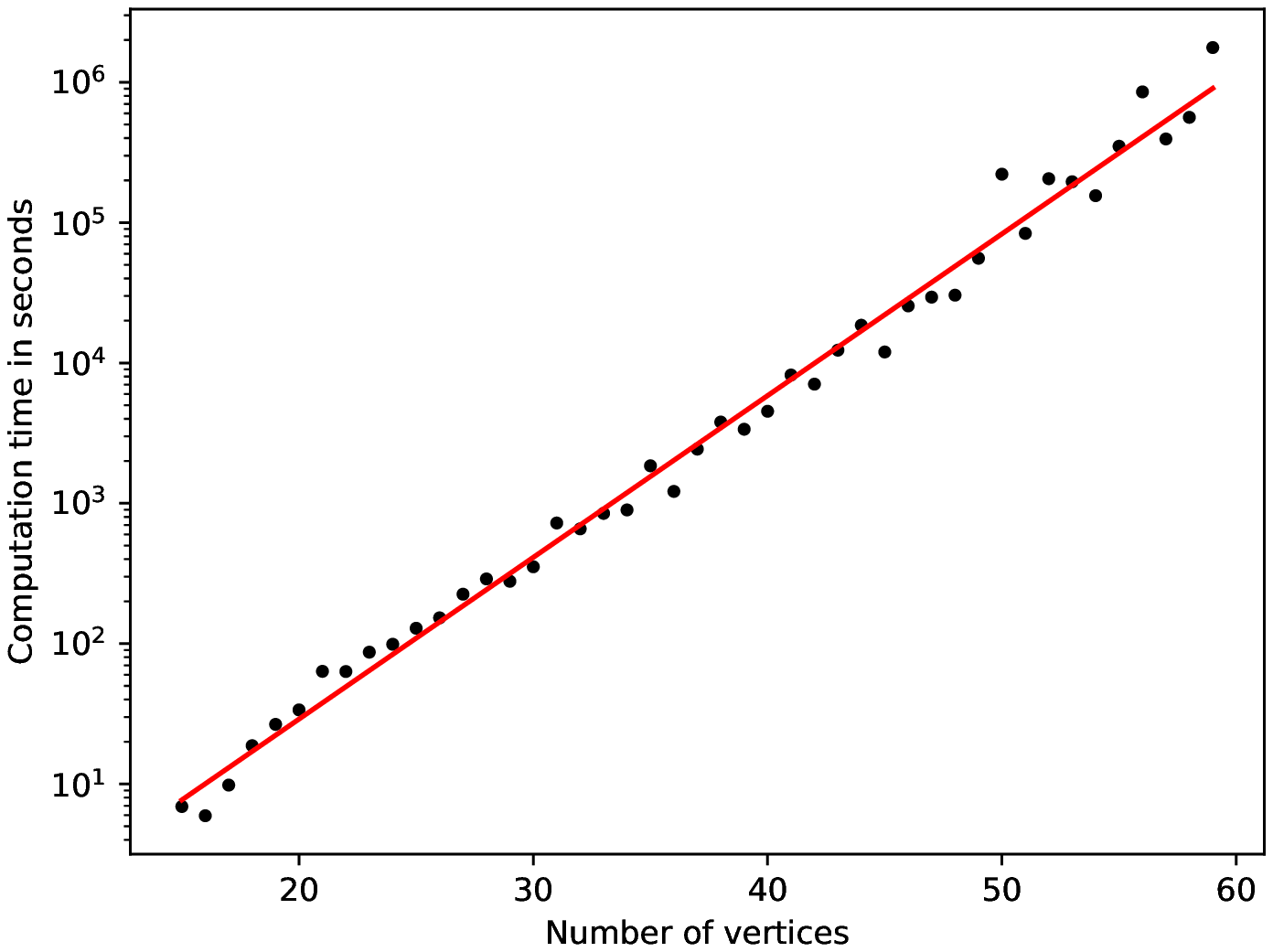}
  \caption{The \texttt{Concorde} runtime for the instances $I^3_{i,i-1,i+1}, I^3_{i,i-1,i+2}$ and $I^3_{i,i-1,i+3}$ in seconds. The red function is the least-square fit regression of the logarithmic runtime.}
  \label{runtime}
\end{figure}

Using least-square fit regression of the logarithmic runtime we get the following exponential regression for the runtime 
\begin{align*}
0.144\cdot 1.304^n.
\end{align*}

Since $1.304^3>2.2$, the estimated computation time more than doubles whenever $n$ increases by 3. Based on this estimate an instance with 100 vertices would need more than 1500 years. This runtime is much higher compared to the tetrahedron instances from \cite{Hougardy2020}. For 52 vertices the runtime for the tetrahedron instances is about 10 seconds compared to over 2 days for the tested instances. It should be noted that the calculations in \cite{Hougardy2020} were performed by a different processor which is about 20\% faster. Nevertheless, comparing with the regression function $0.480\cdot 1.0724^n$ for the tetrahedron instances we see that the exponential basis is much greater. This implies that the growth of runtime is also much faster. 

The instances $I^2_n$ seem to have similar runtimes as the instances tested above. The runtime was a bit higher if the numbers of vertices in the outer two lines are not equal. It was not obvious which distribution of vertices maximizes the runtime.

In contrast the metric instances described in \cite{benoit2008finding} with $n=36$ can be solved by \texttt{Concorde} in a few seconds compared to more than half an hour for $I^3_{10,9,11}$ with the same number of vertices. Although the runtime seems to be exponential also in this case, modifying the number of vertices on the three lines still does not give similar high runtimes. The same holds for the Euclidean instances constructed by the ellipse construction algorithm.

Note that the results of Section \ref{sec int ratio rectilinear multi} can be easily extended to determine the structure of the optimal tours of the instances $I^3_{i,i-1,i+1}, I^3_{i,i-1,i+2}$ and $I^3_{i,i-1,i+3}$. Using this knowledge the optimal tour can be computed in linear time.

\section*{Acknowledgements}
This work was supported by the Bonn International Graduate School.

\bibliographystyle{plain}
\bibliography{lower_bound_subtour_lp.bbl}
\end{document}